\documentclass[aps,pra,reprint,superscriptaddress,floatfix,tightenlines,nofootinbib,nobibnotes]{revtex4-2}

\usepackage{graphicx}
\usepackage{xurl}
\usepackage{booktabs}
\usepackage{comment}
\usepackage{dsfont}
\usepackage{physics}
\usepackage{nicefrac}
\usepackage{mathtools}
\usepackage{amsmath, amssymb, amsfonts, amsthm, amscd}
\usepackage{thmtools, thm-restate}
\usepackage{accents}
\usepackage{tikz}
\usepackage{bm}
\usetikzlibrary{quantikz2}

\usepackage[colorlinks,linkcolor=blue,urlcolor=blue, citecolor=blue]{hyperref}

\theoremstyle{plain}
\newtheorem{theorem}{Theorem}
\newtheorem{lemma}[theorem]{Lemma}
\newtheorem{proposition}[theorem]{Proposition}
\newtheorem{corollary}[theorem]{Corollary}

\theoremstyle{definition}

\theoremstyle{remark}
\newtheorem{remark}[theorem]{Remark}

\DeclareMathOperator{\im}{im}
\DeclareMathOperator{\wt}{wt}
\DeclareMathOperator{\sys}{Sys}
\DeclareMathOperator{\vol}{Vol}
\DeclareMathOperator{\inj}{inj}

\allowdisplaybreaks

\begin{document}

\title{Hyperbolic color codes with constant rate and polynomial distance}

\author{Shun Hasegawa}
\email{shun.hasegawa.qc@gmail.com}
\affiliation{
Department of Computer Science,
Graduate School of Information Science and Technology,
The University of Tokyo, 7–3–1 Hongo, Bunkyo-ku, Tokyo, 113–8656, Japan
}

\author{Hayata Yamasaki}
\email{hayata.yamasaki@gmail.com}
\affiliation{
Department of Computer Science,
Graduate School of Information Science and Technology,
The University of Tokyo, 7–3–1 Hongo, Bunkyo-ku, Tokyo, 113–8656, Japan
}

\begin{abstract}
Recent advances in quantum hardware relax the strict geometric-locality constraints traditionally imposed on quantum error-correcting codes, motivating interest in high-rate quantum low-density parity-check (qLDPC) codes. At the same time, color codes provide a particularly rich setting for fault-tolerant quantum computation, underlying protocols such as single-shot error correction and self-correcting quantum computation. Hyperbolic color codes provide a class of high-rate qLDPC codes that also retain the structural features of color codes relevant to fault-tolerant quantum computation. However, previous constructions of hyperbolic color codes have achieved at most logarithmic code distance. In this work, we construct hyperbolic color codes with both constant rate and polynomial distance by building on arithmetic hyperbolic manifolds that support polynomial-distance hyperbolic toric codes. Our construction applies in arbitrary dimension \(D\geq 4\). In even dimensions, the resulting type-\(D/2\) color codes have constant encoding rate and polynomial distance, while in other cases the number of logical qubits and the code distance both exhibit polynomial scaling. We further derive explicit exponents for polynomial lower bounds as functions of the dimension and code type. These results establish a family of hyperbolic color codes simultaneously achieving constant rate and polynomial distance and providing a testbed to explore fault-tolerant quantum computation protocols that combine high-rate quantum codes with the structural advantages of color codes.
\end{abstract}

\maketitle

\tableofcontents

\section{Introduction}

Reliable quantum computation requires quantum error correction, typically at the cost of substantial physical-qubit overhead~\cite{Shor:1995hbe}. Reducing this overhead has motivated growing interest in high-rate quantum low-density parity-check (qLDPC) codes~\cite{tillichQuantumLDPCCodes2009a, Breuckmann:2020jpn, panteleevAsymptoticallyGoodQuantum2022, leverrierQuantumTannerCodes2022a, dinurGoodQuantumLDPC2023, Breuckmann:2021yvk}. A nonvanishing encoding rate is necessary to achieve constant physical-qubit overhead~\cite{gottesmanFaulttolerantQuantumComputation2014, yamasakiTimeEfficientConstantSpaceOverheadFaultTolerant2024, yoshidaConcatenateCodesQubits2025, tamiyaFaulttolerantQuantumComputation2026}, while polynomial distance is required for exponential error suppression in a power of the block size when the logical error rate decreases exponentially with distance.

Recent experimental advances have demonstrated that the requirement for strictly local connectivity on a two-dimensional lattice can be relaxed, for example through dynamically reconfigurable neutral-atom arrays~\cite{bluvsteinLogicalQuantumProcessor2024} and ion-shuttling architectures~\cite{pinoDemonstrationTrappedionQuantum2021}. These developments broaden the range of quantum codes that may be practically relevant and motivate the exploration of code families with distinct structural and operational advantages.

\begin{figure*}[t]
    \centering
    \includegraphics[width=\textwidth]{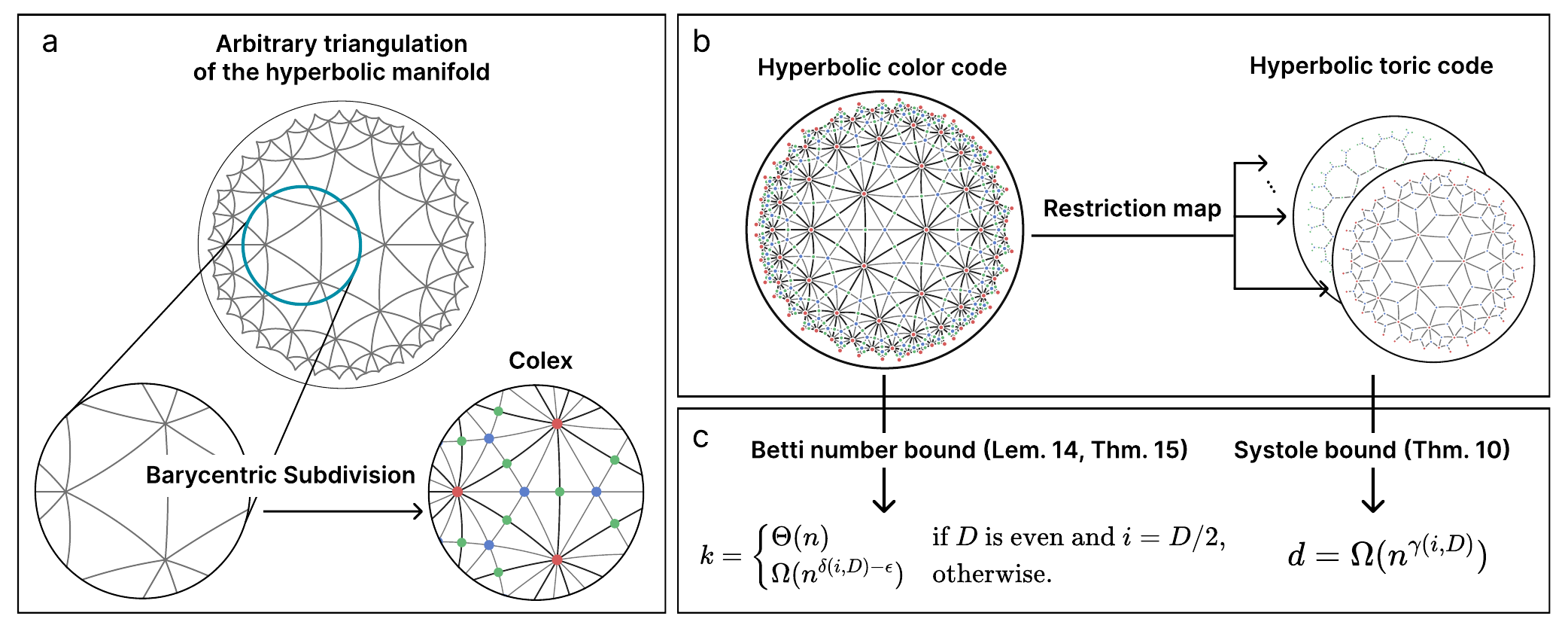}
    \caption{Overview of the construction and main results of this work.
(a) Starting from an arbitrary triangulation of a closed hyperbolic manifold, we obtain a colex by barycentric subdivision, which ensures the required vertex colorability for defining a color code (Sec.~\ref{sec:construction-of-hyperbolic-color-codes}).
(b) To derive a lower bound on the distance of the resulting hyperbolic color code, we transform the color code into a collection of associated toric codes using restriction maps and show that the color-code distance is bounded from below by the distances of these toric codes (Sec.~\ref{sec:code-distances-of-color-and-toric-codes}).
(c) We explicitly determine the scaling of the number of logical qubits \(k\) and the code distance \(d\) with the number of physical qubits \(n\), where \(D\) denotes the dimension of the underlying hyperbolic space and \(i\) the type index of the color code. The Betti-number bounds determine the scaling of \(k\), while the systolic bounds, together with the result in (b), yield a polynomial lower bound on \(d\). The exponents \(\delta(i,D)\) and \(\gamma(i,D)\) in these polynomial lower bounds depend only on the code type \(i\) and the dimension \(D\) (Sec.~\ref{sec:explicit-lower-bounds}).}
    \label{fig:placeholder}
\end{figure*}

Topological quantum codes are attractive for fault-tolerant quantum computation because they combine geometrically local stabilizer measurements with mathematical structures that facilitate the analysis of code properties and logical operations. The toric code is a canonical example, defined from the chain-complex structure associated with a cellulation of a manifold~\cite{Kitaev:1997wr, kitaevQuantumComputationsAlgorithms1997}. Another important family consists of color codes~\cite{Bombin:2006sj, Bombin:2006rk}. Compared with toric codes, the color-code family includes constructions supporting a richer set of transversal logical gates~\cite{Bombin:2015tpp, Kubica:2014jue} and has served as a useful testbed for fault-tolerant protocols, including self-correcting quantum computation~\cite{Bombin:2012cnq} and single-shot error correction~\cite{bombinSingleShotFaultTolerantQuantum2015}.

Hyperbolic manifolds provide geometries on which topological codes can simultaneously achieve a constant encoding rate and a code distance that grows with the number of physical qubits~\cite{zemorCayleyGraphsSurface2009, kimQuantumCodesHurwitz2007, Delfosse:2013smk}, a combination that is impossible for codes built on Euclidean lattices~\cite{Bravyi:2009ert}. In four dimension, toric codes defined on suitable hyperbolic manifolds achieve constant rate together with polynomial distance~\cite{Guth:2014zxf}, and explicit constructions of such hyperbolic toric codes, including concrete families of manifolds and associated decoders, are known~\cite{Londe:2019jie, Breuckmann:2020jyn}.

For color codes, comparable constructions have been reported in two and three dimensions. In two dimensions, hyperbolic color codes with constant rate and distance scaling logarithmically in the number of physical qubits are known~\cite{Delfosse:2013smk}. A different family of two-dimensional hyperbolic color codes achieves an encoding rate approaching one as the number of physical qubits tends to infinity, but at the cost of a code distance fixed at four~\cite{Soares:2018ynz}. In three dimensions, a family of color codes on Torelli mapping-torus manifolds achieves constant rate, but the scaling of its code distance with \(n\) has not been established~\cite{Zhu:2023xfg}. However, none of these constructions achieves constant rate with polynomial distance, and hyperbolic color codes with these properties in four or higher dimensions have not been systematically explored.

In this work, we construct hyperbolic color codes in every dimension \(D\geq4\). In even \(D\), the type-\(D/2\) color codes achieve a constant encoding rate, while in odd dimensions, the number of logical qubits still grows polynomially in the number of physical qubits \(n\). In both cases, the code distance of the type-\(i\) color code grows polynomially in \(n\), with an exponent depending on \(i\). These results rest on three ingredients: a construction of color codes on an arbitrary triangulated manifold via barycentric subdivision, a decomposition into toric codes that yields a lower bound on the color-code distance, and an explicit determination of the resulting exponents for polynomial lower bounds for every dimension \(D\) and type \(i\).

\begin{enumerate}
    \item First, we construct hyperbolic color codes on the family of arithmetic hyperbolic manifolds \(M_N\) introduced in Ref.~\cite{Guth:2014zxf}, whose associated toric codes are known to achieve constant rate and polynomial distance. To define a color code on a triangulation of \(M_N\), one needs to construct a colex~\cite{Bombin:2006rk, Kubica:2019qji}, for which the general cell-inflation construction of Ref.~\cite{Bombin:2006rk} provides one possible approach. However, when applied to a triangulation, this construction requires passing to the dual cell complex, for which it is not immediate that the recursive colex definition of Ref.~\cite{Kubica:2019qji} is satisfied. Therefore, we instead use barycentric subdivision, which can be viewed as the dual-lattice counterpart of the cell-inflation construction~\cite[Appendix B]{Bombin:2006rk}, and prove that the barycentric subdivision of any triangulation of a closed \(D\)-manifold yields a colex (Proposition~\ref{prop:barycentric_subdivision_derive_colex} in Sec.~\ref{sec:barycentric_subdivision}).
    \item Second, we establish a polynomial lower bound on the code distance. Directly bounding the distance of a color code is difficult, so we instead relate it to the distance of an associated toric code using the restriction map of Refs.~\cite{Kubica:2015mta,Kubica:2019qji}. Although this color-to-toric transformation is known, an explicit proof that it yields a lower bound on the color-code distance has not been provided. Therefore we rigorously prove that the distance of the color code is bounded below by the distances of the toric codes obtained through the restriction maps (Secs.~\ref{sec:color-to-toric-morphism} and~\ref{sec:color-code-lower-bound}). A second difficulty is that the resulting toric code is defined on a restricted lattice rather than directly on the triangulation of \(M_N\), and hence the distance bound of Ref.~\cite{Guth:2014zxf} does not apply directly. To address this issue, we extend the argument of Ref.~\cite{Guth:2014zxf}, using the fact that the relevant nontrivial cycles of the restricted lattice are inherited from the original color-code complex, and derive a systolic lower bound that grows polynomially with the number of physical qubits (Sec.~\ref{sec:distance_lower_bound_of_hyperbolic_toric_code}). Combining these two results establishes polynomial distance for the hyperbolic color codes.
    \item Third, specializing to the manifolds \(M_N\), we derive explicit scaling bounds on both the number of logical qubits and the code distance (Sec.~\ref{sec:explicit-lower-bounds}). To bound the number of logical qubits, we need to estimate the corresponding Betti numbers. While Ref.~\cite{Guth:2014zxf} established linear growth of the second Betti number in four dimensions, its scaling in other dimensions was not addressed, and hence the scaling of the number of logical qubits remained unknown for general dimension \(D\) and type \(i\). Applying the result of Ref.~\cite{xueBettiNumbersHyperbolic1992}, we show that, for \(D\geq 4\) and \(2\leq i\leq D-2\), the \(i\)-th Betti number grows polynomially with the number of physical qubits. Since the numbers of logical qubits of both toric and color codes are determined by the corresponding Betti number, this establishes polynomial growth of the number of logical qubits with \(n\) for both code families in general dimensions. Moreover, for even \(D\), we independently apply the Lück approximation theorem of Ref.~\cite{abertGrowth$L^2$invariantsSequences2017a} to show that the \(D/2\)-th Betti number grows linearly (Appendix~\ref{app:proof-of-constant-rate}). This implies that the type-\(D/2\) hyperbolic color codes have constant rate (Theorem~\ref{thm:number-of-logical-qubits-color-codes}), and likewise that the corresponding type-\(D/2\) hyperbolic toric codes have constant rate (Theorem~\ref{thm:num-of-logical-toric-code}). To bound the code distance, we similarly need explicit estimates of the corresponding systoles. Previous work established polynomial systolic growth, but Ref.~\cite{Guth:2014zxf} did not provide explicit exponents for polynomial lower bounds, while Ref.~\cite{Londe:2019jie} gave an explicit analysis only for a specific four-dimensional construction with a fixed tessellation. To obtain explicit bounds in a more general setting, we apply the systolic bound of Ref.~\cite{kimSystoleLocallySymmetric2020}, extend it from prime to arbitrary ideals (Appendix~\ref{app:composite-ideal-systole}), and explicitly determine the distance exponent for every \(D\geq4\) and \(2\leq i\leq D-2\).
\end{enumerate}

Overall, we prove the following.

\begin{theorem}[Constant-rate polynomial-distance color codes in even dimensions]
    There exists a sequence of triangulated \(2m\)-dimensional hyperbolic manifolds such that the associated type-\(m\) color codes have parameters
    \begin{align}
        [[n,\ \Theta(n),\ \Omega\bigl(n^{\frac{m-1}{m(2m+1)}}\bigr)]].
    \end{align}
\end{theorem}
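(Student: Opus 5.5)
The plan is to assemble the three ingredients sketched in the introduction and then track exponents. Take $D=2m$ and the arithmetic hyperbolic manifolds $M_N$ of Ref.~\cite{Guth:2014zxf}, each with a triangulation of bounded local geometry: bounded vertex degree, and simplices of diameter bounded above and below by constants independent of $N$. Applying Proposition~\ref{prop:barycentric_subdivision_derive_colex} to the barycentric subdivision gives a colex, and hence a type-$m$ color code. The subdivision multiplies the number of cells by at most a constant $c(D)$, so $n=\Theta(\vol(M_N))$.

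\textbf{Rate.} Use the fact, recalled in the introduction, that the number of logical qubits of the type-$i$ color code equals the $i$-th $\mathbb{Z}_2$ Betti number of the manifold, up to a bounded number of copies. It then suffices to show $b_m(M_N;\mathbb{Z}_2)\ge c\,\vol(M_N)$. I would apply Lück approximation in the form of Ref.~\cite{abertGrowth$L^2$invariantsSequences2017a}. The $M_N$ are congruence covers, so they Benjamini--Schramm converge to $\mathbb{H}^{2m}$. The $m$-th $L^2$-Betti number of $\mathbb{H}^{2m}$ is nonzero and the others vanish, so $b_m(M_N;\mathbb{Q})/\vol(M_N)\to\beta^{(2)}_m>0$. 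Universal coefficients give $b_m(\mathbb{Z}_2)\ge b_m(\mathbb{Q})$, and therefore $k=\Theta(n)$. The matching upper bound $k\le n$ is trivial.

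\textbf{Distance.} The argument has two steps.

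\begin{enumerate}
\item Invoke the color-to-toric lower bound of Secs.~\ref{sec:color-to-toric-morphism}--\ref{sec:color-code-lower-bound}. The color-code distance is at least the minimum, over the restricted lattices, of the toric-code distances. Each such distance is the minimum of an $m$-systole and an $m$-cosystole on the restricted lattice, and for $i=m=D/2$ the two are symmetric under Poincar\'e duality.
\item Lower-bound these combinatorial systoles. Following the Guth--Lubotzky argument as extended in Sec.~\ref{sec:distance_lower_bound_of_hyperbolic_toric_code}, a nontrivial $m$-cycle of combinatorial size $s$ can be deformed into a Lipschitz cycle of volume $O(s)$. Within the injectivity radius $r=\inj(M_N)$, a cycle meeting a small region is homologically trivial unless its volume is at least the volume of a totally geodesic $m$-ball of radius $r$. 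That volume is $\gtrsim e^{(m-1)r}$, which gives $s\gtrsim e^{(m-1)r}$.
\end{enumerate}

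Finally, take the injectivity-radius bound from Ref.~\cite{kimSystoleLocallySymmetric2020}, extended to arbitrary ideals as in Appendix~\ref{app:composite-ideal-systole}: $e^{r}\gtrsim \vol(M_N)^{1/\dim G}$ up to constants in the exponent, with $G=SO(2m,1)$, so $\dim G=m(2m+1)$. Combining the two estimates gives $d\gtrsim \vol(M_N)^{(m-1)/(m(2m+1))}=n^{\frac{m-1}{m(2m+1)}}$, which is the claimed exponent.

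\textbf{Main obstacle.} I expect the hardest part to be the systolic step on the restricted lattice. There the cycles are not simplicial cycles of the original triangulation, so I first need the inherited-cycle property to map a restricted nontrivial cycle to a nontrivial cycle of $M_N$ with at most constant size blow-up. I then need to check carefully that the exponents from the volume-growth and injectivity-radius estimates combine to exactly $(m-1)/(m(2m+1))$, rather than losing a factor.
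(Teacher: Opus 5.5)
Your route is the paper's: a pulled-back barycentric subdivision as the colex (Proposition~\ref{prop:barycentric_subdivision_derive_colex}). The rate comes from Benjamini--Schramm convergence, L\"uck approximation and universal coefficients (Lemma~\ref{lem:BS}, Theorem~\ref{thm:abbgnrs}, Lemma~\ref{lem:F2}). The distance comes from the restriction map, the inherited-cycle Lemma~\ref{lem:restricted_cycles}, Lemma~\ref{lem:systole-lower-bound-injectivity-radius} and Theorem~\ref{thm:kim-systole}.

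Your exponent is exactly right. Theorem~\ref{thm:kim-systole} gives $\inj(M_N)\ge\frac{2}{D(D+1)}\log\vol(M_N)-c'$ with only an additive constant, and $\frac{2(m-1)}{2m(2m+1)}=\frac{m-1}{m(2m+1)}$.

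The obstacle you anticipate is also milder than you fear. The $m$-cells of $X_C$ are literally $m$-simplices of $X$, so a cycle of $TC_m(X_C)$ is already a simplicial $m$-cycle of $X$ of the same weight. No deformation or blow-up is needed. The only content is nontriviality in $X$, i.e.\ $\ker\partial^C_{m,m-1}\cap\im\partial_{m+1,m}=\im\partial^C_{m+1,m}$ from Ref.~\cite{Kubica:2019qji}.

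The one step that does not hold as written is your treatment of $X$-type logicals. You bound the color-code distance by the full toric distances $\min(d^Z,d^X)$ of $TC_m(X_C)$, and then identify the $m$-cosystole of $X_C$ with its $m$-systole ``by Poincar\'e duality''. This fails for two reasons:
\begin{itemize}
\item $X_C$ is an $(m+1)$-dimensional cell complex: the $C$-colored simplices plus one $(m+1)$-cell glued into each link $\mathrm{Lk}_m(\sigma)$. It is not a cellulation of the closed $2m$-manifold, so no Poincar\'e duality is available on it.
\item $\pi_C$ only carries color $Z$-logicals to toric $Z$-logicals. Its transpose goes from toric $X$-logicals to color $X$-logicals, which is the wrong direction for a lower bound on $d^X_{CC_m}$.
\end{itemize}

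The paper never needs toric cosystoles. $X$-logicals of $CC_i(X)$ are $Z$-logicals of the transposed complex, which is $CC_{D-i}(X)$. This gives $d^X_{CC_i}\ge\min_C d^Z_{TC_{D-i}(X_C)}$ in Theorem~\ref{th:code_distance_lower_bound_XZ}. For $i=m=D/2$ the complex $C_{m-1}(X)\to C_D(X)\to C_{m-1}(X)$ is self-dual, so $d^X_{CC_m}=d^Z_{CC_m}$. Both are then controlled by the same $m$-cycle weights on the restricted lattices that you bound in Step~2. With the Poincar\'e-duality sentence replaced by this, your argument is complete.
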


Apart from this constant-rate case, the number of logical qubits still grows at least polynomially in \(n\). For these cases as well, we compute the corresponding scaling exponent explicitly as a function of \(i\) and \(D\), for both the code distance and the number of logical qubits. This yields a complete table of the exponents governing \(d\) and \(k\) for the type-\(i\) color code, for every dimension \(D\ge4\) and every \(2\le i\le D-2\) (Table~\ref{tab:systole-exponents} and Table~\ref{tab:xue-exponents}, respectively).

Taken together, these results establish hyperbolic color codes as a viable family of constant-rate, polynomial-distance quantum LDPC codes. The existence of such codes may enable new constant-space-overhead fault-tolerant quantum computation protocols, complementing the existing high-rate qLDPC constructions. The fact that our construction yields color codes, rather than generic CSS codes, makes it a valuable testbed for exploring fault-tolerant protocols that are naturally studied in the color code setting, including the transversal implementation of logical gates, single-shot error correction, and self-correcting behavior.

\section{Preliminaries}

In this section, we introduce the basic notation used to describe topological codes. We summarize CSS codes in terms of algebraic topology in Sec.~\ref{sec:CSS_codes_and_chain_complexes}, where the number of logical qubits is given by a Betti number and the code distance by the minimum weight of a nontrivial cycle. We then define simplices and cells in Sec.~\ref{sec:simplices_and_cells} and the generalized boundary operator in Sec.~\ref{sec:generalized_boundary_operator}. Using this notation, we introduce the toric and color codes in Sec.~\ref{sec:toric_codes} and Sec.~\ref{sec:color_codes}, respectively.

\subsection{CSS codes and chain complexes}
\label{sec:CSS_codes_and_chain_complexes}

CSS codes are a family of stabilizer codes whose stabilizer group is generated by \(X\)-type and \(Z\)-type stabilizer generators. Such codes can be specified in terms of a chain complex~\cite{Bombin:2006cd},
\begin{align}
    \mathcal{C} \coloneqq  \left(
    \begin{CD}
        C_Z @>{\partial_Z}>> C_Q @>{\partial_X}>> C_X
    \end{CD}
    \right),
\end{align}
where \(C_Z\), \(C_Q\), and \(C_X\) are \(\mathbb{F}_2\)-vector spaces spanned by the \(Z\)-type stabilizer generators, the physical qubits, and the \(X\)-type stabilizer generators, respectively. The parity-check matrices \(H_Z^T\) and \(H_X\) of the CSS code are defined as the \(\mathbb{F}_2\)-matrix representations of \(\partial_Z\) and \(\partial_X\) with respect to these bases. With this identification, the chain-complex condition \(\partial_X \circ \partial_Z = 0\) translates directly into \(H_X H_Z^T = 0\) over \(\mathbb{F}_2\), which is precisely the condition that the \(X\)-type and \(Z\)-type stabilizer generators mutually commute.

Following the terminology of algebraic topology, we refer to an element \(c \in C_Q\) as a \emph{chain}, and for each chain \(c\), we write \(Z(c) \coloneqq  \prod_{i \,:\, c_i = 1} Z_i\) for the corresponding \(Z\)-type Pauli operator supported on \(c\); that is, \(Z(c)\) acts as \(Z\) on qubit \(i\) whenever \(c_i = 1\) and as the identity otherwise. With this notation, each basis element \(e_j \in C_Z\) corresponds to a \(Z\)-type stabilizer generator \(Z(\partial_Z e_j)\), whose support on the physical qubits is given by the chain \(\partial_Z e_j \in C_Q\).

More generally, for an arbitrary chain \(c \in C_Q\), the image \(\partial_X(c) \in C_X\) is precisely the \(X\)-type syndrome associated with the Pauli error \(Z(c)\); thus \(c \in \ker \partial_X\) if and only if \(Z(c)\) commutes with every \(X\)-stabilizer generator, i.e., yields a trivial syndrome.

Within \(\ker \partial_X\), the trivial chains \(\operatorname{im} \partial_Z\) are exactly the supports of \(Z\)-stabilizers, so quotienting by them identifies chains that act identically on the code space. This quotient defines the first homology group of the complex,
\begin{align}
    H_1(\mathcal{C}) \coloneqq  \ker \partial_X / \operatorname{im} \partial_Z.
\end{align}
A chain \(c\) in the trivial class \([c] = 0\) satisfies \(c \in \operatorname{im} \partial_Z\), so \(Z(c)\) is itself a \(Z\)-stabilizer and acts as the identity on the code space, whereas a chain representing a nontrivial class \([c] \neq 0\) corresponds to an operator \(Z(c)\) that commutes with all stabilizers without being one, and hence acts nontrivially on the code space. Choosing \(k\) chains \(\{c_i\}_{i=1}^{k}\) representing a basis of \(H_1(\mathcal{C})\) thus yields a logical Pauli \(Z\) basis, \(\{\bar{Z}_i \coloneqq  Z(c_i)\}_{i=1}^{k}\), for the code.

The same argument applies directly to \(X\)-type operators upon passing to the \emph{dual chain complex}
\begin{align}
    \mathcal{C}^* \coloneqq  \left(
    \begin{CD}
        C_X @>{\partial_X^T}>> C_Q @>{\partial_Z^T}>> C_Z
    \end{CD}
    \right),
\end{align}
obtained by reversing the arrows of \(\mathcal{C}\) and taking the transposes \(\partial_X^T, \partial_Z^T\) of the original boundary maps, which again satisfy \(\partial_Z^T \circ \partial_X^T = 0\). We write \(X(c) \coloneqq  \prod_{i \,:\, c_i = 1} X_i\) for the \(X\)-type Pauli operator supported on a chain \(c \in C_Q\). Each basis element \(f_j \in C_X\) corresponds to an \(X\)-type stabilizer generator \(X(\partial_X^T f_j)\), and \(\partial_Z^T(c)\) gives the \(Z\)-type syndrome of \(X(c)\), so that \(c \in \ker \partial_Z^T\) if and only if \(X(c)\) commutes with every \(Z\)-stabilizer generator. The logical Pauli \(X\) basis is then obtained by choosing \(k\) chains representing a basis of independent nontrivial classes in \(H_1(\mathcal{C^*}) \coloneqq  \ker \partial_Z^T / \operatorname{im} \partial_X^T\), giving \(\{\bar{X}_i \coloneqq  X(c_i)\}_{i=1}^{k}\).

The number of logical qubits \(k\) is then given by the rank of the first homology group of the complex, i.e., its first Betti number,
\begin{align}
    k \coloneqq  \dim H_1(\mathcal{C}) = \dim H_1(\mathcal C^*).
\end{align}

For a homology class \([c] \in H_1(\mathcal{C})\), represented by a chain \(c \in \ker \partial_X\), we define its weight as the minimum weight over all representatives in its coset,
\begin{align}
    \wt([c]) \coloneqq  \min_{c' \,\in\, c + \operatorname{im} \partial_Z} \wt(c'),
\end{align}
where, for a chain \(c' = \sum_i c'_i\, e_i \in C_Q\) expressed in the standard basis \(\{e_i\}\) of physical qubits, \(\wt(c')\) denotes its \emph{Hamming weight},
\begin{align}
    \wt(c') \coloneqq  |\{\, i \mid c'_i \neq 0 \,\}|,
\end{align}
i.e., the number of qubits on which \(Z(c')\) acts nontrivially.
The code distance \(d\) is then defined as the minimum weight of a nontrivial homology class in either sector,
\begin{align}
    d \coloneqq  \min\left( \min_{[c] \in H_1(\mathcal{C}) \setminus \{0\}} \wt([c]), \ \min_{[c] \in H_1(\mathcal{C^*}) \setminus \{0\}} \wt([c]) \right).
\end{align}

Equivalently, since every nontrivial homology class is represented by a chain in \(\ker \partial_X \setminus \operatorname{im} \partial_Z\), and every such chain represents a nontrivial class, the distance can be written directly in terms of the weight of chains as
\begin{align}
    d = \min\left( \min_{c \,\in\, \ker \partial_X \setminus \operatorname{im} \partial_Z} \wt(c), \ \min_{c \,\in\, \ker \partial_Z^T \setminus \operatorname{im} \partial_X^T} \wt(c) \right).
\end{align}

\subsection{Simplices and cells}
\label{sec:simplices_and_cells}

To construct a CSS code from a chain complex, one must specify the \(\mathbb{F}_2\)-vector spaces \(C_Z, C_Q\), and \(C_X\) together with the boundary operators \(\partial_Z\) and \(\partial_X\). For topological codes in particular, these spaces and operators arise from concrete topological objects: \(C_Z, C_Q\), and \(C_X\) are spanned by simplices or cells of a given dimension, and the boundary operators are induced by the geometric boundary map. We therefore begin by introducing cells and cell complexes, then specialize to simplices, simplicial complexes, and triangulations, before using these objects to construct the chain complex \(\mathcal{C}\) in the following sections.

For \(i\ge 0\), an \emph{\(i\)-cell} \(\sigma\) is a subset of a topological space homeomorphic to the open \(i\)-ball.

A \emph{cell complex} \(\mathcal L\) is a topological space decomposed into finitely many pairwise disjoint cells,
\begin{align}
    \mathcal L = \bigsqcup_{\sigma\in \Lambda(\mathcal L)} \sigma,
\end{align}
such that, for every \(i\)-cell \(\sigma\in\Lambda(\mathcal{L})\),
\begin{enumerate}
    \item its closure \(\bar{\sigma}\) is homeomorphic to the closed \(i\)-ball, with \(\sigma\) corresponding to its interior under this homeomorphism, and
    \item its boundary \(\bar{\sigma}\setminus \sigma\) is a union of cells of dimension strictly less than \(i\).
\end{enumerate}
We write \(\Lambda_i(\mathcal L)\) for the set of \(i\)-cells of \(\mathcal L\), so that \(\Lambda(\mathcal L)= \bigcup_{i=0}^{D}\Lambda_i(\mathcal L)\). A cell \(\tau\) is called a \emph{face} of \(\sigma\) if \(\tau\subseteq \bar{\sigma}\). The \emph{dimension} of \(\mathcal L\), denoted \(\dim \mathcal L\), is the largest \(i\) such that \(\Lambda_i(\mathcal L)\neq \emptyset\); equivalently, \(\mathcal L\) is said to be \(D\)-dimensional if \(D=\dim\mathcal L\). Throughout this paper, all cell complexes are finite and regular in the usual CW-complex terminology; see Ref.~\cite[Appendix]{hatcherAlgebraicTopology2002} for details.

Given a cell complex \(\mathcal L\), we define the \emph{\(i\)-th chain space} \(C_i(\mathcal L)\) as the \(\mathbb{F}_2\)-vector space spanned by the \(i\)-cells,
\begin{align}
    C_i(\mathcal L) \coloneqq  \left\{ \sum_{\sigma \in \Lambda_i(\mathcal L)} c_\sigma \, \sigma \ \middle|\ c_\sigma \in \mathbb{F}_2 \right\},
\end{align}
with \(\Lambda_i(\mathcal L)\) as its basis; an element \(c \in C_i(\mathcal L)\) is called an \emph{\(i\)-chain}.

Fix an ambient Euclidean space \(\mathbb{R}^D\). For \(0\le i\le D\), an \emph{\(i\)-simplex} \(\Delta^i\) is an \(i\)-dimensional polytope
\begin{align}
    \Delta^i= \left\{\sum_{j=0}^i t_j v_j \; \middle|\; 0 < t_j \land \sum_{j=0}^i t_j=1\right\} \subset \mathbb{R}^D,
\end{align}
spanned by an affinely independent set of vertices \(\{v_0, \ldots, v_i\} \subset \mathbb{R}^D\). For example, a 0-simplex, a 1-simplex, and a 2-simplex are a vertex, an edge, and a face, respectively. A face of a simplex, in the sense defined above for a general cell, corresponds to a subset of its vertex set: the \(k\)-faces of \(\Delta^i\) are exactly the \(k\)-simplices spanned by the \(\binom{i+1}{k+1}\) subsets of \(\{v_0,\ldots,v_i\}\) of size \(k+1\).

A cell complex \(X\) is called a \emph{simplicial complex} if every cell of \(X\) is a simplex, every simplex of \(X\) together with all of its faces belongs to some \(D\)-simplex of \(X\) (where \(D=\dim X\)), and, for any two simplices \(\sigma,\tau\in\Lambda(X)\), the intersection \(\bar\sigma\cap\bar\tau\) is either empty or the closure of a common face of \(\sigma\) and \(\tau\). Throughout this paper, we reserve the symbol \(X\) for a simplicial complex, writing \(\Delta_i(X)\) for the set of \(i\)-faces of \(X\). Unlike for a general cell complex, the faces of a simplex are determined combinatorially by its vertex set, as noted above; consequently, a simplicial complex \(X\) is completely determined by specifying only its set of top-dimensional simplices \(\Delta_D(X)\), since \(\Delta_i(X)\) for \(i<D\) is then recovered as the set of \(i\)-dimensional faces of the simplices in \(\Delta_D(X)\). In what follows, we therefore specify a simplicial complex by giving \(\Delta_D(X)\) alone.

A \emph{triangulation} of a closed \(D\)-manifold \(M\) is a \(D\)-dimensional simplicial complex \(X\) that is homeomorphic to \(M\) as a topological space. Since \(M\) is closed, every \((D-1)\)-simplex of \(X\) is a face of exactly two \(D\)-simplices. More generally, a \emph{cellulation} of \(M\) is a \(D\)-dimensional cell complex \(\mathcal{L}\) that is homeomorphic to \(M\) as a topological space; as with a triangulation, every \((D-1)\)-cell of \(\mathcal{L}\) is a face of exactly two \(D\)-cells.

Given two simplices \(\sigma\) and \(\tau\) spanned by disjoint vertex sets \(\{v_0,\ldots,v_i\}\) and \(\{w_0,\ldots,w_j\}\), respectively, such that their union \(\{v_0,\ldots,v_i\}\cup\{w_0,\ldots,w_j\}\) is affinely independent, we define their \emph{join} \(\sigma \star \tau\) as the \((i+j+1)\)-simplex spanned by the union of these vertex sets, i.e., the polytope
\begin{align}
    \sigma \star \tau = \left\{ \alpha x + \beta y \ \middle| \ x\in\sigma,\ y\in\tau,\ 0<\alpha,\beta \land \alpha+\beta=1 \right\}.
\end{align}
In particular, for a vertex \(v\) (viewed as a \(0\)-simplex) and an \(i\)-simplex \(\tau\), \(v\star\tau\) denotes the \((i+1)\)-simplex obtained by taking the convex hull of \(v\) and every point of \(\tau\), provided \(v\) together with the vertices of \(\tau\) is affinely independent. For a vertex \(v\) and a collection \(\mathcal{T}\) of simplices not containing \(v\) as a vertex, we write \(v\star\mathcal{T} \coloneqq  \{v\star\tau \mid \tau\in\mathcal{T}\}\).

\subsection{Generalized boundary operator}
\label{sec:generalized_boundary_operator}

To define toric codes and the color codes, we introduce boundary operators, which are linear maps between the chain spaces.

For a cell complex \(\mathcal{L}\), we introduce two complementary notions that record how cells of different dimensions are incident to one another. For \(\sigma \in \Lambda_i(\mathcal{L})\) and \(j \leq i\), we write
\begin{align}
    \Lambda_j(\sigma) \coloneqq  \{\tau \in \Lambda_j(\mathcal{L}) \mid \tau \subseteq \bar\sigma\}
\end{align}
for the set of \(j\)-cells contained in the closure of \(\sigma\), i.e., the \(j\)-dimensional faces of \(\sigma\). Conversely, for \(\sigma \in \Lambda_i(\mathcal{L})\) and \(j \geq i\), we define the \emph{star} of \(\sigma\) in dimension \(j\) as
\begin{align}
    \mathrm{St}_j(\sigma) \coloneqq  \{\tau \in \Lambda_j(\mathcal{L}) \mid \sigma \in \Lambda_i(\tau)\},
\end{align}
which is the set of \(j\)-cells having \(\sigma\) as one of their \(i\)-dimensional faces.

Using these two notions, we introduce a generalized boundary operator following Ref.~\cite{Kubica:2019qji}. For \(i \neq j\), we define the \(\mathbb{F}_2\)-linear map \(\partial_{i,j} \colon C_i(\mathcal{L}) \to C_j(\mathcal{L})\) by specifying its action on each basis element \(\sigma \in \Lambda_i(\mathcal{L})\) as
\begin{align}
    \partial_{i,j}(\sigma) = \begin{cases}
        \displaystyle\sum_{\tau \in \Lambda_j(\sigma)} \tau & \text{if } i > j, \\[1em]
        \displaystyle\sum_{\tau \in \mathrm{St}_j(\sigma)} \tau & \text{if } i < j,
    \end{cases}
\end{align}
so that \(\partial_{i,j}\) sends a cell \(\sigma\) to the sum of its \(j\)-dimensional faces when \(j < i\), and to the sum of the \(j\)-cells having \(\sigma\) as a face when \(j > i\).

These two constructions are related by transposition: \(\partial_{i,j}\) and \(\partial_{j,i}\) are transposes of one another as \(\mathbb{F}_2\)-matrices, i.e., 
\begin{align}
\partial_{i,j} = \partial_{j,i}^T.
\end{align}
Indeed, for \(\sigma \in \Lambda_i(\mathcal{L})\) and \(\tau \in \Lambda_j(\mathcal{L})\) with \(i>j\), the coefficient of \(\tau\) in \(\partial_{i,j}(\sigma)\) is \(1\) if and only if \(\tau\) is a face of \(\sigma\), i.e., \(\tau\subseteq\bar\sigma\). By definition, this is precisely the condition for \(\sigma\) to belong to \(\mathrm{St}_i(\tau)\), so the coefficient of \(\sigma\) in \(\partial_{j,i}(\tau)\) is also \(1\). Since the two coefficients agree for every pair \((\sigma,\tau)\), the matrix of \(\partial_{i,j}\) in the bases \(\Lambda_i(\mathcal{L})\) and \(\Lambda_j(\mathcal{L})\) is the transpose of the matrix of \(\partial_{j,i}\).

\subsection{Toric codes}
\label{sec:toric_codes}

Using the generalized boundary operators \(\partial_{i,j}\), we define \emph{toric codes} on general cell complexes~\cite{Kitaev:1997wr}. Let \(\mathcal{L}\) be a \(D\)-dimensional cell complex. For each type index \(i\) with \(1\le i \le D-1\), the pair \((\mathcal{L}, i)\) determines a CSS code, which we call the \emph{type-\(i\) toric code} on \(\mathcal{L}\), obtained from the chain complex
\begin{align}
    TC_i(\mathcal L) \coloneqq  \left(
    \begin{CD}
        C_{i+1}(\mathcal L) @>{\partial_{i+1, i}}>> C_{i}(\mathcal L) @>{\partial_{i,i-1}}>> C_{i-1}(\mathcal L)   
    \end{CD}
    \right).
\end{align}
Physical qubits are placed on the \(i\)-cells, so that \(C_i(\mathcal{L})\) plays the role of the physical qubit space, while \(C_{i-1}(\mathcal{L})\) and \(C_{i+1}(\mathcal{L})\) index the \(X\)- and \(Z\)-type stabilizer generators, respectively. The stabilizer generators are obtained directly from the boundary operators: for every \((i-1)\)-cell \(\sigma\) and \((i+1)\)-cell \(\tau\), we define
\begin{align}
    S_X(\sigma) = X(\partial_{i-1,i}(\sigma)), \quad S_Z(\tau) = Z(\partial_{i+1, i}(\tau)),
\end{align}
so that \(S_X(\sigma)\) acts on the \(i\)-cells forming the coboundary of \(\sigma\), and \(S_Z(\tau)\) acts on the \(i\)-cells forming the boundary of \(\tau\).

Logical \(Z\) operators of the type-\(i\) toric code are given by representatives of nontrivial classes in the homology group \(H_1(TC_i(\mathcal{L})) = \ker \partial_{i,i-1} / \operatorname{im} \partial_{i+1,i}\), and the number of logical qubits is \(k = \dim H_1(TC_i(\mathcal{L}))\). Logical \(X\) operators arise analogously from the homology of the transposed complex, \(H_1(TC_i^*(\mathcal{L})) = \ker \partial_{i,i+1} / \operatorname{im} \partial_{i-1,i}\), and \(k = \dim H_1(TC_i^*(\mathcal{L}))\) as well, consistent with the \(\partial_{i,j} = \partial_{j,i}^T\) relation established above.

More generally, the cell complex \(\mathcal{L}\) gives rise to a full chain complex \(C_D(\mathcal{L}) \xrightarrow{\partial_{D,D-1}} \cdots \xrightarrow{\partial_{i+1,i}} C_i(\mathcal{L}) \xrightarrow{\partial_{i,i-1}} \cdots \xrightarrow{\partial_{1,0}} C_0(\mathcal{L})\), whose \(i\)-th homology group with \(\mathbb{F}_2\) coefficients is
\begin{align}
    H_i(\mathcal L;\mathbb{F}_2) \coloneqq  \ker \partial_{i,i-1} / \operatorname{im} \partial_{i+1,i}.
\end{align}
By definition, \(H_i(\mathcal{L};\mathbb F_2) = H_1(TC_i(\mathcal{L}))\). We refer to 
\begin{align}
b_i(\mathcal{L}) \coloneqq  \dim H_i(\mathcal{L};\mathbb{F}_2).
\end{align}
as the \emph{\(i\)-th Betti number} of \(\mathcal{L}\). In particular, the number of logical qubits of the type-\(i\) toric code defined on \(\mathcal L\) is given by the \(i\)-th Betti number, i.e., 
\begin{align}
\label{eq:num_of_logical_qubits_toric_code}
    k_{TC_i(\mathcal L)}=b_i(\mathcal L).
\end{align}

The code distance is \(d = \min(d^X, d^Z)\), where \(d^Z\) is the minimum weight of a nontrivial class in \(H_1(TC_i(\mathcal{L}))\) and \(d^X\) is the minimum weight of a nontrivial class \(H_1(TC_i^*(\mathcal L))\).

\subsection{Color codes}
\label{sec:color_codes}

The \(D\)-dimensional color code is defined on a \(D\)-dimensional simplicial complex \(X\) without boundary~\cite{Bombin:2006rk}. In addition, we require that the vertices of \(X\) be \((D+1)\)-colorable, i.e., that there exist a function
\begin{align}
    \mathrm{color}\colon \Delta_0(X) \to \mathbb{Z}_{D+1} = \{0,1,\ldots,D\}.
\end{align}
such that any two vertices connected by an edge have different colors. This colorability condition is sufficient to guarantee \(\partial_{D,i-1}\circ\partial_{D-i-1,D}=0\), which is what makes it possible to define a chain complex out of these boundary operators~\cite{Kubica:2019qji}.

For each type index \(i\) with \(1\le i\le D-1\), the \emph{type-\(i\) color code} on \(X\) is obtained from the chain complex
\begin{align}
\label{eq:color-code-chain-complex}
    & CC_i(X) \coloneqq  \notag\\
    & \left(
    \begin{CD}
        C_{D-i-1}(X) @>{\partial_{D-i-1, D}}>> C_D(X) @>{\partial_{D,i-1}}>> C_{i-1}(X)
    \end{CD}
    \right).
\end{align}
Physical qubits are placed on the \(D\)-simplices, and \(X\)- and \(Z\)-type stabilizer generators are placed on the \((i-1)\)-simplices and \((D-i-1)\)-simplices, respectively. As with the toric code, the stabilizer generators are obtained directly from the boundary operators: for every \((i-1)\)-simplex \(\sigma\) and \((D-i-1)\)-simplex \(\tau\), we define
\begin{align}
    S_X(\sigma) = X(\partial_{i-1,D}(\sigma)), \quad S_Z(\tau) = Z(\partial_{D-i-1,D}(\tau)).
\end{align}

The number of logical qubits of the type-\(i\) color code on a colex \(X\) is known from Ref.~\cite{Bombin:2006rk} to be given in terms of the \(i\)-th Betti number \(b_i(X)\),
\begin{align}
\label{eq:num_of_logical_qubits_color_code}
    k_{CC_i(X)} = \binom{D}{i} b_i(X).
\end{align}
If the colex \(X\) and a cell complex \(\mathcal L\) are respectively a triangulation and a cellulation of the same manifold, then they have the same \(i\)-th Betti number. Since the number of logical qubits of the corresponding type-\(i\) toric code is \(k_{TC_i(\mathcal L)}=b_i(\mathcal L)\), it follows that
\begin{align}
\label{eq:logical-qubits-color-codes-and-toric-codes}
    k_{CC_i(X)} = \binom{D}{i} k_{TC_i(\mathcal L)}.
\end{align}

The code distance requires a separate analysis, as directly determining or bounding the distance of a color code is generally difficult. Establishing such a bound is one of the central subjects of this paper and is discussed in detail in Sec.~\ref{sec:color-code-lower-bound}.

\section{Construction of Hyperbolic Color Codes}
\label{sec:construction-of-hyperbolic-color-codes}

In this section, we propose a construction of hyperbolic color codes. First, we introduce a colex, the geometric structure used to define color codes in Sec.~\ref{sec:colexes}, following Ref.~\cite{Kubica:2019qji}. Then, we show that the barycentric subdivision of a triangulation gives a colex on any manifold in Sec.~\ref{sec:barycentric_subdivision}. Finally, we introduce the hyperbolic manifolds proposed in Ref.~\cite{Guth:2014zxf} in Sec.~\ref{sec:hyperbolic-manifold}, from which hyperbolic toric codes with constant rate and polynomial distance are obtained.

\subsection{Colexes}
\label{sec:colexes}

To relate color codes to toric codes, we introduce the notion of a \emph{colex}, following Ref.~\cite{Kubica:2019qji}. We note, however, that the colex structure is not required merely to ensure the commutativity of the color-code stabilizers; for this purpose, it is sufficient to consider a \(D\)-dimensional simplicial complex whose vertices admit a proper \((D+1)\)-coloring~\cite{Delfosse:2013smk, Kubica:2014jue}. 

A \(D\)-colex is defined recursively in terms of an auxiliary notion of a \emph{colorable ball}, which we introduce first. As a base case, a colorable \(0\)-ball is defined to be a single vertex. For \(D\ge 1\), a colorable \(D\)-ball \(\mathcal B_{v,Y}\) is then defined as the set of \(D\)-simplices spanned by a vertex \(v\) and all the \((D-1)\)-simplices of some \((D-1)\)-colex \(Y\) homeomorphic to a \((D-1)\)-sphere,
\begin{align}
    \mathcal B_{v,Y} \coloneqq  \{v \star \tau \mid \tau \in \Delta_{D-1}(Y)\}.
\end{align}
The colorable \(D\)-ball \(\mathcal B_{v,Y}\) is \((D+1)\)-colorable, with \(v\) assigned a color distinct from the \(D\) colors used for the vertices of \(Y\). Since \(Y\) is homeomorphic to \(S^{D-1}\), the \(D\)-simplices \(\{v\star\tau\}_{\tau\in\Delta_{D-1}(Y)}\) glue consistently along their shared faces, so that \(\mathcal{B}_{v,Y}\) is itself a simplicial complex, homeomorphic to a \(D\)-ball with boundary \(\partial\mathcal{B}_{v,Y} \cong Y\).

A \(0\)-colex is defined to be a collection of colorable \(0\)-balls. A \(D\)-colex \(X\) is a union of disjoint colorable \(D\)-balls \(\{\mathcal B_{v,Y}\}\),
\begin{align}
    X \coloneqq  \bigsqcup_{v, Y} \mathcal B_{v,Y}.
\end{align}
glued together along their \((D-1)\)-dimensional boundaries \(\partial \mathcal B_{v,Y}\), such that every \((D-1)\)-simplex in \(\partial\mathcal B_{v,Y}\) belongs to the boundary of exactly two distinct colorable \(D\)-balls, and such that the vertices of \(X\) are \((D+1)\)-colorable.

Since a \(D\)-colex is \((D+1)\)-colorable, and any two vertices sharing an edge are assigned different colors, the \(i+1\) vertices of an \(i\)-simplex \(\sigma\) carry \(i+1\) distinct colors. We write the set of these colors as
\begin{align}
    \mathrm{color}(\sigma) \coloneqq  \{\mathrm{color}(v)\mid v\in \Delta_0(\sigma)\}.
\end{align}

\subsection{Barycentric subdivision}
\label{sec:barycentric_subdivision}

We now show that a colex can be obtained explicitly from any triangulation of a closed manifold via \emph{barycentric subdivision}. This construction allows us to define color codes on general closed manifolds using colexes.

For an \(i\)-simplex \(\sigma\) with vertex set \(\{v_0,\ldots,v_i\}\), we write \(b(\sigma) \coloneqq  \frac{1}{i+1}\sum_{j=0}^i v_j\) for its \emph{barycenter}. We define the barycentric subdivision \(\mathrm{sd}(\sigma)\) of an \(i\)-simplex \(\sigma\) recursively on \(i\). As a base case, \(\mathrm{sd}(\Delta^0) \coloneqq  \Delta^0\). For \(i\ge 1\), we first define \(\mathrm{sd}(\partial\sigma)\) by applying this construction to each \((i-1)\)-simplex of \(\partial\sigma\) and gluing along shared faces, and then set
\begin{align}
    \mathrm{sd}(\sigma) \coloneqq  b(\sigma) \star \mathrm{sd}(\partial\sigma).
\end{align}
For a \(D\)-dimensional simplicial complex \(X\), we define \(\mathrm{sd}(X) \coloneqq  \bigcup_{\delta\in\Delta_D(X)} \mathrm{sd}(\delta)\), again glued along shared faces.

The recursive structure of barycentric subdivision matches the recursive construction of a colex via colorable balls: for each \(D\)-simplex \(\delta\in\Delta_D(X)\), the formula \(\mathrm{sd}(\delta) = b(\delta)\star\mathrm{sd}(\partial\delta)\) is precisely the colorable ball \(\mathcal{B}_{b(\delta),\,\mathrm{sd}(\partial\delta)}\), with \(\mathrm{sd}(\partial\delta)\) itself a \((D-1)\)-colex by the same construction applied one dimension lower. This yields the following.

\begin{proposition}[Barycentric subdivision yields a colex]
\label{prop:barycentric_subdivision_derive_colex}
Let \(X\) be a triangulation of a closed \(D\)-manifold. Then \(\mathrm{sd}(X)\) is a \(D\)-colex, with \(\mathcal{B}_{b(\delta),\,\mathrm{sd}(\partial\delta)}\) for \(\delta\in\Delta_D(X)\) as its colorable \(D\)-balls, and with the coloring of \(\mathrm{sd}(X)\) given by assigning to each vertex \(b(\sigma)\) (for \(\sigma\in\Delta_i(X)\)) the color \(i\).
\end{proposition}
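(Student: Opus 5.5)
The plan is an induction on dimension, with the statement strengthened so the induction applies to the boundary spheres: for every \(D\ge 0\) and every triangulation \(X\) of a closed \(D\)-manifold (the case \(D=0\) being a finite set of points), \(\mathrm{sd}(X)\) is a \(D\)-colex whose colorable balls are \(\mathcal B_{b(\delta),\mathrm{sd}(\partial\delta)}\) for \(\delta\in\Delta_D(X)\), and whose vertices are properly colored by \(b(\sigma)\mapsto\dim\sigma\). For \(D=0\), \(\mathrm{sd}(X)=X\) is a collection of vertices, i.e.\ of colorable \(0\)-balls, so the claim holds.

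The coloring is checked first and directly, since it does not use induction. Unwinding the recursion \(\mathrm{sd}(\sigma)=b(\sigma)\star\mathrm{sd}(\partial\sigma)\), the \(j\)-simplices of \(\mathrm{sd}(X)\) are exactly the joins \(b(\sigma_0)\star\cdots\star b(\sigma_j)\) over strictly increasing flags \(\sigma_0\subsetneq\cdots\subsetneq\sigma_j\) of simplices of \(X\). Hence two vertices \(b(\sigma),b(\tau)\) joined by an edge satisfy \(\sigma\subsetneq\tau\) or \(\tau\subsetneq\sigma\), so \(\dim\sigma\neq\dim\tau\). The map \(b(\sigma)\mapsto\dim\sigma\in\{0,\dots,D\}\) is therefore a proper \((D+1)\)-coloring. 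It also follows that simplices of \(\mathrm{sd}(X)\) are determined by their vertex sets, so gluing the pieces \(\mathrm{sd}(\delta)\) along shared faces gives a genuine simplicial complex. This uses that \(X\) is a simplicial complex, in which \(\mathrm{sd}\) of a shared face is defined consistently from either side.

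The inductive step for \(D\ge1\) goes as follows. Fix \(\delta\in\Delta_D(X)\). Its boundary \(\partial\delta\), the complex of proper faces of \(\delta\), is a triangulation of the closed \((D-1)\)-manifold \(S^{D-1}\). By the induction hypothesis, \(\mathrm{sd}(\partial\delta)\) is a \((D-1)\)-colex, colored by dimensions \(0,\dots,D-1\), and it is homeomorphic to \(S^{D-1}\) because barycentric subdivision does not change the underlying space. Thus \(\mathcal B_{b(\delta),\mathrm{sd}(\partial\delta)}=b(\delta)\star\mathrm{sd}(\partial\delta)=\mathrm{sd}(\delta)\) is a colorable \(D\)-ball. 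Its apex receives color \(D\), distinct from the colors of \(Y=\mathrm{sd}(\partial\delta)\), exactly as the definition requires. The balls are disjoint as open pieces since the open simplices \(\delta\) are, and \(\mathrm{sd}(X)=\bigcup_\delta \mathrm{sd}(\delta)\).

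For the gluing condition, take a \((D-1)\)-simplex \(\rho\) of \(\partial\mathcal B_{b(\delta),\mathrm{sd}(\partial\delta)}=\mathrm{sd}(\partial\delta)\). It lies in \(\mathrm{sd}(\varphi)\) for a unique facet \(\varphi\in\Delta_{D-1}(X)\) of \(\delta\), namely the top simplex of its flag. Since \(X\) triangulates a closed manifold, \(\varphi\) is a face of exactly two \(D\)-simplices \(\delta,\delta'\). So \(\rho\) lies in the boundaries of exactly the two distinct balls \(\mathrm{sd}(\delta)\) and \(\mathrm{sd}(\delta')\), and of no other, because any ball containing \(\rho\) in its boundary must come from a \(D\)-simplex containing \(\varphi\).

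The main obstacle is bookkeeping rather than conceptual. The induction must really be over all closed manifolds, so that it applies to \(\partial\delta\cong S^{D-1}\). One also needs the flag description of \(\mathrm{sd}\), to see that the colex of \(\partial\delta\) produced by induction carries the same coloring as the restriction of the global coloring. Both points follow once the flag description is established by induction on the recursive definition, and I would prove that first.
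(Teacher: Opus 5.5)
Your proposal is correct and follows essentially the same route as the paper. The common steps are: induction on \(D\), applied to \(\partial\delta\cong S^{D-1}\); identifying \(\mathrm{sd}(\delta)\) with the colorable ball \(\mathcal{B}_{b(\delta),\mathrm{sd}(\partial\delta)}\); getting the gluing condition from the fact that each \((D-1)\)-simplex of \(X\) lies in exactly two \(D\)-simplices; and showing that \(b(\sigma)\mapsto\dim\sigma\) is a proper coloring because edges join barycenters of comparable faces. Your explicit flag description of \(\mathrm{sd}(X)\) only makes rigorous three facts the paper uses implicitly: edges join comparable faces, a boundary \((D-1)\)-simplex of \(\mathrm{sd}(\delta)\) lies over a unique facet of \(\delta\) (so no third ball contains it), and the inductively produced coloring agrees with the global one.
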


\begin{proof}
We proceed by induction on \(D\).

\textit{Base case (\(D=0\)).} A closed \(0\)-manifold is a finite set of points, and \(\mathrm{sd}(X) = X\), a disjoint union of colorable \(0\)-balls, i.e., a \(0\)-colex by definition. The coloring assigning color \(0\) to every vertex is trivially proper, as a \(0\)-colex has no edges.

\textit{Inductive step.} Let \(D\ge 1\) and assume the proposition holds for all dimensions less than \(D\). Let \(X\) be a triangulation of a closed \(D\)-manifold. For each \(\delta\in\Delta_D(X)\), the boundary \(\partial\delta\) is itself a triangulation of the closed \((D-1)\)-manifold \(S^{D-1}\). By the inductive hypothesis applied to \(\partial\delta\), \(\mathrm{sd}(\partial\delta)\) is a \((D-1)\)-colex homeomorphic to \(S^{D-1}\), with colorable \((D-1)\)-balls \(\mathcal{B}_{b(\mu),\,\mathrm{sd}(\partial\mu)}\) for \(\mu\in\Delta_{D-1}(\delta)\), and coloring \(b(\tau)\mapsto\dim\tau\) for \(\tau\) a proper face of \(\delta\). Hence \(\mathrm{sd}(\delta) = \mathcal{B}_{b(\delta),\,\mathrm{sd}(\partial\delta)}\) is a well-defined colorable \(D\)-ball, with boundary \(\partial(\mathrm{sd}(\delta)) = \mathrm{sd}(\partial\delta)\).

By definition, \(\mathrm{sd}(X) = \bigcup_{\delta\in\Delta_D(X)} \mathrm{sd}(\delta)\), glued along shared faces. Since \(X\) triangulates a closed \(D\)-manifold, every \((D-1)\)-simplex \(\mu\in\Delta_{D-1}(X)\) is a face of exactly two \(D\)-simplices \(\delta_1,\delta_2\in\Delta_D(X)\), and \(\mathrm{sd}(\mu)\) depends only on \(\mu\) itself, giving rise to a piece common to \(\partial(\mathrm{sd}(\delta_1))\) and \(\partial(\mathrm{sd}(\delta_2))\). Hence every \((D-1)\)-simplex of \(\mathrm{sd}(X)\) belongs to the boundary of exactly two distinct colorable \(D\)-balls, as required.

Finally, coloring the vertices of \(\mathrm{sd}(X)\) by \(b(\sigma)\mapsto\dim\sigma\) for \(\sigma\in\Delta_i(X)\), \(0\le i\le D\), gives a \((D+1)\)-coloring, which is proper since any edge of \(\mathrm{sd}(X)\) connects the barycenters of two comparable faces of \(X\), which necessarily have different dimensions. This completes the induction and proves the proposition.
\end{proof}

\subsection{Hyperbolic manifolds}
\label{sec:hyperbolic-manifold}

 We construct an explicit family of closed hyperbolic manifolds, following the construction of Ref.~\cite{Guth:2014zxf}. These manifolds will serve as the underlying cell complexes on which we define color codes and toric codes in the following sections.

The \emph{\(D\)-dimensional hyperbolic space} is defined as
\begin{align}
    \mathbb{H}^{D} \coloneqq  \left\{(x_i)^D_{i=0} \in \mathbb{R}^{D+1} \ \middle|\ -x_0^2 + \sum^D_{i=1} x_i^2 = -1,~ x_0 >0 \right\}.
\end{align}
Let \(\Gamma\) be a discrete group of isometries acting freely on \(\mathbb{H}^D\), and define an equivalence relation \(\sim\) on \(\mathbb{H}^D\) by \(x\sim y\) if and only if \(y=\gamma\cdot x\) for some \(\gamma\in\Gamma\). A hyperbolic manifold can then be obtained as the quotient of \(\mathbb{H}^D\) by \(\Gamma\),
\begin{align}
    \Gamma\backslash \mathbb{H}^D \coloneqq  \mathbb{H}^D / {\sim}.
\end{align}
Here, \(\backslash\) denotes the quotient by the left action of \(\Gamma\), rather than set subtraction.
Such a quotient is not closed for a generic choice of \(\Gamma\). We first construct a cocompact arithmetic lattice, and then pass to sufficiently deep torsion-free congruence subgroups.

Let \(f\colon \mathbb{R}^{D+1} \to \mathbb{R}\) be the quadratic form \(-\sqrt{2}x_0^2 + \sum^D_{i=1} x_i^2\). We define \(SO_f\) as the group of \((D+1)\times(D+1)\) matrices with entries in \(\mathbb{R}\) that preserve \(f\) and have determinant \(1\),
\begin{align}
    SO_f \coloneqq  &\Big\{ A \in \mathbb{R}^{(D+1)\times (D+1)} 
    \ \Big| \notag \\
    & f(Ax) = f(x), \ \forall x\in\mathbb{R}^{D+1},\ \det A = 1 \Big\}.
\end{align}
Let \(SO_f(\mathbb{Z}[\sqrt{2}]) \subset SO_f\) be the subgroup of matrices with entries in \(\mathbb{Z}[\sqrt{2}]\), where \(\mathbb{Z}[\sqrt 2] = \{a+b\sqrt 2 \mid a,b \in \mathbb{Z} \}\). The group \(SO_f(\mathbb{Z}[\sqrt 2])\) is a discrete subgroup of \(SO_f\), which provides a closed hyperbolic manifold; see Ref.~\cite[12.8 Example 5]{ratcliffeFoundationsHyperbolicManifolds2019} for the fact that the resulting quotient is compact.

To construct a family of closed manifolds, we define a sequence of subgroups of \(SO_f(\mathbb{Z}[\sqrt 2])\). We let \(\Gamma_1 \coloneqq  SO_f(\mathbb{Z}[\sqrt{2}])\). Since \(f\) has the same signature \((D,1)\) as the form defining \(\mathbb{H}^D\), it is linearly congruent to that form, and we fix an identification of \(SO_f\) with \(SO(D,1)\) accordingly; in particular, \(\Gamma_1\) and its subgroups act on \(\mathbb{H}^D\) by isometries. For \(N\ge 1\), define
\begin{align}
\label{eq:principal_congruence_subgroups}
    &\Gamma_N \coloneqq  \notag \\
    &\left\{A+B\sqrt{2} \in \Gamma_1 \ \middle|\ A\equiv I \!\!\!\pmod N,\ B\equiv 0\!\!\! \pmod N\right\},
\end{align}
where \(A\) and \(B\) are matrices with integer entries. The subgroups \(\Gamma_N \subset \Gamma_1\) are called \emph{principal congruence subgroups} of \(\Gamma_1\). In order for the quotient \(\Gamma_N\backslash\mathbb{H}^D\) to be a manifold rather than an orbifold, \(\Gamma_N\) must act on \(\mathbb{H}^D\) freely, i.e., without fixed points. This is indeed the case for sufficiently large \(N\): the quotients \(M_N\coloneqq \Gamma_N\backslash\mathbb{H}^{D}\) are closed hyperbolic manifolds for all sufficiently large \(N\)~\cite{Guth:2014zxf}. We fix one such sufficiently large \(N_0\) once and for all, and let \(M_0 \coloneqq  \Gamma_{N_0}\backslash\mathbb{H}^D\) denote the resulting closed hyperbolic manifold, which will serve as the base of the tower of finite covers \(\{M_N\}_{N_0\mid N}\) considered below; since \(\Gamma_N\subset\Gamma_{N_0}\) whenever \(N_0\mid N\), each such \(M_N\) is indeed a finite-sheeted cover of \(M_0\).

Combining the constructions introduced in this section, we can now describe explicitly how to construct color codes on the family of closed hyperbolic manifolds \(M_N\). We first fix a triangulation \(Y_0\) of the base manifold \(M_0\), and let
\begin{align}
X_0\coloneqq \mathrm{sd}(Y_0).
\end{align}
By Proposition~\ref{prop:barycentric_subdivision_derive_colex}, \(X_0\) is a \(D\)-colex. For each \(N\) divisible by \(N_0\), let \(p_N\colon M_N\to M_0\) denote the corresponding finite-sheeted covering map, and define \(X_N\coloneqq p_N^{-1}(X_0)\) to be the pullback triangulation of \(M_N\). Since a covering map is a local homeomorphism, the simplicial structure and vertex coloring of \(X_0\) lift to \(X_N\); hence \(X_N\) is again a \(D\)-colex. Finally, for a type index \(1\le i\le D-1\), we define the color-code chain complex as in Eq.~\eqref{eq:color-code-chain-complex},
\begin{align}
   & CC_i(X_N)= \notag\\
   & \left(
   \begin{CD}
       C_{D-i-1}(X_N) @>{\partial_{D-i-1, D}}>> C_D(X_N) @>{\partial_{D,i-1}}>> C_{i-1}(X_N)
   \end{CD}
   \right),
\end{align}
thereby defining a color code on \(M_N\). In the following sections, we study how the parameters of this family of color codes scale with \(N\) and with the dimension \(D\).

\section{Code Distances of Color and Toric Codes}
\label{sec:code-distances-of-color-and-toric-codes}

In this section, we derive a lower bound on the code distance of the hyperbolic color code by relating it to the code distance of an associated toric code. First, we introduce the morphism between color and toric codes, following Ref.~\cite{Kubica:2019qji}, in Sec.~\ref{sec:color-to-toric-morphism}. This morphism enables us to decompose a color code into multiple toric codes. We then show that the code distance of the toric codes obtained by the morphism can be bounded below by a polynomial in \(n\) in Sec.~\ref{sec:distance_lower_bound_of_hyperbolic_toric_code}. Finally, we show that the code distance of the color code can be bounded below by that of the toric codes obtained by the morphism in Sec.~\ref{sec:color-code-lower-bound}.

\subsection{Morphism between color and toric codes}
\label{sec:color-to-toric-morphism}

Following Ref.~\cite{Kubica:2019qji}, we introduce a morphism relating color codes and toric codes. This morphism is defined in terms of the \emph{restricted lattice}~\cite{Bombin:2006rk, Kubica:2019qji}, obtained from a colex by removing its vertices of a chosen color.

To define the restricted lattice, we first introduce the notion of a link. For any simplex \(\sigma\) in a \(D\)-colex \(X\) and any \(i\ge0\), the \emph{\(i\)-link} of \(\sigma\) is
\begin{align}
    &\mathrm{Lk}_i(\sigma) \coloneqq  \notag \\ 
    &\{\tau\in\Delta_i(X) \mid \tau\cap\sigma = \emptyset \text{ and } \exists\delta\in\mathrm{St}_D(\sigma) \text{ s.t. } \tau\in\Delta_i(\delta)\}.
\end{align}

We are now ready to define the restricted lattice. Let \(X\) be a \(D\)-colex, and let \(C\subset \mathbb{Z}_{D+1}\) be a subset of \(i+1\) colors, where \(1\le i < D\). The \emph{restricted lattice} \(X_C\) is an \((i+1)\)-dimensional cell complex constructed from \(X\) as follows.
\begin{itemize}
    \item For \(0\le j \le i\), the \(j\)-cells of \(X_C\) coincide with the \(j\)-simplices of \(X\) whose color set is contained in \(C\),
    \[
    \Lambda_j(X_C) = \{\sigma\in \Delta_j(X) \mid \mathrm{color}(\sigma)\subset C\}.
    \]
    \item Each \((i+1)\)-cell \(\Xi(\sigma)\) of \(X_C\) is obtained by attaching an \((i+1)\)-disk, along its boundary, to the \(i\)-link \(\mathrm{Lk}_i(\sigma)\) of a \((D-i-1)\)-simplex \(\sigma\) of \(X\) with color \(\mathbb{Z}_{D+1}\setminus C\); the resulting cell \(\Xi(\sigma)\) is thus uniquely identified with \(\sigma\).
\end{itemize}
The restricted lattice \(X_C\) is a regular cell complex: removing from \(X\) the vertices of color \(\mathbb{Z}_{D+1}\setminus C\) leaves a cavity around each such vertex, bounded by its link, and \(X_C\) is obtained precisely by filling in each such cavity with a single \((i+1)\)-cell.

Having constructed the restricted lattice \(X_C\), we now define a linear map relating the chain spaces of \(X\) and \(X_C\); this map will allow us to relate the color code on \(X\) to a toric code on \(X_C\). The \emph{restriction} \(\pi_C\) is a triple of linear maps \(\pi^{(0)}_C\colon C_{i-1}(X) \to C_{i-1}(X_C)\), \(\pi^{(1)}_C\colon C_D(X) \to C_i(X_C)\), and \(\pi^{(2)}_C\colon C_{D-i-1}(X) \to C_{i+1}(X_C)\), defined as
\begin{align}
    \pi^{(0)}_C(\sigma) &\coloneqq  \begin{cases}
        \sigma & \text{if }\mathrm{color}(\sigma)\subset C
        , \\
        0 & \text{otherwise},
    \end{cases} \\
    \pi^{(1)}_C(\tau) &\coloneqq  \tau_C, \\
    \pi^{(2)}_C(\upsilon) &\coloneqq  \begin{cases}
        \Xi(\upsilon) & \text{if }\mathrm{color}(\upsilon) = \mathbb{Z}_{D+1}\setminus C, \\
        0 & \text{otherwise},
    \end{cases}
\end{align}
where \(\tau_C\) denotes the \(i\)-simplex of color \(C\) belonging to the \(D\)-simplex \(\tau\).

Let \(CC_i(X)\) and \(TC_i(X_C)\) denote the chain complexes of the color code on \(X\) and the toric code on the restricted lattice \(X_C\), respectively. The restriction map \(\pi_C\) defines a morphism between these chain complexes, as expressed by the following commutative diagram.
\begin{align}
\label{eq:commutativity_of_morphisms}
    \begin{CD}
        C_{D-i-1}(X) @>{\partial_{D-i-1, D}}>> C_D(X) @>{\partial_{D,i-1}}>> C_{i-1}(X) \\
        @VV{\pi^{(2)}_C}V @VV{\pi^{(1)}_C}V @VV{\pi^{(0)}_C}V \\
        C_{i+1}(X_C) @>{\partial_{i+1,i}^C}>>C_i(X_C) @>{\partial_{i,i-1}^C}>> C_{i-1}(X_C)
    \end{CD}
\end{align}

Fix a color \(c^*\in\mathbb{Z}_{D+1}\), and let \(\mathcal C_{i,c^*}\) denote the collection of \(i\)-element color sets not containing \(c^*\),
\begin{align}
\label{eq:color_set}
    \mathcal C_{i,c^*} \coloneqq  \{C\subset \mathbb{Z}_{D+1} \mid c^* \notin C \land |C|=i\}.
\end{align}
By Ref.~\cite[Section 4.2]{Kubica:2019qji}, the first homology group of the color code chain complex \(CC_i(X)\) is isomorphic to the product of the first homology groups of the toric code chain complexes \(TC_i(X_C)\) on the restricted lattices \(X_C\) for \(C\in\mathcal C_{i,c^*}\),
\begin{align}
    H_1(CC_i(X)) \cong \prod_{C\in \mathcal C_{i,c^*}} H_1(TC_i(X_{C^*})).
\end{align}
where \(C^*=C \sqcup \{c^*\}\).
In other words, the color code on \(X\) is homologically equivalent to \(|\mathcal{C}_{i,c^*}|=\binom{D}{i}\) copies of the toric code on the restricted lattices \(X_{C^*}\), one for each \(C\in\mathcal{C}_{i,c^*}\). This decomposition is consistent with the number of logical qubits of the color code given in Eq.~\eqref{eq:num_of_logical_qubits_color_code}. By Eq.~\eqref{eq:num_of_logical_qubits_toric_code}, each toric code \(TC_i(X_{C^*})\) encodes \(b_i(X_{C^*})\) logical qubits, and \(b_i(X_{C^*})=b_i(X)\) for every \(C\in\mathcal{C}_{i,c^*}\). Since there are \(\binom{D}{i}\) such copies, the total number of logical qubits is \(k_{CC_i(X)}=\binom{D}{i}b_i(X).\)

\subsection{A lower bound on the distance of hyperbolic toric codes}
\label{sec:distance_lower_bound_of_hyperbolic_toric_code}

We now turn to establishing a lower bound on the code distance of the toric code defined on a hyperbolic manifold, in terms of the systole of the underlying manifold. To define the systole, we briefly recall the notion of the volume of a homology class; see, e.g., Ref.~\cite{Guth:2014zxf} for details. Let \(M\) be a smooth manifold. A \emph{Lipschitz \(i\)-chain} with coefficients in \(\mathbb{Z}_2\) is a finite sum \(\sum_j a_j f_j\), where \(a_j\in\mathbb{Z}_2\) and each \(f_j\) is a Lipschitz map from the standard \(i\)-simplex \(\Delta^i\) to \(M\); we write \(C_{i,\mathrm{Lip}}(M,\mathbb{Z}_2)\) for the resulting chain group, with boundary map \(\partial_{i,\mathrm{Lip}}\) defined by restricting each \(f_j\) to the \((i-1)\)-simplices in \(\partial\Delta^i\), as usual. By Rademacher's theorem~\cite[Theorem 3.1.6]{federerGeometricMeasureTheory1996}, a Lipschitz map \(f_j\colon \Delta^i\to (M,g)\) is differentiable almost everywhere, so that the pullback \(f_j^*g\) is defined almost everywhere on \(\Delta^i\); this allows us to assign to \(f_j\) a volume \(\vol_i(f_j) \coloneqq  \int_{\Delta^i}\sqrt{\det(f_j^*g)}\), and to a chain \(\sum_j a_j f_j\) the volume \(\vol_i\big(\sum_j a_j f_j\big) \coloneqq  \sum_j |a_j|\,\vol_i(f_j)\).

Given a Riemannian manifold \((M,g)\), the \(i\)-dimensional \emph{systole} of \((M,g)\) with coefficients in \(\mathbb{Z}_2\) is the infimal volume of a homologically nontrivial Lipschitz \(i\)-cycle,
\begin{align}
    \sys_i(M,g) \coloneqq  \inf_{\alpha\in \ker \partial_{i,\mathrm{Lip}}\setminus \im \partial_{i+1,\mathrm{Lip}}} \vol_i(\alpha).
\end{align}

The systole gives a lower bound on the code distance of a toric code defined on a closed manifold. While the bound established in Ref.~\cite{Guth:2014zxf} applies to a toric code defined on a triangulation of the manifold, we will need the corresponding bound for a toric code defined on a restricted lattice.

\begin{proposition}[Toric-code distance on restricted lattices bounded below by the systole]
\label{prop:restricted_toric_systole_bound}
Let \(Y_0\) be a finite smooth triangulation of the closed Riemannian manifold \((M_0, g_0)\), and let \(X_0 = \mathrm{sd}(Y_0)\) be its barycentric subdivision. Let \(M\to M_0\) be a finite-sheeted cover, and let \(g\) and \(X\) be the pullbacks of \(g_0\) and \(X_0\), respectively. Then, for every \(1\le j < D\) and every set \(C\subset \mathbb{Z}_{D+1}\) of \(j+1\) colors,
\begin{align}
    d^Z_{TC_j(X_C)} \ \ge\ c_1(M_0,g_0,X_0, j)\,\sys_j(M,g),
\end{align}
where \(c_1(M_0,g_0,X_0, j)>0\) is independent of the cover.
\end{proposition}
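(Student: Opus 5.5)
The plan is to turn a minimum-weight nontrivial $Z$-logical of $TC_j(X_C)$ directly into a homologically nontrivial Lipschitz $j$-cycle on $(M,g)$, with volume at most a cover-independent constant times its weight.

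\textbf{Step 1: chains on $X_C$ are simplicial chains on $X$.} The $j$-cells of $X_C$ are, by definition, $j$-simplices of $X$ whose colors lie in $C$. Hence a $j$-chain $c\in C_j(X_C)$ is literally a simplicial $j$-chain of $X$. Because lower-dimensional cells of $X_C$ are also simplices of $X$, the restricted-lattice boundary $\partial^C_{j,j-1}$ agrees with the simplicial boundary $\partial_{j,j-1}$ of $X$ on such chains. So $c\in\ker\partial^C_{j,j-1}$ is a simplicial cycle of $X$.

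Each simplex of $X$ is a lift of a smooth simplex of $X_0$, and $g$ is the pullback of $g_0$, so the covering map is a local isometry. Therefore every $j$-simplex of $X$ is a Lipschitz singular simplex whose volume is bounded by
\[
V_j\coloneqq\max_{\tau\in\Delta_j(X_0)}\vol_j(\tau),
\]
a finite constant independent of the cover. It follows that $\vol_j(c)\le V_j\,\wt(c)$.

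\textbf{Step 2: a chain map from $X_C$ to $X$.} I will define a chain map $\iota\colon C_\bullet(X_C)\to C_\bullet(X)$ in degrees $\le j+1$.
\begin{itemize}
\item In degrees $\le j$, $\iota$ is the identity on simplices.
\item A $(j+1)$-cell $\Xi(\sigma)$ has boundary equal to the color-$C$ part of $\mathrm{Lk}_j(\sigma)$, which is a colex $j$-sphere. I send $\Xi(\sigma)$ to the simplicial cone $\sum_\tau v\star\tau$, where $v$ is a fixed vertex of $\sigma$ and $\tau$ runs over that sphere.
\end{itemize}
Each $v\star\tau$ is a $(j+1)$-face of a $D$-simplex in $\mathrm{St}_D(\sigma)$. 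Its boundary over $\mathbb F_2$ is the sphere plus the cone on the sphere's boundary, and the latter vanishes. So $\iota$ commutes with boundaries and induces $\iota_*\colon H_j(X_C)\to H_j(X;\mathbb F_2)\cong H_j(M;\mathbb F_2)$.

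\textbf{Step 3 (main obstacle): $\iota_*$ is injective.} I would try first to show surjectivity and then conclude by counting dimensions. The paper already uses $b_j(X_C)=b_j(X)$, and both groups are finite-dimensional $\mathbb F_2$-vector spaces, so a surjective $\iota_*$ is an isomorphism.

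For surjectivity, take any simplicial $j$-cycle of $X$ and push it onto color-$C$ simplices. I would do this inductively over the colors not in $C$. A simplex containing a vertex $w$ of a removed color lies in the star of $w$, and the star of $w$ is a colorable ball. Inside that ball, the part of the chain through $w$ can be replaced by a homologous chain on the boundary sphere $\partial\mathcal B$, which avoids $w$.

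The colex structure (Proposition~\ref{prop:barycentric_subdivision_derive_colex}) is what makes every star such a ball. Some care is needed so that the push-off does not reintroduce earlier removed colors. Processing one color at a time and using the recursive sub-colex structure of the link should handle this. If this route stalls, the fallback is the known statement of Ref.~\cite{Kubica:2019qji} that $X_C$, thickened by its cavities, is homotopy equivalent to the relevant skeleton of $M$; that would give the isomorphism directly.

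\textbf{Step 4: conclusion.} Let $c$ represent a nontrivial class in $H_1(TC_j(X_C))$. By Step 3, $\iota(c)=c$ is a nontrivial simplicial, hence Lipschitz, $j$-cycle on $M$. Therefore
\[
\sys_j(M,g)\le\vol_j(c)\le V_j\,\wt(c).
\]
This gives the claim with $c_1=1/V_j$, which depends only on $(M_0,g_0,X_0,j)$.
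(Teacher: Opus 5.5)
Your Steps 1, 2 and 4 match the paper. Everything reduces to showing that a cycle of $X_C$ which is nontrivial in $X_C$ stays nontrivial in $X$, which is the paper's Lemma~\ref{lem:restricted_cycles}. After that, the volume estimate with $c_1=1/V_j$ is exactly yours. The paper gets this injectivity by quoting part (ii) of the proof of Theorem~1 of Ref.~\cite{Kubica:2019qji}, namely $\ker\partial^C_{j,j-1}\cap\im\partial_{j+1,j}=\im\partial^C_{j+1,j}$. Your substitute, surjectivity of $\iota_*$ plus $b_j(X_C)=b_j(X)$, is logically fine. However, the surjectivity is not proved, and the point you defer is the whole difficulty.

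Suppose colors $r_1,\dots,r_{k-1}$ have been eliminated and you treat a vertex $w$ of color $r_k$. The cycle near $w$ is $w\star a$, with $a$ a $(j-1)$-cycle in the color-restricted link $L$, meaning the simplices of $\mathrm{Lk}(w)$ that avoid $r_1,\dots,r_k$. To push off without reintroducing eliminated colors, you need $a=\partial a'$ with $a'$ supported in $L$, i.e.\ $\tilde H_{j-1}(L;\mathbb F_2)=0$.

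That $\mathrm{Lk}(w)$ is a $(D-1)$-sphere gives this only for $k=1$. For $k\ge 2$, $L$ is a proper color-restricted subcomplex of dimension $D-k\ge j$, and you need its reduced homology to vanish below top degree. This is a genuine lemma: rank selection preserves the Cohen--Macaulay property of balanced complexes. Equivalently, you can run an induction via the exact sequence of the pair $(K,K')$, where $K'$ is $K$ with one color deleted; excision over the disjoint open stars gives $H_*(K,K')\cong\bigoplus_u\tilde H_{*-1}(\mathrm{Lk}_K(u))$. Either way, it requires every iterated vertex link to be a (homology) sphere.

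Proposition~\ref{prop:barycentric_subdivision_derive_colex} does not provide this. Its colorable balls are only the stars of the color-$D$ barycenters $b(\delta)$, not the star of every vertex. For the other removed colors you must argue separately, e.g.\ that a smooth triangulation is combinatorial, so all links in $\mathrm{sd}(Y_0)$ and its lift are spheres.

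The fallback does not close the gap either. A homotopy equivalence between a thickening of $X_C$ and a skeleton of $M$ gives $H_j(X_C)\cong H_j(M)$ only abstractly. That is just the dimension equality you already assume. It does not show that the particular map induced by inclusion is injective, unless you check the equivalence is compatible with $\iota$.

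Your route also needs two external inputs: surjectivity, and $b_j(X_C)=b_j(X)$, which the paper only asserts. The paper's route needs one. The simplest repair is to cite the identity above directly, as the paper does; this also makes the dimension count unnecessary. Otherwise you must supply the color-restricted link vanishing lemma, after which your surjectivity-plus-counting argument goes through.
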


Since our goal is a lower bound on the distance of the color code, and since Theorem~\ref{th:code_distance_lower_bound_XZ} bounds both the \(Z\)-type and \(X\)-type contributions to that distance solely in terms of the minimal weight of nontrivial \emph{logical \(Z\)} operators of toric codes on restricted lattices, it suffices to bound \(d^Z_{TC_i(X_C)}\); no bound on the full toric-code distance \(d_{TC_i(X_C)} = \min(d^Z_{TC_i(X_C)}, d^X_{TC_i(X_C)})\) is needed for this purpose.

Our strategy for proving Proposition~\ref{prop:restricted_toric_systole_bound} is to relate nontrivial logical \(Z\) operators of the toric code \(TC_i(X_C)\) on the restricted lattice to the underlying colex \(X\) itself: we show that every nontrivial \(i\)-cycle of \(TC_i(X_C)\) is, when viewed as an \(i\)-chain for \(X\), again a nontrivial \(i\)-cycle for \(X\). The following lemma establishes this.

\begin{lemma}[Nontrivial cycles of a restricted lattice remain nontrivial in the colex]
\label{lem:restricted_cycles}
Let \(X\) be a \(D\)-colex, let \(1\le i<D\), and let \(C\subset\mathbb Z_{D+1}\) be a set of \(i+1\) colors. Then
\begin{align}
    \ker\partial^C_{i,i-1}\setminus\im\partial^C_{i+1,i} \ \subseteq\ \ker\partial_{i,i-1}\setminus\im\partial_{i+1,i},
\end{align}
where the operators on the left are those of \(X_C\) and those on the right are those of \(X\).
\end{lemma}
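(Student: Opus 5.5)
The inclusion \(\ker\partial^C_{i,i-1}\subseteq\ker\partial_{i,i-1}\) is immediate, so the real work is showing that the induced map \(H_i(X_C;\mathbb F_2)\to H_i(X;\mathbb F_2)\) is injective. I plan to get injectivity from surjectivity together with the equality of Betti numbers \(b_i(X_C)=b_i(X)\) recalled in Sec.~\ref{sec:color-to-toric-morphism}.

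\emph{Cycle condition.} The \(i\)-cells and \((i-1)\)-cells of \(X_C\) are the \(i\)- and \((i-1)\)-simplices of \(X\) whose colour sets lie in \(C\). Every \((i-1)\)-face of an \(i\)-simplex \(\tau\) with \(\mathrm{color}(\tau)=C\) also has colours in \(C\). Hence \(\partial^C_{i,i-1}\tau=\partial_{i,i-1}\tau\) for every \(i\)-cell, and so \(\ker\partial^C_{i,i-1}\subseteq\ker\partial_{i,i-1}\) as subspaces of \(C_i(X)\).

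\emph{A chain map \(\iota\colon TC_i(X_C)\to C_\bullet(X)\).} In degrees \(\le i\) I take \(\iota\) to be the inclusion. For an \((i+1)\)-cell \(\Xi(\sigma)\), where \(\sigma\) has colour set \(\mathbb Z_{D+1}\setminus C\), I pick a vertex \(v\) of \(\sigma\) and set \(\iota(\Xi(\sigma)) \coloneqq \sum_{\tau} v\star\tau\). The sum runs over the \(i\)-simplices \(\tau\) of colour \(C\) in \(\mathrm{Lk}_i(\sigma)\); these form \(\partial^C_{i+1,i}\Xi(\sigma)\).

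Each \(v\star\tau\) is a genuine \((i+1)\)-simplex of \(X\), because \(\tau\) and \(v\) lie in a common \(D\)-simplex of \(\mathrm{St}_D(\sigma)\). Since \(\partial\Xi(\sigma)\) is an \(i\)-cycle, we get \(\partial(v\star z)=z+v\star\partial z=z\) with \(z=\partial\Xi(\sigma)\). So \(\iota\) commutes with the boundary and in particular \(\iota(\im\partial^C_{i+1,i})\subseteq\im\partial_{i+1,i}\), giving a well-defined map \(\iota_*\colon H_i(X_C)\to H_i(X)\). The lemma is exactly the statement that \(\iota_*\) is injective.

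\emph{Injectivity via surjectivity.} Both \(H_i\) are finite-dimensional of the same dimension, so it suffices to show \(\iota_*\) is onto. In other words, every \(i\)-cycle \(z\) of \(X\) should be homologous in \(X\) to a cycle supported on \(C\)-coloured simplices. I would prove this by eliminating the foreign colours one at a time.

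Take a vertex \(w\) whose colour is not in \(C\). Let \(z_w\) be the part of \(z\) consisting of simplices containing \(w\), and write \(z_w=w\star y\) with \(y\) a chain in the link of \(w\). The cycle condition at \(w\) forces \(y\) to be an \((i-1)\)-cycle in \(\mathrm{Lk}(w)\). By the colex structure, \(\mathrm{Lk}(w)\) is a \((D-1)\)-colex homeomorphic to \(S^{D-1}\), which is the boundary \(Y\) of the colourable ball \(\mathcal B_{w,Y}\). For \(1\le i-1<D-1\) the cycle \(y\) bounds in \(\mathrm{Lk}(w)\), say \(y=\partial u\). When \(i=1\), the \(0\)-cycle \(y\) has even size, so the augmentation is zero and it still bounds. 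Replacing \(z\) by \(z+\partial(w\star u)\) then removes \(w\) from the support without introducing any new vertex of \(w\)'s colour, since \(u\) lives in the link.

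Doing this for every vertex of a given foreign colour, and then colour by colour, produces a homologous cycle whose simplices use only colours in \(C\). Such a cycle lies in \(\ker\partial^C_{i,i-1}\), which proves surjectivity.

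\emph{Main obstacle.} The delicate point is this colour-elimination step. One must check that removing one foreign colour never reintroduces a colour already eliminated. Doing the eliminations in a fixed order of colours, and using that \(u\) is supported in the link of \(w\) (so it inherits only colours present there), should handle it. Before relying on the equality \(b_i(X_C)=b_i(X)\) quoted earlier, I would also double-check that it is consistent with this argument. If that equality needed independent justification, my fallback is to construct a chain retraction \(r\colon C_\bullet(X)\to TC_i(X_C)\) in degrees \(i,i+1\) with \(r\circ\iota=\mathrm{id}\), built from the same cone construction.
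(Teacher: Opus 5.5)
The parts of your setup that you carry out are correct. \(\partial^C_{i,i-1}\) is the restriction of \(\partial_{i,i-1}\). Your cone construction, with \(\partial(v\star z)=z\), shows \(\im\partial^C_{i+1,i}\subseteq\im\partial_{i+1,i}\), which the paper leaves implicit. So \(\iota_*\) is well defined. The paper obtains the nontrivial direction in one line, by citing the identity \(\ker\partial^C_{i,i-1}\cap\im\partial_{i+1,i}=\im\partial^C_{i+1,i}\) from part (ii) of the proof of Theorem~1 of Ref.~\cite{Kubica:2019qji}.

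You replace that citation with surjectivity by color elimination, and this has a genuine gap exactly where you suspected; your proposed remedy does not close it. Every \(D\)-simplex through \(w\) has one vertex of each color. So \(\mathrm{Lk}(w)\) contains vertices of \emph{every} color except \(\mathrm{color}(w)\), including the foreign colors you have already eliminated. A filling \(u\) of \(y\) chosen merely inside \(\mathrm{Lk}(w)\) will in general reintroduce them, and fixing an elimination order does nothing to prevent this. The issue is vacuous only when \(D-i=1\), i.e.\ a single foreign color, whereas the paper needs \(2\le i\le D-2\).

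What the step actually requires is a filling inside the full subcomplex of \(\mathrm{Lk}(w)\) spanned by the still-allowed colors \(A\setminus\{\mathrm{color}(w)\}\), where \(A\supseteq C\). That is, you need \(\tilde H_{i-1}\) of this color-restricted link to vanish (connectedness when \(i=1\)). Since at least \(i+1\) colors remain, this holds when vertex links are \(\mathbb F_2\)-homology spheres. It can be shown by induction on dimension, running the same elimination inside the link, or by the rank-selection theorem for balanced Cohen--Macaulay complexes. But it is the real topological content of surjectivity, and your plan does not supply it.

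Two further points. First, under the paper's colex definition, \(\mathrm{Lk}(w)\) is the sphere \(Y\) of a colorable ball \(\mathcal B_{w,Y}\) only when \(w\) is a ball center; in \(\mathrm{sd}(Y_0)\) these are the color-\(D\) barycenters. For other vertices you must use that \(X\) triangulates a closed manifold. This holds in the application but is not given by the colex axioms as stated.

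Second, and more important, your dimension count \(b_i(X_C)=b_i(X)\) is stated in Sec.~\ref{sec:color-to-toric-morphism} only as a consequence of the same isomorphism of Ref.~\cite{Kubica:2019qji}. Given surjectivity, it is equivalent to the injectivity you want. So your route is no more self-contained than the paper's unless you prove that equality too.

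One way to get both surjectivity and injectivity at once is as follows. Let \(K\) be the full subcomplex on \(\mathbb Z_{D+1}\setminus C\), of dimension \(D-i-1\). Then \(|X|\setminus|K|\) deformation retracts onto the \(C\)-colored subcomplex. Poincar\'e--Lefschetz duality identifies \(H_i(M\setminus K)\to H_i(M)\) with \(H^{D-i}(M,K)\to H^{D-i}(M)\). This map is onto because \(H^{D-i}(K)=0\). Its kernel \(\delta H^{D-i-1}(K)\) corresponds to the span of the link cycles \(\partial\Xi(\sigma)\) of the top simplices \(\sigma\) of \(K\). That is precisely the identity the paper cites.
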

\begin{proof}
By construction of the restricted lattice, \(C_i(X_C)\) is the subspace of \(C_i(X)\) spanned by the \(i\)-simplices of color \(C\), and every face of such a simplex again has color contained in \(C\); hence \(\partial^C_{i,i-1}\) is the restriction of \(\partial_{i,i-1}\) to \(C_i(X_C)\), and \(\ker\partial^C_{i,i-1}\subseteq\ker\partial_{i,i-1}\). Moreover, part (ii) of the proof of Theorem 1 of Ref.~\cite{Kubica:2019qji} establishes
\begin{align}
    \ker\partial^C_{i,i-1}\cap\im\partial_{i+1,i} = \im\partial^C_{i+1,i},
\end{align}
so an element of \(\ker\partial^C_{i,i-1}\) lying in \(\im\partial_{i+1,i}\) would already lie in \(\im\partial^C_{i+1,i}\).
\end{proof}

\begin{proof}[Proof of Proposition~\ref{prop:restricted_toric_systole_bound}]
Let \(\alpha\in\ker\partial^C_{j,j-1}\setminus\im\partial^C_{j+1,j}\). By Lemma~\ref{lem:restricted_cycles}, \(\alpha\) is a \(j\)-cycle in \(X\) that is not a \(j\)-boundary in \(X\). Since \(X\) is a smooth triangulation of \(M\), the characteristic map of each simplex is Lipschitz, so \(\alpha\in C_{j,\mathrm{Lip}}(M,\mathbb Z_2)\) with \(\partial_{j,\mathrm{Lip}}\alpha=\partial_{j,j-1}\alpha=0\); since simplicial and Lipschitz homology of \(M\) with \(\mathbb Z_2\) coefficients agree, \(\alpha\in\ker\partial_{j,\mathrm{Lip}}\setminus\im\partial_{j+1,\mathrm{Lip}}\). Hence
\begin{align}
    \vol_j(\alpha)\ \ge\ \sys_j(M,g).
\end{align}
On the other hand, \(\alpha\) consists of \(\wt(\alpha)\) simplices, each of volume at most \(\max_{\tau\in\Delta_j(X)}\vol_j(\tau)\), which equals \(\max_{\tau\in\Delta_j(X_0)}\vol_j(\tau)\) because the covering map is a local isometry and \(X\) is the pullback of \(X_0\). Therefore \(\vol_j(\alpha)\le (\max_{\tau\in\Delta_j(X_0)}\vol_j(\tau))\wt(\alpha)\), and combining the two displays gives \(\wt(\alpha)\ge c_1(M_0,g_0,X_0,j)\sys_j(M,g)\). Taking the minimum over \(\alpha\) yields the claim.
\end{proof}

This proposition reduces the problem of lower-bounding the code distance to that of lower-bounding the systole of the underlying manifold. In Sec.~\ref{sec:explicit-lower-bounds}, we apply this result to the specific family of hyperbolic manifolds \(M_N\) constructed in Sec.~\ref{sec:hyperbolic-manifold}, and derive an explicit polynomial lower bound on the code distance in terms of the number of physical qubits.

\subsection{Bounding the color-code distance with the toric code distance}
\label{sec:color-code-lower-bound}

Having established a lower bound on the toric code distance in terms of the systole, we now turn to our main result: a lower bound on the color-code distance in terms of the toric code distance.

\begin{theorem}[Color-code distance from toric-code distance]
\label{th:code_distance_lower_bound_XZ}
Let \(X\) be a \(D\)-colex. Then the \(Z\)-type and \(X\)-type contributions to the code distance of \(CC_i(X)\) satisfy
\begin{align}
    d^{Z}_{CC_i(X)} &\ge \max_{c^*\in\mathbb{Z}_{D+1}} \ \min_{C\in\mathcal C_{i,c^*}} \ d^Z_{TC_i(X_{C\sqcup\{c^*\}})}, \\
    d^{X}_{CC_i(X)} &\ge \max_{c^*\in\mathbb{Z}_{D+1}} \ \min_{C\in\mathcal C_{D-i,c^*}} \ d^Z_{TC_{D-i}(X_{C\sqcup\{c^*\}})},
\end{align}
where \(d^Z_{TC_j(\mathcal L)} \coloneqq  \min_{\alpha\in\ker\partial^{\mathcal L}_{j,j-1}\setminus\im\partial^{\mathcal L}_{j+1,j}}\wt(\alpha)\) denotes the minimal weight of a nontrivial logical \(Z\) operator of the type-\(j\) toric code on \(\mathcal L\).
\end{theorem}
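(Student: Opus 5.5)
The plan is to push a minimum-weight nontrivial color-code logical operator through the restriction maps $\pi_{C^*}$ of Sec.~\ref{sec:color-to-toric-morphism}. The argument has two ingredients: an injectivity property on homology, and the fact that $\pi^{(1)}_{C^*}$ never increases Hamming weight.

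\textbf{Step 1 (the $Z$-type bound).} Fix an arbitrary $c^*\in\mathbb Z_{D+1}$ and let $c\in\ker\partial_{D,i-1}\setminus\im\partial_{D-i-1,D}$ have weight $d^Z_{CC_i(X)}$. By the commutative diagram~\eqref{eq:commutativity_of_morphisms}, each $\pi_{C^*}$ with $C^*=C\sqcup\{c^*\}$ is a chain map. Hence $\pi^{(1)}_{C^*}$ sends cycles to cycles and boundaries to boundaries, and induces a map $(\pi_{C^*})_*\colon H_1(CC_i(X))\to H_1(TC_i(X_{C^*}))$.

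I then claim that the combined map $\bigoplus_{C\in\mathcal C_{i,c^*}}(\pi_{C^*})_*$ is injective. Granting this, since $[c]\neq0$ there is some $C\in\mathcal C_{i,c^*}$ with $\pi^{(1)}_{C^*}(c)\in\ker\partial^{C^*}_{i,i-1}\setminus\im\partial^{C^*}_{i+1,i}$. This gives $\wt(\pi^{(1)}_{C^*}(c))\ge d^Z_{TC_i(X_{C^*})}$.

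\textbf{Step 2 (weight does not increase).} By definition, $\pi^{(1)}_{C^*}$ sends each $D$-simplex $\tau$ to the single $i$-simplex $\tau_{C^*}$. The image of a chain is therefore a sum of at most $\wt(c)$ basis elements, possibly with cancellations, so $\wt(\pi^{(1)}_{C^*}(c))\le\wt(c)$. Combining the two steps gives
\[
d^Z_{CC_i(X)}\ \ge\ \min_{C\in\mathcal C_{i,c^*}} d^Z_{TC_i(X_{C\sqcup\{c^*\}})}.
\]
Since $c^*$ was arbitrary, we may take the maximum over $c^*$.

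\textbf{Step 3 (the $X$-type bound by duality).} The dual complex of $CC_i(X)$ is
\[
C_{i-1}(X)\xrightarrow{\partial_{i-1,D}}C_D(X)\xrightarrow{\partial_{D,D-i-1}}C_{D-i-1}(X),
\]
using $\partial_{a,b}=\partial_{b,a}^T$. This is literally $CC_{D-i}(X)$ as defined in Eq.~\eqref{eq:color-code-chain-complex}. Hence $d^X_{CC_i(X)}=d^Z_{CC_{D-i}(X)}$, and applying Steps 1--2 with $i$ replaced by $D-i$ yields the second inequality with $\mathcal C_{D-i,c^*}$.

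\textbf{Main obstacle.} The hard step is the injectivity claim in Step 1. The cited isomorphism $H_1(CC_i(X))\cong\prod_C H_1(TC_i(X_{C^*}))$ from Ref.~\cite{Kubica:2019qji} is stated abstractly, but I need it to be induced by the restriction maps themselves. An isomorphism that exists only by a dimension count would not suffice.

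My first attempt would be to follow the proof in Ref.~\cite[Theorem~1]{Kubica:2019qji} and verify two facts directly.
\begin{itemize}
\item \emph{Kernel statement.} If $\pi^{(1)}_{C^*}(c)\in\im\partial^{C^*}_{i+1,i}$ for every $C\in\mathcal C_{i,c^*}$, then $c$ can be modified by color-code $Z$-stabilizers $\partial_{D-i-1,D}(\upsilon)$ until every restriction vanishes. For this I would use the locality of the stabilizers: a $(D-i-1)$-simplex $\upsilon$ with $\mathrm{color}(\upsilon)=\mathbb Z_{D+1}\setminus C^*$ restricts to the cell $\Xi(\upsilon)$ in $X_{C^*}$.
\item \emph{Vanishing statement.} A color-code cycle with all restrictions identically zero is a color-code boundary. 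I would argue this locally, ball by ball, using the colex structure.
\end{itemize}
If this direct verification is too heavy, the fallback is to combine surjectivity of $\bigoplus(\pi_{C^*})_*$ with the dimension equality $\binom{D}{i}b_i(X)$ from Eq.~\eqref{eq:num_of_logical_qubits_color_code}. This reduces injectivity to surjectivity, and surjectivity can be obtained by explicitly lifting toric-code cycles to the colex.
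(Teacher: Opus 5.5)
Your proposal is correct and follows essentially the same route as the paper: it picks a minimum-weight nontrivial cycle and uses injectivity of the restriction-induced map \(H_1(CC_i(X))\to\prod_{C\in\mathcal C_{i,c^*}}H_1(TC_i(X_{C\sqcup\{c^*\}}))\), which the paper simply quotes from Ref.~\cite[Section 4.2]{Kubica:2019qji} as an isomorphism induced by the \(\pi^{(1)}_{C\sqcup\{c^*\}}\), so your ``main obstacle'' is settled there by citation rather than re-derivation; it then applies the weight bound of Lemma~\ref{lem:weight_bound} and handles the \(X\)-type case by duality. Your explicit identification of the dual complex of \(CC_i(X)\) with \(CC_{D-i}(X)\) is exactly what the paper's brief ``by duality'' step implicitly relies on.
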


We prove this theorem using two lemmas. The first shows that any nontrivial logical \(Z\) operator of the color code restricts, under the morphism \(\pi_C\) for some choice of colors \(C\), to a nontrivial logical \(Z\) operator of the corresponding toric code. The second shows that this restriction does not increase the weight of the operator.

\begin{lemma}[A nontrivial color-code class can be restricted to a nontrivial toric-code class]
\label{lem:nontrivial_cycle_under_morphism}
    Let \(X\) be a \(D\)-colex, and fix a color \(c^*\in\mathbb{Z}_{D+1}\). Let \(\mathcal C_{i,c^*}\) be the collection of subsets of \(\mathbb{Z}_{D+1}\setminus\{c^*\}\) of size \(i\). Then, for every nontrivial class \([\lambda]\in \ker\partial_{D,i-1}/\im\partial_{D-i-1,D}\) on \(X\), there exists a color set \(C\in\mathcal C_{i,c^*}\) such that
    \begin{align}
        [\pi^{(1)}_{C\sqcup \{c^*\}} (\lambda)] \neq 0
    \end{align}
    in \(H_1(TC_i(X_{C\sqcup\{c^*\}}))\).
\end{lemma}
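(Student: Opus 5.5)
We argue by contraposition, using the isomorphism of Ref.~\cite[Section 4.2]{Kubica:2019qji} recalled in Sec.~\ref{sec:color-to-toric-morphism}. First, the commutative diagram~\eqref{eq:commutativity_of_morphisms} shows that each $\pi_{C^*}$, with $C^*=C\sqcup\{c^*\}$, is a chain map. It therefore induces a well-defined linear map on first homology,
\begin{align}
    (\pi_{C^*})_*\colon H_1(CC_i(X)) \to H_1(TC_i(X_{C^*})),\qquad [\lambda]\mapsto[\pi^{(1)}_{C^*}(\lambda)].
\end{align}
Indeed, $\pi^{(1)}_{C^*}$ sends $\ker\partial_{D,i-1}$ into $\ker\partial^{C^*}_{i,i-1}$ and sends $\im\partial_{D-i-1,D}$ into $\im\partial^{C^*}_{i+1,i}$. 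Collecting these maps gives the product map
\begin{align}
    \Pi\coloneqq \prod_{C\in\mathcal C_{i,c^*}}(\pi_{C^*})_*\colon H_1(CC_i(X))\to\prod_{C\in\mathcal C_{i,c^*}}H_1(TC_i(X_{C^*})).
\end{align}
The lemma is exactly the statement that $\Pi$ is injective. If $\Pi$ is injective and $[\lambda]\neq0$, then $\Pi[\lambda]\neq0$, so some component $[\pi^{(1)}_{C^*}(\lambda)]$ is nonzero.

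The second step is to establish this injectivity. We would invoke the cited result that the isomorphism $H_1(CC_i(X))\cong\prod_C H_1(TC_i(X_{C^*}))$ is realized precisely by $\Pi$, and not merely by an abstract isomorphism. If the reference is read only as giving equality of dimensions, we would instead prove directly that $\Pi$ is surjective. Equal finite dimensions, namely $\binom{D}{i}b_i(X)$ on both sides by Eq.~\eqref{eq:num_of_logical_qubits_color_code} and $b_i(X_{C^*})=b_i(X)$, then turn surjectivity into injectivity.

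The surjectivity argument runs as follows. Given a class in $H_1(TC_i(X_{C'^*}))$ for one fixed $C'$, represented by an $i$-cycle $\alpha$ in $X_{C'^*}$, we lift it to a color-code cycle $\lambda\in C_D(X)$. The candidate lift is the set of $D$-simplices whose $C'^*$-colored $i$-face lies in $\alpha$ and which satisfy a suitable parity or selection rule on the remaining colors. We then check three things.
\begin{itemize}
    \item $\lambda$ lies in $\ker\partial_{D,i-1}$. This holds because each $(i-1)$-simplex sees an even number of such $D$-simplices, by the colex link structure.
    \item $\pi^{(1)}_{C'^*}(\lambda)=\alpha$ up to a boundary.
    \item $\pi^{(1)}_{C^*}(\lambda)$ is homologically trivial for every other $C\neq C'$.
\end{itemize}

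The third check is the main obstacle. It requires designing the lift so that its projections onto the other color sets are boundaries in the corresponding restricted lattices. The natural first attempt is the construction in Kubica--Beverland's proof: take $\lambda$ to be the $D$-simplices in the stars of the simplices of $\alpha$ whose opposite vertices avoid a chosen color. One then verifies triviality of the other projections using the recursive colorable-ball structure, exploiting that links in a colex are colexes homeomorphic to spheres, which have vanishing middle homology. If this lifting proves too delicate, we fall back on citing part (i) of the proof of Theorem~1 of Ref.~\cite{Kubica:2019qji}, which asserts exactly that the map induced by the restrictions is an isomorphism.
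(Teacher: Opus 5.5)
Your proposal is correct and follows the paper's proof. Both reduce the lemma to injectivity of the product of induced restriction maps $\Pi=\bigl((\pi_{C\sqcup\{c^*\}})_*\bigr)_{C\in\mathcal C_{i,c^*}}$, obtain that injectivity by citing the isomorphism of Ref.~\cite[Section~4.2]{Kubica:2019qji}, and conclude that some component of $\Pi[\lambda]$ must be nonzero. Your explicit check that each $\pi_{C\sqcup\{c^*\}}$ induces a well-defined map on homology fills in a step the paper leaves implicit, whereas your surjectivity-plus-dimension-count fallback is only a sketch (the selection rule and the triviality of the other projections are never established) and is not needed once the citation is used.
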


\begin{proof}
    By Ref.~\cite[Section 4.2]{Kubica:2019qji}, the map
    \begin{align}
        \pi \coloneqq  \big(\pi^{(1)}_{C\sqcup\{c^*\}}\big)_{C\in\mathcal C_{i,c^*}}
    \end{align}
    is an isomorphism from \(H_1(CC_i(X))\) to \(\prod_{C\in\mathcal C_{i,c^*}} H_1(TC_i(X_{C\sqcup\{c^*\}}))\). In particular, \(\pi\) is injective, so \([\lambda]\neq 0\) implies \(\pi([\lambda])\neq 0\) in the product. Since an element of a product of vector spaces is nonzero if and only if at least one of its components is nonzero, there exists \(C\in\mathcal C_{i,c^*}\) such that the \(C\)-component of \(\pi([\lambda])\), namely \([\pi^{(1)}_{C\sqcup\{c^*\}}(\lambda)]\), is nonzero.
\end{proof}

\begin{lemma}[The restriction map does not increase weight]
\label{lem:weight_bound}
Let \(X\) be a \(D\)-colex, and let \(C\subset\mathbb Z_{D+1}\) be a set of \(i+1\) colors. Then, for every \(D\)-chain \(\lambda\in C_D(X)\),
\begin{align}
    \mathrm{wt}\bigl(\pi_C^{(1)}(\lambda)\bigr)\le\mathrm{wt}(\lambda).
\end{align}
\end{lemma}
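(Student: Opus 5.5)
The plan is to read the inequality off directly from the definition of \(\pi^{(1)}_C\) as an \(\mathbb F_2\)-linear map, using only the fact that it sends each basis element to a single basis element. Write \(\lambda=\sum_{\tau\in S}\tau\), where \(S\subseteq\Delta_D(X)\) is the support of \(\lambda\), so that \(\wt(\lambda)=|S|\). By linearity and the definition \(\pi^{(1)}_C(\tau)=\tau_C\),
\begin{align}
    \pi^{(1)}_C(\lambda)=\sum_{\tau\in S}\tau_C .
\end{align}
The first step is to check that \(\tau_C\) is well defined and is a single basis element of \(C_i(X_C)\). Since \(X\) is \((D+1)\)-colorable, the \(D+1\) vertices of a \(D\)-simplex \(\tau\) carry all \(D+1\) colors, each exactly once. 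Hence there is exactly one vertex of \(\tau\) of each color in \(C\), and these \(i+1\) vertices span a unique \(i\)-face \(\tau_C\) with \(\mathrm{color}(\tau_C)=C\). By the definition of \(\Lambda_i(X_C)\), this face \(\tau_C\) is an \(i\)-cell of \(X_C\).

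The second step is to bound the support of the sum. The support of \(\pi^{(1)}_C(\lambda)\) consists of those \(i\)-cells \(\rho\) for which \(|\{\tau\in S\mid\tau_C=\rho\}|\) is odd. Every such \(\rho\) lies in the image of the map \(S\to\Lambda_i(X_C)\), \(\tau\mapsto\tau_C\). This image has cardinality at most \(|S|\). Therefore
\begin{align}
    \wt\bigl(\pi^{(1)}_C(\lambda)\bigr)\le|\{\tau_C\mid\tau\in S\}|\le|S|=\wt(\lambda).
\end{align}
Many \(D\)-simplices may share the same \(\tau_C\), and their contributions can cancel mod 2, but both effects only reduce the weight further.

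There is no real obstacle here. The one point needing care is the first step: the coloring must be proper on all of \(\tau\), so that \(\tau_C\) exists and is unique. This follows directly from the colex's proper \((D+1)\)-coloring stated in Sec.~\ref{sec:colexes}.
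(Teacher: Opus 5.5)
Your proof is correct and follows essentially the same route as the paper. Both expand \(\lambda\) over its support, apply linearity of \(\pi^{(1)}_C\), and observe that the support of the image lies in \(\{\tau_C \mid \tau\in\operatorname{supp}(\lambda)\}\), which has at most \(\wt(\lambda)\) elements. Your additional check that \(\tau_C\) exists and is unique, via the proper \((D+1)\)-coloring, is a detail the paper takes for granted in its definition of \(\pi^{(1)}_C\).
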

\begin{proof}
    Write \(\lambda = \sum_{\delta\in\Delta_D(X)} a_\delta\,\delta\) with \(a_\delta\in\mathbb F_2\), and let \(\operatorname{supp}(\lambda) \coloneqq  \{\delta\in\Delta_D(X)\mid a_\delta=1\}\), so that \(\mathrm{wt}(\lambda) = |\operatorname{supp}(\lambda)|\). Recall that \(\pi_C^{(1)}\) sends each \(D\)-simplex \(\delta\) to the unique \(i\)-simplex \(\delta_C\) of color \(C\) belonging to \(\delta\), so that
    \begin{align}
        \pi_C^{(1)}(\lambda) = \sum_{\delta\in\operatorname{supp}(\lambda)} \delta_C.
    \end{align}
    Every term appearing in this sum lies in the image of \(\operatorname{supp}(\lambda)\) under \(\delta\mapsto\delta_C\), and hence
    \begin{align}
        \operatorname{supp}\bigl(\pi_C^{(1)}(\lambda)\bigr) \subseteq \{\delta_C\mid \delta\in\operatorname{supp}(\lambda)\}.
    \end{align}
    Since the right-hand side has at most \(|\operatorname{supp}(\lambda)|\) elements, we conclude
    \begin{align}
        \mathrm{wt}\bigl(\pi_C^{(1)}(\lambda)\bigr) &= \bigl|\operatorname{supp}\bigl(\pi_C^{(1)}(\lambda)\bigr)\bigr| \\
        &\le\bigl|\{\delta_C\mid \delta\in\operatorname{supp}(\lambda)\}\bigr| \\
        &\le|\operatorname{supp}(\lambda)|\\
        &=\mathrm{wt}(\lambda),
    \end{align}
    as claimed.
\end{proof}

\begin{proof}[Proof of Theorem~\ref{th:code_distance_lower_bound_XZ}]
    If \(H_1(CC_i(X))=0\), then there is no nontrivial logical \(Z\) operator, \(d^Z_{CC_i(X)}\) is undefined (or conventionally infinite), and the bound holds trivially. Assume henceforth \(H_1(CC_i(X))\neq 0\). Since \(X\) is a finite colex, the set \(\ker\partial_{D,i-1}\setminus\im\partial_{D-i-1,D}\) is finite and nonempty, so the minimum defining \(d^Z_{CC_i(X)}\) is attained; choose \(\lambda_0\) realizing it, so that \(d^Z_{CC_i(X)} = \wt(\lambda_0)\) and \([\lambda_0]\neq 0\) in \(H_1(CC_i(X))\).

    Fix a color \(c^*\in\mathbb{Z}_{D+1}\). By Lemma~\ref{lem:nontrivial_cycle_under_morphism}, there exists \(C_0\in\mathcal C_{i,c^*}\) such that
    \begin{align}
        [\pi^{(1)}_{C_0\sqcup\{c^*\}}(\lambda_0)] \neq 0 \quad \text{in } H_1(TC_i(X_{C_0\sqcup\{c^*\}})),
    \end{align}
    i.e., \(\pi^{(1)}_{C_0\sqcup\{c^*\}}(\lambda_0)\) is a nontrivial logical \(Z\) operator of the type-\(i\) toric code on the restricted lattice \(X_{C_0\sqcup\{c^*\}}\). By definition of the toric-code distance,
    \begin{align}
        \wt\bigl(\pi^{(1)}_{C_0\sqcup\{c^*\}}(\lambda_0)\bigr) \ge d_{TC_i(X_{C_0\sqcup\{c^*\}})}^Z.
    \end{align}
    On the other hand, Lemma~\ref{lem:weight_bound} gives
    \begin{align}
        \wt(\lambda_0) \ge \wt\bigl(\pi^{(1)}_{C_0\sqcup\{c^*\}}(\lambda_0)\bigr).
    \end{align}
    Combining the two inequalities, and using that \(C_0\) is one particular element of \(\mathcal C_{i,c^*}\),
    \begin{align}
        d^Z = \wt(\lambda_0) \ge d^Z_{TC_i(X_{C_0\sqcup\{c^*\}})} \ge \min_{C\in\mathcal C_{i,c^*}} d^Z_{TC_i(X_{C\sqcup\{c^*\}})}.
    \end{align}
    Since \(c^*\in\mathbb{Z}_{D+1}\) was an arbitrary fixed color, this bound holds for every choice of \(c^*\), and hence
    \begin{align}
        d^Z \ge \max_{c^*\in\mathbb{Z}_{D+1}} \min_{C\in\mathcal C_{i,c^*}} d^Z_{TC_i(X_{C\sqcup\{c^*\}})}.
    \end{align}

    By duality, the same argument applied to \(H_1(CC_{D-i}(X))\) yields
    \begin{align}
        d^X \ge \max_{c^*\in\mathbb{Z}_{D+1}} \min_{C\in\mathcal C_{D-i,c^*}} d^Z_{TC_{D-i}(X_{C\sqcup\{c^*\}})}
    \end{align}
    as well.
\end{proof}

In the corollary below and for the remainder of the paper, we abbreviate \(c_1(M_0,g_0,X_0,j)\) from Proposition~\ref{prop:restricted_toric_systole_bound} as \(c_1(j)\), suppressing the dependence on \(M_0,g_0,X_0\) where it is clear from the context.

\begin{corollary}[Color-code distance bounded below by the systole]
\label{cor:color-code-distance-bounded-by-the-systole}
With the notation of Proposition~\ref{prop:restricted_toric_systole_bound}, the color code on \(X\) satisfies
\begin{align}
    d_{CC_i(X)} \ge\min\bigl(c_1(i)\sys_i(M,g),c_1(D-i) \sys_{D-i}(M,g)\bigr).
\end{align}
\end{corollary}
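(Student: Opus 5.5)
The plan is to combine Theorem~\ref{th:code_distance_lower_bound_XZ} with Proposition~\ref{prop:restricted_toric_systole_bound}, treating the two sectors separately. We use \(d_{CC_i(X)}=\min(d^Z_{CC_i(X)},d^X_{CC_i(X)})\), so it suffices to show
\[
d^Z_{CC_i(X)}\ge c_1(i)\sys_i(M,g)
\qquad\text{and}\qquad
d^X_{CC_i(X)}\ge c_1(D-i)\sys_{D-i}(M,g).
\]

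For the \(Z\)-sector, fix any color \(c^*\); the maximum in Theorem~\ref{th:code_distance_lower_bound_XZ} is at least the value at this \(c^*\). Then
\[
d^Z_{CC_i(X)}\ \ge\ \min_{C\in\mathcal C_{i,c^*}} d^Z_{TC_i(X_{C\sqcup\{c^*\}})}.
\]
Each \(C\sqcup\{c^*\}\) is a set of \(i+1\) colors, with \(1\le i<D\). Proposition~\ref{prop:restricted_toric_systole_bound}, applied with \(j=i\), bounds every term of this minimum below by \(c_1(i)\sys_i(M,g)\). The constant \(c_1(i)\) is uniform in \(C\): the proposition's constant depends only on \((M_0,g_0,X_0,j)\), and its proof takes the maximum simplex volume over all of \(\Delta_j(X_0)\), not over a color class. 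Hence the minimum over \(C\) inherits the same bound.

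For the \(X\)-sector, we repeat the argument with \(D-i\) in place of \(i\), using the second inequality of Theorem~\ref{th:code_distance_lower_bound_XZ}. Since \(1\le i\le D-1\), we also have \(1\le D-i<D\). Each \(C\sqcup\{c^*\}\) with \(C\in\mathcal C_{D-i,c^*}\) has \(D-i+1\) colors, so Proposition~\ref{prop:restricted_toric_systole_bound} with \(j=D-i\) gives
\[
d^X_{CC_i(X)}\ge c_1(D-i)\sys_{D-i}(M,g).
\]
Taking the minimum of the two sector bounds yields the corollary.

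There is no real obstacle here. The only points to check are the degenerate case and the index ranges. If \(H_1\) vanishes, the distance is conventionally infinite and the bound holds trivially. The index ranges of Proposition~\ref{prop:restricted_toric_systole_bound} are exactly those required, and the constant \(c_1(j)\) is independent of both the color set and the cover.
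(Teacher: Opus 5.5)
Your proposal is correct and matches the paper's proof. Like the paper, you apply Theorem~\ref{th:code_distance_lower_bound_XZ} sector by sector, bound each restricted-lattice term via Proposition~\ref{prop:restricted_toric_systole_bound} with \(j=i\) and \(j=D-i\), and take the minimum; your remarks on the uniformity of \(c_1(j)\) over color sets and on the degenerate case are accurate details the paper leaves implicit.
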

\begin{proof}
Apply Theorem~\ref{th:code_distance_lower_bound_XZ} and then Proposition~\ref{prop:restricted_toric_systole_bound} with \(j=i\) to bound \(d^Z\), and with \(j=D-i\) to bound \(d^X\); note \(1\le D-i<D\). The results follows from \(d_{CC_i(X)}=\min(d^Z,d^X)\).
\end{proof}

The corollary above reduces the problem of lower-bounding the color-code distance to the geometry of the underlying manifold. For any sequence of finite covers \(M\to M_0\), lower bounds on \(\sys_i(M,g)\) and \(\sys_{D-i}(M,g)\) yield a lower bound on \(d_{CC_i(X)}\), with a constant depending only on the fixed \((M_0,g_0,X_0)\) and not on the cover.

\section{Explicit Lower Bounds on the Parameters of Hyperbolic Color Codes}
\label{sec:explicit-lower-bounds}

In this section, we derive explicit lower bounds on the parameters of the hyperbolic color codes constructed above. We first use the systolic bound of Ref.~\cite{kimSystoleLocallySymmetric2020} to obtain an explicit polynomial lower bound on the code distance in Sec.~\ref{sec:lower-bounds-on-the-code-distance}. We then derive a lower bound on the number of logical qubits in Sec.~\ref{sec:lower-bounds-on-the-number-of-logical-qubits} and show that the resulting codes have constant encoding rate when \(D\) is even and \(i=D/2\).

\subsection{Lower bounds on the code distance}
\label{sec:lower-bounds-on-the-code-distance}

As shown in Sec.~\ref{sec:color-code-lower-bound}, the code distance of a color code is bounded below in terms of the systole of the underlying manifold. To obtain an explicit bound, we successively bound the systole in terms of geometric quantities that can be evaluated for the family of hyperbolic manifolds constructed in Sec.~\ref{sec:hyperbolic-manifold}.

As the first step, we bound the systole in terms of the injectivity radius. The following lemma gives a lower bound on the volume of a homologically nontrivial cycle, and hence on the systole, in terms of the injectivity radius.

\begin{lemma}[Systole lower bound from the injectivity radius{~\cite[Theorem 17]{Guth:2014zxf}}]
    \label{lem:systole-lower-bound-injectivity-radius}
    Let \((M^D,hyp)\) be a closed manifold with a hyperbolic metric. Let \(Z^i\subset M\) be a homologically nontrivial \(i\)-cycle with coefficients in \(\mathbb{Z}_2\). Let \(\inj(M)\) be the injectivity radius of \((M,hyp)\). Then the volume of \(Z\) is at least the volume of a ball of radius \(\inj(M)\) in the \(i\)-dimensional hyperbolic space. In particular, if \(i\ge 2\) and \(\inj(M)\ge 1\), then
    \begin{align}
        \vol_i(Z) \ge c_2(i)\exp({(i-1)\inj(M)}).
    \end{align}
\end{lemma}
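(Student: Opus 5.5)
The argument lifts the cycle to hyperbolic space and applies a monotonicity formula there. Write \(r\coloneqq\inj(M)\). The universal cover \(\mathbb{H}^D\to M\) restricts to an isometric embedding of every open ball \(B(p,r)\subset M\). We first reduce to a minimizing cycle. By standard geometric measure theory (compactness of flat chains with \(\mathbb{Z}_2\) coefficients, Federer--Fleming), the nontrivial class \([Z]\) contains a mass-minimizing integral current \(Z_{\min}\), so \(\vol_i(Z)\ge \vol_i(Z_{\min})\). It therefore suffices to bound the mass of a homologically nontrivial, area-minimizing \(\mathbb{Z}_2\)-cycle. Such a cycle is nonempty, since otherwise it would be zero and hence trivial. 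Pick a point \(p\) in its support; a point of positive density suffices, and almost every point of the support is one.

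Next we lift the local picture. The ball \(B(p,r)\) is isometric to a hyperbolic ball \(B_{\mathbb{H}}(\tilde p,r)\subset\mathbb{H}^D\). The piece \(Z_{\min}\cap B(p,r)\) lifts to a relative cycle in this ball, with boundary on the sphere. It is still area-minimizing with respect to its boundary, because any competitor with less mass inside the ball could be pushed down to \(M\) and would decrease the mass of \(Z_{\min}\) while staying in the same class.

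The key step is the monotonicity formula for minimal currents in \(\mathbb{H}^D\). For a minimal \(i\)-current passing through \(\tilde p\), the ratio \(\vol_i(Z\cap B(\tilde p,\rho))/V_i(\rho)\) is nondecreasing in \(\rho\). Here \(V_i(\rho)\) denotes the volume of a radius-\(\rho\) ball in \(\mathbb{H}^i\), and the monotonicity comes from comparing with cones over the boundary slices. As \(\rho\to0\), this ratio tends to the density at \(\tilde p\), which is at least \(1\). Hence \(\vol_i(Z_{\min}\cap B(p,r))\ge V_i(r)\), which gives the first claim. The monotonicity formula is the main obstacle: its proof requires verifying that the cone inequality holds with the hyperbolic comparison function, using the coarea formula and the fact that geodesic cones in \(\mathbb{H}^D\) over a sphere slice have volume at most the corresponding hyperbolic-ball profile. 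I would invoke the standard form for ambient spaces of constant curvature \(-1\) (e.g., Anderson's) rather than reprove it.

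Finally, compute the exponential bound. We have
\begin{align}
V_i(r)=\vol(S^{i-1})\int_0^r\sinh^{i-1}(t)\,dt,
\end{align}
and for \(t\ge \tfrac12\) we have \(\sinh t\ge c\,e^{t}\). Hence, for \(r\ge1\) and \(i\ge2\),
\begin{align}
V_i(r)\ge \vol(S^{i-1})\,c^{i-1}\int_{r-1}^{r}e^{(i-1)t}\,dt\ge c_2(i)\,e^{(i-1)r},
\end{align}
which is the stated bound with \(c_2(i)>0\) depending only on \(i\).
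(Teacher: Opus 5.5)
Your argument is correct and takes essentially the route behind the cited Theorem~17 of Guth--Lubotzky, which the paper quotes without proof: take a mass-minimizing representative (a mod-2 flat chain) via Federer--Fleming, restrict to an embedded ball \(B(p,\inj(M))\) isometric to a hyperbolic ball, apply Anderson's hyperbolic monotonicity formula, and then integrate \(\sinh^{i-1}\). The one slip is in your last display, where the pointwise bound \(\sinh t\ge c\,e^{t}\) holds only for \(t\ge\tfrac12\) and so does not cover all of \([r-1,r]\) when \(1\le r<\tfrac32\); integrating over \([r-\tfrac12,r]\) instead fixes this and changes only the constant \(c_2(i)\).
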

This lemma is stated as Theorem 17 in Ref.~\cite{Guth:2014zxf}, where it is derived using a result of Ref.~\cite{andersonCompleteMinimalVarieties1982a}.

We next bound the injectivity radius using its relation to the \(1\)-dimensional systole. Let \(M\) be a closed hyperbolic manifold. The injectivity radius \(\inj(M)\) of \(M\) is half its \(1\)-dimensional systole,
\begin{align}
\label{eq:injectivity-radius}
    \inj(M) = \frac{1}{2}\sys_1(M).
\end{align}

For the arithmetic hyperbolic manifolds considered here, the \(1\)-dimensional systole is bounded below logarithmically in the volume of the manifold. Reference~\cite{kimSystoleLocallySymmetric2020} established this bound for principal congruence subgroups associated with prime ideals. However, the covering condition \(N_0\mid N\) required for \(\Gamma_N\subset\Gamma_{N_0}\) (Sec.~\ref{sec:hyperbolic-manifold}) does not guarantee that the ideals \((N)\subset\mathcal O_F\) indexing the family \(\{M_N\}\) are prime. Therefore we extend this result to arbitrary ideals by adapting the argument of Ref.~\cite[Section~5]{murilloSystoleCongruenceCoverings2019}.

\begin{theorem}[Systole lower bound for principal congruence covers{~\cite{kimSystoleLocallySymmetric2020, murilloSystoleCongruenceCoverings2019}}]
    \label{thm:kim-systole}
    Let \(F\) be a totally real number field and \(\mathcal O_F\) its ring of integers.
    Let \(f\) be a quadratic form of signature \((1,D)\) defined over \(F\), chosen so that
    \(\Gamma=SO_f(\mathcal O_F)\) is a \emph{cocompact} arithmetic lattice of \(SO_f\). For any ideal \(I\subset\mathcal O_F\), let
    \begin{align}
        \Gamma(I) = \Gamma\cap \ker(SO_f(\mathcal O_F)\to SO_f(\mathcal O_F/I))
    \end{align}
    be the corresponding principal congruence subgroup, and set
    \(M_I \coloneqq  \Gamma(I)\backslash\mathbb H^D\). Then
    \begin{align}
    \label{eq:kim-systole}
        \sys_1(M_I) \ge \frac{4}{D(D+1)} \log(\vol(M_I)) - c,
    \end{align}
    where \(c\) is independent of \(I\).
\end{theorem}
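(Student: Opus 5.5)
The plan is the standard ``matrix entries must be large'' argument of Ref.~\cite{kimSystoleLocallySymmetric2020}, in the form used for composite ideals in Ref.~\cite[Section~5]{murilloSystoleCongruenceCoverings2019}.

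\textbf{Step 1: reduce to a translation length.} Since \(M_I\) is closed and \(\Gamma(I)\) is torsion-free for all \(I\) under consideration, \(\sys_1(M_I)\) equals the minimal translation length \(\ell(\gamma)\) over nontrivial \(\gamma\in\Gamma(I)\), and every such \(\gamma\) is loxodromic. So fix \(\gamma\ne 1\) in \(\Gamma(I)\) realizing the systole, and write \(\ell=\ell(\gamma)\).

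\textbf{Step 2: algebraic lower bound on a matrix entry.} Write \(\gamma=1+A\) with \(A\ne0\) and all entries of \(A\) in \(I\). Choose an entry \(a\ne 0\) of \(A\). Then \(a\in I\setminus\{0\}\), so \(|N_{F/\mathbb Q}(a)|\ge N(I)=|\mathcal O_F/I|\). This uses only that \(a\) lies in \(I\), not that \(I\) is prime, and is exactly where the prime hypothesis of Ref.~\cite{kimSystoleLocallySymmetric2020} is dropped.

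\textbf{Step 3: analytic upper bounds at each real place.} Let \(\sigma_1=\mathrm{id},\sigma_2,\dots,\sigma_r\) be the real embeddings of \(F\). Cocompactness forces \(f^{\sigma_j}\) to be definite for \(j\ge2\), so \(\sigma_j(\Gamma)\) lies in a compact orthogonal group. Hence \(|\sigma_j(a)|\le c'\) for \(j\ge2\), with \(c'\) independent of \(I\). At the identity place, conjugate \(\gamma\) by an element of a fixed compact fundamental set; this costs only a constant. Then \(\|\gamma\|\le c''e^{\ell}\), so \(|a|\le c''e^{\ell}+1\). Combining with Step 2 gives
\[
N(I)\le |N_{F/\mathbb Q}(a)| \le C e^{\ell}.
\]
I expect the main obstacle to be this step. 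The issue is that a single off-identity entry only gives exponent one in \(e^{\ell}\), while the claimed bound has the coefficient \(4/(D(D+1))\). Kim's sharper constant comes from working with \(\gamma-\gamma^{-1}\), or with the characteristic polynomial of \(\gamma\) in \(SO(1,D)\), in which the eigenvalues \(e^{\pm\ell}\) enter. One then compares \(N(I)\) with \(\vol\) through the index. I would reproduce that argument verbatim, checking line by line that only ``\(x\in I\setminus\{0\}\Rightarrow |N(x)|\ge N(I)\)'' is used.

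\textbf{Step 4: volume comparison.} We have \(\vol(M_I)=\vol(M_{\mathcal O_F})\,[\Gamma:\Gamma(I)]\), and \([\Gamma:\Gamma(I)]\le |SO_f(\mathcal O_F/I)|\le N(I)^{\dim SO_f}=N(I)^{D(D+1)/2}\). This bound holds for any ideal, by counting points of the group scheme over \(\mathcal O_F/I\) via the CRT decomposition into prime-power factors and Hensel-type lifting bounds. It gives \(\log\vol(M_I)\le \tfrac{D(D+1)}2\log N(I)+c_0\). Inserting the Step~3 inequality, in its sharpened form \(N(I)\le Ce^{2\ell}\), yields
\[
\ell\ge \frac{4}{D(D+1)}\log\vol(M_I)-c .
\]
The constant \(c\) absorbs \(C\), \(c_0\) and the fundamental-domain constant, all of which are independent of \(I\).
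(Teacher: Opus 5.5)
There is a genuine gap: the displacement estimate you insert in Step~4 gives the wrong coefficient, and the mechanism you propose for sharpening it does not work. The overall structure matches the paper's. It combines an ideal-independent displacement estimate with $[\Gamma:\Gamma(I)]\le|SO_f(\mathcal O_F/I)|\le N(I)^{D(D+1)/2}$, proved by CRT and a layer-by-layer lifting count, and uses $\vol(M_I)=[\Gamma:\Gamma(I)]\,\vol(M_{\mathcal O_F})$.

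The displacement estimate is where the constant $4/(D(D+1))$ comes from. From $N(I)\le Ce^{2\ell}$ and $\log\vol(M_I)\le\frac{D(D+1)}{2}\log N(I)+c_0$ you only get $\log\vol(M_I)\le D(D+1)\,\ell+c'$. That is the coefficient $1/(D(D+1))$, which is worse than the $2/(D(D+1))$ your unsharpened Step~3 bound $N(I)\le Ce^{\ell}$ already gives. What is needed is $\sys_1(M_I)\ge 2\log N(I)-c$, i.e.\ $N(I)^2\le Ce^{\ell}$; the paper imports this from Kim's trace--displacement estimate. The improvement must come from the arithmetic side: you need a nonzero algebraic integer of norm at least $N(I)^2$ that is still $O(e^{\ell})$ at the identity place. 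Loosening the archimedean side cannot help.

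Your suggested quantity $\gamma-\gamma^{-1}$ does not supply this. Take $J$ and $D_f$ as in Appendix~\ref{app:composite-ideal-systole}, and write $\gamma^{-1}=J^{-1}\gamma^{\top}J=1+A^*$ with $A^*\coloneqq J^{-1}A^{\top}J$. Then $\gamma-\gamma^{-1}=A-A^*$, whose entries lie only in $D_f^{-1}I$.

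The symmetric combination works instead:
\begin{itemize}
\item From $\gamma^{-1}\gamma=1$ you get $A+A^*=-A^*A$, so $\gamma+\gamma^{-1}-2=-A^*A$ has entries in $D_f^{-1}I^2$.
\item It is nonzero, since $\gamma+\gamma^{-1}-2=\gamma^{-1}(\gamma-1)^2$ and a loxodromic $\gamma$ is not unipotent.
\item After your conjugation inside $\Gamma$ (which normalizes $\Gamma(I)$), its entries are $O(e^{\ell})$ at the identity place and $O(1)$ at the definite places.
\end{itemize}
A nonzero entry $b$ of $D_f(\gamma+\gamma^{-1}-2)$ then satisfies $N(I)^2\le|N_{F/\mathbb Q}(b)|\le Ce^{\ell}$. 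With this, your Step~4 produces exactly $4/(D(D+1))$. The trace $\mathrm{tr}\,\gamma-(D+1)=-\tfrac12\mathrm{tr}(A^*A)$ also lies in $I^2$ up to fixed denominators. However, you would have to rule out its vanishing, which is not automatic for loxodromics with a nontrivial rotational part.

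Two smaller corrections:
\begin{itemize}
\item In Kim's argument, primality enters only through the group-order bound over the residue field, not through the displacement step as your Step~2 suggests.
\item The lifting count in Step~4 is valid only at primes not dividing $2D_f$, because Lemma~\ref{lem:prime-power-layer} needs $2$ and the $a_i$ to be units. So ``for any ideal'' requires that restriction, or a separate bound on the image of $\Gamma$ at the finitely many bad primes.
\end{itemize}
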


The proof of this extension is given in Appendix~\ref{app:composite-ideal-systole} following the strategy used in \cite[Prop.~5.1 and 5.2]{murilloSystoleCongruenceCoverings2019} for the \(\mathrm{Spin}(1,D)\) case.

\begin{remark}
    An explicit systole bound of this type was first derived for a specific four dimensional tessellation of arithmetic hyperbolic manifolds by Ref.~\cite{Londe:2019jie}, which established a distance scaling of \(d=\Omega(n^{0.1})\) for the toric-code family considered there. That work also quotes the stronger bound in Ref.~\cite{murilloSystoleCongruenceCoverings2019}, which would yield \(d=\Omega(n^{0.2})\); however, as they note explicitly, Murillo's theorem~\cite[Theorem 6.1]{murilloSystoleCongruenceCoverings2019} is stated for arithmetic subgroups of \(\mathrm{Spin}(1,D)\), the double cover of \(SO(1,D)^\circ\), and the corresponding principal congruence subgroups are not the same as those of \(\Gamma=G(\mathcal O_F)\subset SO(1,D)^\circ\) used in both constructions. We follow the same convention here and use Theorem~\ref{thm:kim-systole}, which applies directly to \(\Gamma\subset SO(1,D)^\circ\) as constructed in Sec.~\ref{sec:hyperbolic-manifold}.
\end{remark}

Combining Lemma~\ref{lem:systole-lower-bound-injectivity-radius} with Theorem~\ref{thm:kim-systole}, we can express the lower bound on the higher-dimensional systoles directly in terms of the volume of \(M_N\). This yields the following polynomial lower bound.

\begin{corollary}[Explicit polynomial systole bound]
\label{cor:systole-explicit}
Let \(2\le i\le D-1\). For the hyperbolic manifolds \(M_N\) defined in Sec.~\ref{sec:hyperbolic-manifold}, there exists a constant \(c_3(i)>0\) such that, for all sufficiently large \(N\),
    \begin{align}
        \sys_i(M_N) \ge c_3(i)\,\vol(M_N)^{\frac{2(i-1)}{D(D+1)}}.
    \end{align}
\end{corollary}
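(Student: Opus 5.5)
The plan is to chain the three results just stated. First, identify \(M_N\) with \(M_I\) for the principal ideal \(I=(N)\subset\mathcal O_F\), where \(F=\mathbb Q(\sqrt2)\) and \(\mathcal O_F=\mathbb Z[\sqrt2]\). The form \(f=-\sqrt2x_0^2+\sum x_i^2\) is defined over \(F\), and \(\Gamma_1=SO_f(\mathcal O_F)\) is cocompact. The congruence condition \(A\equiv I\), \(B\equiv0 \pmod N\) in Eq.~\eqref{eq:principal_congruence_subgroups} says exactly that \(A+B\sqrt2\equiv I\) modulo the ideal \(N\mathbb Z[\sqrt2]\), since \(\{1,\sqrt2\}\) is a \(\mathbb Z\)-basis of \(\mathcal O_F\). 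Hence \(\Gamma_N=\Gamma((N))\), and Theorem~\ref{thm:kim-systole} applies. It gives
\[
\sys_1(M_N)\ge \frac{4}{D(D+1)}\log\vol(M_N)-c ,
\]
with \(c\) independent of \(N\). (The signature convention \((1,D)\) versus \((D,1)\) only amounts to an overall sign of the form.)

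Second, convert this to the injectivity radius using Eq.~\eqref{eq:injectivity-radius}:
\[
\inj(M_N)=\tfrac12\sys_1(M_N)\ge \frac{2}{D(D+1)}\log\vol(M_N)-\frac c2 .
\]
Since \(\vol(M_N)=[\Gamma_{N_0}:\Gamma_N]\,\vol(M_0)\) and the index tends to infinity as \(N\to\infty\) (the \(\Gamma_N\) have trivial intersection and \(\Gamma_{N_0}\) is infinite), we have \(\vol(M_N)\to\infty\). In particular \(\inj(M_N)\ge1\) for all sufficiently large \(N\), which is the hypothesis needed in Lemma~\ref{lem:systole-lower-bound-injectivity-radius}.

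Third, apply Lemma~\ref{lem:systole-lower-bound-injectivity-radius} to a homologically nontrivial \(i\)-cycle \(Z\) with \(2\le i\le D-1\):
\[
\vol_i(Z)\ge c_2(i)\exp\bigl((i-1)\inj(M_N)\bigr)\ge c_2(i)e^{-(i-1)c/2}\,\vol(M_N)^{\frac{2(i-1)}{D(D+1)}} .
\]
Taking the infimum over such \(Z\) gives the claim with \(c_3(i)=c_2(i)e^{-(i-1)c/2}\).

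The main obstacle I expect is matching the Lipschitz cycles in the definition of \(\sys_i\) with the cycles to which Lemma~\ref{lem:systole-lower-bound-injectivity-radius} (Anderson's monotonicity argument) applies. If the lemma is only stated for nontrivial cycles in a narrower sense (for instance integral currents or mod-2 flat chains), I would note that every nontrivial Lipschitz \(\mathbb Z_2\)-cycle defines such a flat chain with the same homology class and mass no larger than its volume. The lower bound then transfers to the infimum defining \(\sys_i(M_N,g)\).

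A second, more bookkeeping-type point is that the ideal \((N)\) need not be prime. This is exactly why Theorem~\ref{thm:kim-systole} was extended to arbitrary ideals, so the step is covered by citing it.
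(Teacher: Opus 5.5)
Your proposal is correct and follows the paper's own proof. Like the paper, you bound \(\sys_i(M_N)\) below by \(c_2(i)\exp((i-1)\inj(M_N))\) via Lemma~\ref{lem:systole-lower-bound-injectivity-radius}, substitute \(\inj(M_N)=\tfrac12\sys_1(M_N)\), and insert the logarithmic bound of Theorem~\ref{thm:kim-systole}, arriving at the same constant \(c_3(i)=c_2(i)e^{-(i-1)c/2}\). Your extra checks are handled correctly and are left implicit in the paper: that \(\Gamma_N\) is the principal congruence subgroup of \((N)\subset\mathbb Z[\sqrt2]\), that \(\vol(M_N)\to\infty\) so the lemma's hypothesis \(\inj(M_N)\ge1\) holds for large \(N\), and the matching of Lipschitz cycles with flat chains.
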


\begin{proof}
    By Lemma~\ref{lem:systole-lower-bound-injectivity-radius}, every homologically nontrivial \(i\)-cycle in \(M_N\) has volume bounded below exponentially in the injectivity radius. Taking the minimum over all such cycles gives
    \begin{align}
        \sys_i(M_N) &\ge c_2(i)\exp\left((i-1)\inj(M_N)\right) \\
        &= c_2(i)\exp\left((i-1)\frac{1}{2}\sys_1(M_N)\right),
    \end{align}
    where we used Eq.~\eqref{eq:injectivity-radius}.

    Applying the lower bound of Eq.~\eqref{eq:kim-systole}, we obtain
    \begin{align}
        &\sys_i(M_N) \notag \\ 
        &\ge c_2(i) \exp\left(\frac{i-1}{2}\left(\frac{4}{D(D+1)} \log(\vol(M_N)) - c\right)\right) \\
        &= c_2(i)\exp\left(-\frac{(i-1)c}{2}\right)\vol(M_N)^{\frac{2(i-1)}{D(D+1)}} \\
        &= c_3(i) \vol(M_N)^{\frac{2(i-1)}{D(D+1)}}.
    \end{align}
    where \(c_3(i)=c_2(i)\exp(-(i-1)c/2)\) is independent of \(N\). 
\end{proof}

Since the covering map \(p_N\colon M_N\to M_0\) is a local isometry and \(X_N\) is the pullback of \(X_0\) along \(p_N\), every \(D\)-simplex of \(X_N\) has the same volume as its image in \(X_0\), and the number of \(D\)-simplices of \(X_N\) equals \(\deg(p_N)\) times the number of \(D\)-simplices of \(X_0\). Since \(\deg(p_N)=\vol(M_N)/\vol(M_0)\), the number of physical qubits of \(CC_i(X_N)\), which equals the number of \(D\)-simplices of \(X_N\), satisfies
\begin{align}
\label{eq:num-of-qubits-and-volume}
    n = \frac{|\Delta_D(X_0)|}{\vol(M_0)}\,\vol(M_N),
\end{align}
so that \(n\) is proportional to \(\vol(M_N)\) with a constant depending only on the fixed base colex \(X_0\). 

Since the number of physical qubits \(n\) is proportional to \(\vol(M_N)\), the systolic bound of Corollary~\ref{cor:systole-explicit} can be expressed directly in terms of \(n\). Combining this with the color-code distance bound established in Sec.~\ref{sec:color-code-lower-bound} yields an explicit polynomial lower bound on the code distance, as stated in the following theorem.

\begin{theorem}[Explicit polynomial lower bound on the color-code distance]
\label{cor:code-distance-explicit}
Let \(2\le i\le D-2\), and let \(X_N\) be the \(D\)-colex on \(M_N\) obtained by pulling back the fixed colex \(X_0=\mathrm{sd}(Y_0)\) on the base manifold \(M_0\) along the covering map \(p_N\colon M_N\to M_0\). Let \(n\) denote the number of physical qubits of \(CC_i(X_N)\). Then the code distance satisfies
\begin{align}
    d_{CC_i(X_N)} \ge
    \min\left(
        c_4(i)\,n^{\frac{2(i-1)}{D(D+1)}},
        c_4(D-i)\,n^{\frac{2(D-i-1)}{D(D+1)}}
    \right).
\end{align}
\end{theorem}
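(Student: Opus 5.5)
The plan is to chain three earlier results: Corollary~\ref{cor:color-code-distance-bounded-by-the-systole} (color-code distance bounded by the systoles), Corollary~\ref{cor:systole-explicit} (systoles polynomial in volume), and the volume–qubit relation of Eq.~\eqref{eq:num-of-qubits-and-volume}.

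First apply Corollary~\ref{cor:color-code-distance-bounded-by-the-systole} to the cover \(M=M_N\to M_0\) with the pulled-back metric \(g\) and colex \(X=X_N\). Its hypotheses are exactly the setting of Proposition~\ref{prop:restricted_toric_systole_bound}: \(Y_0\) is a finite smooth triangulation of \(M_0\), \(X_0=\mathrm{sd}(Y_0)\), and \(p_N\) is finite-sheeted. This gives
\begin{align}
d_{CC_i(X_N)}\ge \min\bigl(c_1(i)\sys_i(M_N),\,c_1(D-i)\sys_{D-i}(M_N)\bigr),
\end{align}
where the constants \(c_1\) do not depend on \(N\). Since \(g\) is the pullback of the hyperbolic metric, \(\sys_j(M_N,g)\) is the hyperbolic systole used in Sec.~\ref{sec:lower-bounds-on-the-code-distance}.

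Next, the hypothesis \(2\le i\le D-2\) guarantees that both \(i\) and \(D-i\) lie in \([2,D-1]\). This is exactly the range where Corollary~\ref{cor:systole-explicit} applies. So for sufficiently large \(N\),
\begin{align}
\sys_j(M_N)\ge c_3(j)\vol(M_N)^{2(j-1)/(D(D+1))}\quad\text{for } j\in\{i,D-i\}.
\end{align}
Eq.~\eqref{eq:num-of-qubits-and-volume} then gives \(\vol(M_N)=\frac{\vol(M_0)}{|\Delta_D(X_0)|}\,n\). Substituting, we can set
\begin{align}
c_4(j)=c_1(j)\,c_3(j)\,\bigl(\vol(M_0)/|\Delta_D(X_0)|\bigr)^{2(j-1)/(D(D+1))},
\end{align}
which is independent of \(N\). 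Since \(\min\) is monotone in each argument, the stated inequality follows.

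The main obstacle is bookkeeping rather than new mathematics, in two places. First, Corollary~\ref{cor:systole-explicit} holds only for sufficiently large \(N\), via the condition \(\inj(M_N)\ge1\) in Lemma~\ref{lem:systole-lower-bound-injectivity-radius}. To cover the finitely many small \(N\) with \(N_0\mid N\), either shrink \(c_4\) (each such code has distance at least \(1\) whenever \(k>0\), and there are finitely many) or state the bound asymptotically. Second, one must check that the family \(\Gamma_N\) fits Theorem~\ref{thm:kim-systole} with \(F=\mathbb{Q}(\sqrt2)\), \(\mathcal{O}_F=\mathbb{Z}[\sqrt2]\), and ideal \(I=(N)\), so that the constant \(c\) is uniform. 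Note that \(\Gamma_N\) as defined coincides with \(\Gamma(N\mathcal{O}_F)\), because \(A+B\sqrt2\equiv I \pmod{N\mathcal{O}_F}\) is equivalent to \(A\equiv I\) and \(B\equiv0 \pmod N\).
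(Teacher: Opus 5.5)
Your proposal is correct and is essentially the paper's proof: it combines Corollary~\ref{cor:color-code-distance-bounded-by-the-systole} with Corollary~\ref{cor:systole-explicit} for \(j\in\{i,D-i\}\) and the volume--qubit relation~\eqref{eq:num-of-qubits-and-volume}, and arrives at the same constant \(c_4(j)=c_1(j)c_3(j)\bigl(\vol(M_0)/|\Delta_D(X_0)|\bigr)^{2(j-1)/(D(D+1))}\). Your two bookkeeping remarks are also correct and make explicit points the paper glosses over: the ``sufficiently large \(N\)'' requirement inherited from Corollary~\ref{cor:systole-explicit}, which the theorem statement leaves implicit, and the identification \(\Gamma_N=\Gamma(N\mathcal O_F)\) with \(\mathcal O_F=\mathbb Z[\sqrt2]\).
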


\begin{proof}
    By Corollary~\ref{cor:systole-explicit} and Eq.~\eqref{eq:num-of-qubits-and-volume}, we obtain
    \begin{align}
        c_1(i)\sys_i(M_N,g_N)
        &\ge c_1(i)c_3(i)\vol(M_N)^{\frac{2(i-1)}{D(D+1)}} \\
        &= c_1(i)c_3(i)\left(\frac{\vol(M_0)}{|\Delta_D(X_0)|}n\right)^{\frac{2(i-1)}{D(D+1)}} \\
        &= c_4(i)n^{\frac{2(i-1)}{D(D+1)}},
    \end{align}
    where \(c_4(i)=c_1(i)c_3(i)(\vol(M_0)/|\Delta_D(X_0)|)^{\frac{2(i-1)}{D(D+1)}}\).

    Applying this inequality for both \(i\) and \(D-i\) in Corollary~\ref{cor:color-code-distance-bounded-by-the-systole}, we obtain the code distance of the color code \(CC_i(X_N)\) satisfies
    \begin{align}
        d_{CC_i(X_N)}\ge \min\left(
        c_4(i)\,n^{\frac{2(i-1)}{D(D+1)}},
        c_4(D-i)\,n^{\frac{2(D-i-1)}{D(D+1)}}
        \right).
    \end{align}
\end{proof}

Table~\ref{tab:systole-exponents} lists the numerical values of the exponent \(\gamma(i,D) \coloneqq \min\left(\frac{2(i-1)}{D(D+1)},\,\frac{2(D-i-1)}{D(D+1)}\right)\) appearing in Theorem~\ref{cor:code-distance-explicit}, for every dimension \(4\le D\le10\) and every admissible type \(2\le i\le D-2\). For a fixed dimension \(D\), the exponent is maximized at \(i=D/2\) when \(D\) is even, while the largest value among the cases shown in the table is attained at \(D=4\) and \(i=2\). An analogous polynomial distance bound was previously established for hyperbolic toric codes by Ref.~\cite{Guth:2014zxf}. Our result establishes rigorously that the same type of polynomial distance scaling extends to hyperbolic color codes in arbitrary dimensions.

An explicit lower bound analogous to that for color codes can also be obtained for the distance of hyperbolic toric codes defined on the same manifolds \(M_N\). Polynomial systolic growth with volume, and the resulting polynomial distance bounds for hyperbolic toric codes, were established in Ref.~\cite{Guth:2014zxf} for all \(D\ge4\) and \(2\le i\le D-2\). However, the corresponding scaling exponents were not explicitly calculated there. Combining Proposition~3 of Ref.~\cite{Guth:2014zxf} with the explicit systolic bound of Corollary~\ref{cor:systole-explicit}, we obtain the following theorem.

\begin{theorem}[Explicit polynomial lower bound on the toric-code distance]
\label{thm:toric-code-distance-explicit}
Let \(2\le i\le D-2\), and let \(X_N\) be the triangulation of \(M_N\) obtained by pulling back a fixed finite smooth triangulation \(X_0\) of the base manifold \(M_0\) along the covering map \(p_N\colon M_N\to M_0\).
Let \(n\) denote the number of physical qubits of \(TC_i(X_N)\).
Then, for all sufficiently large \(N\), the code distance satisfies
\begin{align}
    d_{TC_i(X_N)} \ge
    \min\left(
        c_5(i)\,n^{\frac{2(i-1)}{D(D+1)}},
        c_5(D-i)\,n^{\frac{2(D-i-1)}{D(D+1)}}
    \right),
\end{align}
where \(c_5(i)>0\) are independent of \(N\).
\end{theorem}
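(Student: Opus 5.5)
The plan is to mirror the proof of Theorem~\ref{cor:code-distance-explicit}, replacing the color-code reduction (Corollary~\ref{cor:color-code-distance-bounded-by-the-systole}) with the direct toric-code systolic bound of Ref.~\cite{Guth:2014zxf} (its Proposition~3). Proposition~3 says the following for a fixed finite smooth triangulation \(X_0\) of \((M_0,g_0)\) and its pullback \(X_N\) to a finite cover \(M_N\) with the pulled-back metric. A nontrivial \(i\)-cycle of \(X_N\) has simplicial weight at least \(c\,\sys_i(M_N)\). A nontrivial \(i\)-cocycle, viewed through the dual cellulation, has weight at least \(c'\,\sys_{D-i}(M_N)\). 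Both constants depend only on \((M_0,g_0,X_0)\).

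For \(d^Z\), I would argue exactly as in the proof of Proposition~\ref{prop:restricted_toric_systole_bound}, but now directly on \(X_N\), with no restricted lattice needed. Take \(\alpha\in\ker\partial_{i,i-1}\setminus\im\partial_{i+1,i}\). It defines a Lipschitz cycle that is nontrivial in Lipschitz homology, so \(\vol_i(\alpha)\ge\sys_i(M_N)\). On the other hand, \(\vol_i(\alpha)\le \wt(\alpha)\max_{\tau\in\Delta_i(X_0)}\vol_i(\tau)\), because covering maps are local isometries. This gives \(d^Z\ge c_1'(i)\sys_i(M_N)\).

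For \(d^X\), a nontrivial class in \(\ker\partial_{i,i+1}/\im\partial_{i-1,i}\) is a nontrivial cocycle. By Poincaré duality, its Poincaré-dual chain on the dual cellulation of \(X_N\) is a nontrivial \((D-i)\)-cycle, with exactly \(\wt\) dual cells. I would realize each dual \((D-i)\)-cell as a Lipschitz union of at most a bounded number of simplices of \(\mathrm{sd}(X_N)\), with the bound determined by \(X_0\). Their volumes are uniformly bounded for the same covering reason. The same comparison then gives \(d^X\ge c_1'(D-i)\sys_{D-i}(M_N)\).

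This dual step is the only point needing care. Rather than redo it, I would simply invoke Proposition~3 of Ref.~\cite{Guth:2014zxf}, which packages precisely this statement.

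Next I would apply Corollary~\ref{cor:systole-explicit} for both \(i\) and \(D-i\). Both lie in \([2,D-2]\subset[2,D-1]\), so for large \(N\) we have \(\sys_j(M_N)\ge c_3(j)\vol(M_N)^{2(j-1)/(D(D+1))}\) for \(j\in\{i,D-i\}\).

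Finally I would convert volume to \(n\). Here \(n=|\Delta_i(X_N)|=\deg(p_N)\,|\Delta_i(X_0)|\), and \(\deg(p_N)=\vol(M_N)/\vol(M_0)\). So \(\vol(M_N)=\frac{\vol(M_0)}{|\Delta_i(X_0)|}\,n\), which is the analogue of Eq.~\eqref{eq:num-of-qubits-and-volume} with \(i\)-simplices in place of \(D\)-simplices. Substituting yields the claim with
\[
c_5(j)=c_1'(j)\,c_3(j)\,\bigl(\vol(M_0)/|\Delta_i(X_0)|\bigr)^{2(j-1)/(D(D+1))}.
\]
The condition \(i\le D-2\) is exactly what keeps the \(d^X\) exponent positive.
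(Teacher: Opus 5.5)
Your proposal is correct and follows the paper's proof: you invoke Proposition~3 of Ref.~\cite{Guth:2014zxf} to bound the distance below by the \(i\)- and \((D-i)\)-systoles, apply Corollary~\ref{cor:systole-explicit} to both, and convert volume to \(n\) via \(n=\frac{|\Delta_i(X_0)|}{\vol(M_0)}\vol(M_N)\). The only difference is cosmetic: your \(c_5(j)\) depends on the type \(i\) through \(|\Delta_i(X_0)|\), whereas the paper sets \(A=\min_{2\le r\le D-2}\vol(M_0)/|\Delta_r(X_0)|\) and uses \(\vol(M_N)\ge An\) so that \(c_5\) depends only on \(j\); either way the constants are independent of \(N\), which is all the statement requires.
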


\begin{proof}
By Proposition~3 of Ref.~\cite{Guth:2014zxf} and Corollary~\ref{cor:systole-explicit}, there exists a constant \(a>0\), independent of \(N\), such that, for all sufficiently large \(N\),
\begin{align}
    d_{TC_i(X_N)} \ge a\min\Big(
        c_3(i)\,\vol(M_N)^{\frac{2(i-1)}{D(D+1)}}, \notag \\
        c_3(D-i)\,\vol(M_N)^{\frac{2(D-i-1)}{D(D+1)}}
    \Big).
\end{align}
As in Eq.~\eqref{eq:num-of-qubits-and-volume}, the number of physical qubits is proportional to the volume. Since the qubits of \(TC_i(X_N)\) are placed on the \(i\)-simplices, we have
\begin{align}
    n = \frac{|\Delta_i(X_0)|}{\vol(M_0)}\,\vol(M_N).
\end{align}
Define
\begin{align}
    A\coloneqq \min_{2\le r\le D-2}
    \frac{\vol(M_0)}{|\Delta_r(X_0)|}>0.
\end{align}
Then \(\vol(M_N)\ge An\) for every \(2\le i\le D-2\).
Substituting this inequality into the preceding systolic bound yields the claimed result, with \(c_5(j)\coloneqq a c_3(j)A^{\frac{2(j-1)}{D(D+1)}}\).
\end{proof}

\begin{table}[t]
\centering
\caption{Numerical values of the exponent \(\gamma(i,D) = \min\bigl(\frac{2(i-1)}{D(D+1)},\,\frac{2(D-i-1)}{D(D+1)}\bigr)\) in Theorem~\ref{cor:code-distance-explicit}, as a function of the dimension \(D\) and the type \(i\). The values are truncated to three decimal places.}
\label{tab:systole-exponents}
\begin{tabular}{c|ccccccc}
\toprule
\(D \backslash i\) & 2 & 3 & 4 & 5 & 6 & 7 & 8 \\
\midrule
4  & 0.100 &       &       &       &       &       &       \\
5  & 0.066 & 0.066 &       &       &       &       &       \\
6  & 0.047 & 0.095 & 0.047 &       &       &       &       \\
7  & 0.035 & 0.071 & 0.071 & 0.035 &       &       &       \\
8  & 0.027 & 0.055 & 0.083 & 0.055 & 0.027 &       &       \\
9  & 0.022 & 0.044 & 0.066 & 0.066 & 0.044 & 0.022 &       \\
10 & 0.018 & 0.036 & 0.054 & 0.072 & 0.054 & 0.036 & 0.018 \\
\bottomrule
\end{tabular}
\end{table}

\subsection{Lower bounds on the number of logical qubits}
\label{sec:lower-bounds-on-the-number-of-logical-qubits}

As noted in Eq.~\eqref{eq:num_of_logical_qubits_color_code} of Sec.~\ref{sec:color_codes}, the number of logical qubits of the color code \(CC_i(X)\) is proportional to the Betti number \(b_i(X)\), with a proportionality factor depending only on \(D\) and \(i\). Therefore, for fixed \(D\) and \(i\), the number of logical qubits has the same asymptotic scaling as \(b_i(X)\). The Betti numbers of arithmetic hyperbolic manifolds exhibit polynomial scaling with the volume of the manifold.

\begin{lemma}[Polynomial lower bound on Betti numbers of arithmetic hyperbolic manifolds~\cite{xueBettiNumbersHyperbolic1992}]
\label{lem:xue-betti-number-bound}
    Let \(\Gamma\) be an arithmetic lattice in \(SO(D,1)\) which arises from a quadratic form. For any torsion-free congruence subgroup \(\Gamma(p)\) and any \(\epsilon>0\), there is a constant \(c_{\epsilon}>0\), such that for all but finitely many ideals \(q\) of \(p\),
    \begin{align}
        b_i(\Gamma(q)\backslash\mathbb{H}^D) \ge c_{\epsilon} \mathrm{Vol}(\Gamma(q)\backslash \mathbb{H}^D)^{\delta(i,D) -\epsilon}.
    \end{align}
    where \(\delta(i,D)=\frac{(D-1)(D-i)}{(D+1)D}\frac{2i}{D-1}\) for \(i=1,\ldots,\left\lfloor\frac{D+1}{2}\right\rfloor\).
\end{lemma}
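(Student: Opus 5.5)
The plan is to prove the bound through the representation theory of \(\Gamma(q)\backslash\mathbb H^D\) rather than through its combinatorial topology. The statement is a cited result~\cite{xueBettiNumbersHyperbolic1992}, and we would follow the structure of Xue's argument. Throughout, write \(N(q)\) for the norm of the ideal \(q\). The first step is Matsushima's formula for the cocompact, torsion-free lattice \(\Gamma(q)\subset G=SO(D,1)\):
\begin{align}
 b_i(\Gamma(q)\backslash\mathbb H^D)=\sum_{\pi} m(\pi,\Gamma(q))\,\dim H^i(\mathfrak g,K;\pi),
\end{align}
where the sum runs over unitary representations \(\pi\) of \(G\). Every term is nonnegative. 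So it is enough to find one cohomological representation \(\pi_i\) with \(H^i(\mathfrak g,K;\pi_i)\neq0\), for instance the Vogan--Zuckerman module \(A_{\mathfrak q}\) attached to degree \(i\) with \(i\le\lfloor (D+1)/2\rfloor\), and to prove a lower bound on its multiplicity \(m(\pi_i,\Gamma(q))\).

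Next we produce automorphic forms on \(SO_f\) whose archimedean component is \(\pi_i\). We would use the theta correspondence for the dual pair \((Sp_{2i},O_f)\) over the totally real field \(F\), in the form given by Kudla--Millson and Rallis. Holomorphic Siegel cusp forms of weight \((D+1)/2\) on \(Sp_{2i}\) lift to forms on \(O_f\) whose archimedean type has cohomology in degree \(i\). At the other infinite places the form \(f\) is definite, so the lift there is the trivial representation. By Rallis' inner-product formula the lift is nonzero in the stable range \(2i\le D+1\), which explains the restriction on \(i\).

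The multiplicity of \(\pi_i\) in \(L^2(\Gamma(q)\backslash G)\) is then bounded below by the dimension of the space of level-\(q\) vectors in the theta lift. We would count this space locally at the places dividing \(q\). Take a spherical, or suitably chosen, local representation of \(SO_f(F_v)\) that occurs in the local theta correspondence, and estimate the dimension of its fixed vectors under the principal congruence subgroup \(K_v(q)\). This dimension grows like a power of \(N(q)\) equal to the dimension of a relevant orbit. Namely, the principal congruence filtration sees a quotient of \(SO_f(\mathcal O_v/q)\) of size roughly \(N(q)^{i(D-i)}\), the size of the Grassmannian-type variety parametrizing the \(i\)-dimensional data.

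On the volume side we use
\begin{align}
 \vol(\Gamma(q)\backslash\mathbb H^D)\asymp[\Gamma:\Gamma(q)]\asymp N(q)^{\dim SO_{D+1}}=N(q)^{D(D+1)/2}.
\end{align}
Dividing exponents, \(N(q)^{i(D-i)}\) becomes \(\vol^{2i(D-i)/(D(D+1))}\), which is exactly \(\delta(i,D)\) after cancelling the factor \(D-1\). The \(\epsilon\) loss absorbs three things: polylogarithmic and divisor-type factors in the counting, the dependence on the prime factorization of \(q\), and the ideals of small norm. The ideals of small norm are exactly the finitely many excluded ones.

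We expect the main obstacle to be the lower bound on multiplicity of level-\(q\) vectors. Two things are needed there. First, one must make sure that sufficiently many linearly independent global theta lifts actually exist at level \(q\); this uses nonvanishing of the lift together with a supply of Siegel cusp forms of the right level, counted by Riemann--Roch or by the dimension formula. Second, one must make sure that their \(\Gamma(q)\)-invariant vectors are counted with the right exponent and not merely some positive power. If this becomes unwieldy, the first fallback is to take the cuspidal lifts as given and get the growth purely from the local representation theory: take one lift \(\Pi\) with \(\Pi_\infty=\pi_i\), and lower-bound \(\dim\Pi_v^{K_v(q)}\) through the degenerate principal series in which \(\Pi_v\) embeds. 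This lower bound is a local, finite-group computation.
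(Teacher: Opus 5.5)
The paper gives no argument for this lemma. It is quoted from Xue and used as a black box, then extended past \(i=(D+1)/2\) by Poincaré duality, so your opening sentence is all the paper does. Your outline has a sensible architecture: Matsushima's formula, a cohomological \(\pi_i\) realized by theta lifting, counting \(K_f(q)\)-fixed vectors, and the bookkeeping \(N(q)^{i(D-i)}\) against \(\vol\asymp N(q)^{D(D+1)/2}\), which does give \(\delta(i,D)\). But it does not establish the step that carries the content, and the mechanism you offer for that step points the wrong way.

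The number \(i(D-i)\) is \(\dim G_v/P_v\), where \(P_v\subset SO_f(F_v)\) is the parabolic stabilizing an isotropic \(i\)-flag, with Levi \(GL_1^i\times SO_{D+1-2i}\). By the unramified theta correspondence, the local lift of an unramified \(\sigma_v\) is a constituent of \(I_v=\operatorname{Ind}_{P_v}^{G_v}(\chi_v\otimes\mathbf 1)\), with \(\chi_v\) read off from the Satake parameter of \(\sigma_v\). If \(\Pi_v=I_v\), then \(\Pi_v^{K_v(\mathfrak p_v^k)}\) is the space of functions on \(P(\mathcal O_v/\mathfrak p_v^k)\backslash G(\mathcal O_v/\mathfrak p_v^k)\), of dimension at least \(N(\mathfrak p_v)^{k\,i(D-i)}\). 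Your fallback instead bounds \(\dim\Pi_v^{K_v(q)}\) ``through the degenerate principal series in which \(\Pi_v\) embeds.'' An embedding \(\Pi_v\hookrightarrow I_v\) only gives \(\dim\Pi_v^{K_v(q)}\le\dim I_v^{K_v(q)}\), an upper bound. You would need to show that \(\Pi_v\) is all of \(I_v\), or a constituent carrying a fixed proportion of its fixed vectors, at every unramified \(v\mid q\) and uniformly in \(v\). At the finitely many ramified places you would also need a wave-front or local-character-expansion lower bound. Counting Siegel cusp forms of level \(q\), which grow like \(N(q)^{i(2i+1)}\), is not what produces \(i(D-i)\).

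Two further steps fail as written. First, a single global \(\Pi\) contributes only to those \(q\) with \(\Pi_f^{K_f(q)}\neq0\). The lemma is claimed for all but finitely many \(q\subseteq p\), and the paper applies it to \(q=(N)\) with \(N_0\mid N\), which picks up arbitrary new prime factors. So you need \(\Pi\) with \(\Pi_f^{K_f(p)}\neq0\): if \(\Pi\) is ramified at some \(v\nmid p\), or has no \(K_v(p)\)-fixed vector at some \(v\mid p\), infinitely many \(q\) are missed. That is a nonvanishing statement for \(\pi_i\) already at the base level \(p\), and nothing in your sketch provides it.

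Second, \(2i\le D+1\) is not the stable range for \((Sp_{2i},O_{D+1})\). The stable range is \(D+1\ge 4i\), and in any case \(V\) is anisotropic over \(F\). In its classical form, Rallis' inner-product formula gives unconditional nonvanishing only where the doubled Siegel Eisenstein series on \(Sp_{4i}\) converges, i.e.\ \(D+1>4i+2\). Below that, nonvanishing is tied to special \(L\)-values, so most of the range \(i\le\lfloor (D+1)/2\rfloor\) is not covered. The inequality \(2i\le D+1\) is instead the condition that weight \((D+1)/2\) lie in, or at the edge of, the holomorphic discrete series of \(Sp_{2i}(\mathbb R)\); this weight is half-integral, so metaplectic, when \(D\) is even.

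Finally, \(SO_f\) is not simply connected, so the adelic multiplicity at level \(K_f(q)\) is spread over several arithmetic quotients. It still has to be transferred to \(\Gamma(q)\backslash\mathbb H^D\) itself.
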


This lemma is stated only for \(i\le (D+1)/2\). For closed \(D\)-manifolds, however, Poincaré duality~\cite[Theorem 3.30]{hatcherAlgebraicTopology2002} gives \(b_i(M_N) = b_{D-i}(M_N)\) with \(\mathbb{F}_2\) coefficients. Hence the same polynomial lower bound extends to \(i>(D+1)/2\) by replacing \(i\) with \(D-i\). Accordingly, for \(i>(D+1)/2\), we define \(\delta(i,D)=\delta({D-i}, D)\) so that the exponent \(\delta(i,D)\) is defined for all \(1\le i \le D-1\).

Moreover, when \(D\) is even, the \(D/2\)-th Betti number scales linearly with the volume.

\begin{theorem}[Linear growth of the \(D/2\)-th Betti number]
\label{thm:constant-rate}
    Let \(M_N\) be a \(D\)-dimensional closed manifold defined in Sec.~\ref{sec:hyperbolic-manifold}. If \(D=2m\) for some positive integer \(m\), then
    \begin{align}
    b_m(M_N;\mathbb{F}_2)\ge \frac{2}{\vol(S^D)}\vol(M_N) + o(\vol(M_N)).
    \end{align}
\end{theorem}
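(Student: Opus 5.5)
The plan is to use the Lück approximation theorem in the form of Abert et al.~\cite{abertGrowth$L^2$invariantsSequences2017a} for sequences of lattices that Benjamini--Schramm converge to \(\mathbb{H}^D\). The argument has three parts. First, show that \(\{M_N\}\) converges to \(\mathbb{H}^D\) in the Benjamini--Schramm sense. Second, identify the normalized \(L^2\)-Betti number of \(\mathbb{H}^{2m}\). Third, pass from real coefficients to \(\mathbb{F}_2\) coefficients.

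\emph{Step 1 (Benjamini--Schramm convergence).} Convergence is implied by uniform growth of the injectivity radius, which follows from Theorem~\ref{thm:kim-systole} and Eq.~\eqref{eq:injectivity-radius}. Indeed, \(\inj(M_N)=\tfrac12\sys_1(M_N)\ge \tfrac{2}{D(D+1)}\log\vol(M_N)-c/2\). Since \([\Gamma_{N_0}:\Gamma_N]\to\infty\) as \(N\to\infty\) with \(N_0\mid N\), we have \(\vol(M_N)\to\infty\), and hence \(\inj(M_N)\to\infty\). So every point of \(M_N\) has an \(R\)-ball isometric to a hyperbolic \(R\)-ball once \(N\) is large. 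This is stronger than Benjamini--Schramm convergence. The groups \(\Gamma_N\) are torsion-free and cocompact, so the uniform discreteness hypothesis of Ref.~\cite{abertGrowth$L^2$invariantsSequences2017a} holds: all \(\Gamma_N\) lie in the fixed cocompact lattice \(\Gamma_{N_0}\), and its systole gives a uniform lower bound on displacement.

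\emph{Step 2 (the limit value).} Ref.~\cite{abertGrowth$L^2$invariantsSequences2017a} then gives
\begin{align}
\lim_{N\to\infty}\frac{b_m(M_N;\mathbb{R})}{\vol(M_N)} = \beta_m^{(2)}(\mathbb{H}^{2m}),
\end{align}
the \(L^2\)-Betti number of \(\mathbb{H}^{2m}\) per unit volume. I would compute \(\beta^{(2)}_m(\mathbb{H}^{2m})\) through the Euler characteristic. The \(L^2\)-Betti numbers of \(\mathbb{H}^{2m}\) vanish outside the middle degree (Dodziuk), so for any closed hyperbolic \(M\) we have \(\chi(M)=(-1)^m\beta_m^{(2)}\vol(M)\). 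The Gauss--Bonnet--Chern theorem with curvature \(-1\) gives \(\chi(M)=(-1)^m\frac{2}{\vol(S^{2m})}\vol(M)\). Combining these yields \(\beta_m^{(2)}=2/\vol(S^D)\). Alternatively, one can avoid \(L^2\) theory entirely. Since \(b_j(M_N;\mathbb{R})=o(\vol)\) for \(j\ne m\) by the same theorem, we get \(b_m(M_N;\mathbb{R})=(-1)^m\chi(M_N)+o(\vol)=\frac{2}{\vol(S^D)}\vol(M_N)+o(\vol(M_N))\).

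\emph{Step 3 (passing to \(\mathbb{F}_2\)).} By universal coefficients, \(b_m(M;\mathbb{F}_2)\ge b_m(M;\mathbb{Q})=b_m(M;\mathbb{R})\), because \(\dim_{\mathbb{F}_2}H_m(M;\mathbb{F}_2)=\operatorname{rank}H_m(M;\mathbb{Z})+t_m+t_{m-1}\), where \(t_j\) counts the \(2\)-torsion summands of \(H_j(M;\mathbb{Z})\). This gives the inequality in the stated direction.

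I expect the main obstacle to be Step 1, namely checking precisely that the hypotheses of the Abert et al.\ theorem (Benjamini--Schramm convergence plus uniform discreteness) are met by the congruence tower \(\{\Gamma_N\}_{N_0\mid N}\) with non-prime ideals. Theorem~\ref{thm:kim-systole}, as extended to arbitrary ideals in Appendix~\ref{app:composite-ideal-systole}, is exactly what handles this. If that route stalls, the fallback is the Euler-characteristic argument via Gauss--Bonnet. It needs only \(b_j(M_N;\mathbb{R})=o(\vol)\) for \(j\neq m\), which is the classical Lück--DeGeorge--Wallach approximation for residual chains of normal subgroups. The \(\Gamma_N\) are normal in \(\Gamma_{N_0}\), but their intersection is trivial only along a cofinal chain, so I would restrict to a chain such as \(N=N_0^k\) if needed.
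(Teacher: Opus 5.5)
Your proposal is correct and follows essentially the same route as the paper's Appendix~B: Benjamini--Schramm convergence from the logarithmic injectivity-radius bound of Theorem~\ref{thm:kim-systole}, then the approximation theorem of Ab\'ert et al.\ giving $b_m(M_N;\mathbb{Q})/\vol(M_N)\to\beta^{(2)}_m(\mathbb{H}^{2m})=2/\vol(S^{2m})$, then universal coefficients to pass to $\mathbb{F}_2$. Your extra checks (uniform discreteness, and recovering $\beta^{(2)}_m$ from Gauss--Bonnet--Chern instead of quoting it) only make explicit what the paper takes from the reference; note that your fallback along a nested chain $N=N_0^k$ would give the bound only along a subsequence, but the main route does not need it.
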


\begin{proof}
    The proof is given in Appendix~\ref{app:proof-of-constant-rate}.
\end{proof}

Combining the general Betti-number bound of Lemma~\ref{lem:xue-betti-number-bound} with the linear-growth result of Theorem~\ref{thm:constant-rate}, we obtain the scaling of the number of logical qubits for all \(D\) and \(i\). Using again the fact that the number of physical qubits \(n\) is proportional to \(\vol(M_N)\), these bounds can be expressed directly in terms of \(n\), as summarized in the following theorem.

\begin{theorem}[The number of logical qubits of the color codes]
\label{thm:number-of-logical-qubits-color-codes}
Let \(X_N\) be the \(D\)-colex on \(M_N\) obtained by pulling back the fixed colex \(X_0=\mathrm{sd}(Y_0)\) on the base manifold \(M_0\) along the covering map \(p_N\colon M_N\to M_0\). Let \(n\) denote the number of physical qubits of the color code \(CC_i(X_N)\). For every \(\epsilon>0\), the number of logical qubits satisfies
\begin{align}
    k_{CC_i(X_N)} =
    \begin{cases}
        \Theta(n) & \text{if } D\text{ is even and } i=D/2, \\
        \Omega(n^{\delta(i,D)-\epsilon}) & \text{otherwise}.
    \end{cases}
\end{align}
\end{theorem}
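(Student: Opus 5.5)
The plan is to combine three earlier results: the Betti-number formula for color codes, Eq.~\eqref{eq:num_of_logical_qubits_color_code}; the proportionality between \(n\) and \(\vol(M_N)\), Eq.~\eqref{eq:num-of-qubits-and-volume}; and the two Betti-number growth results, Lemma~\ref{lem:xue-betti-number-bound} and Theorem~\ref{thm:constant-rate}.

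\textbf{Reduction to Betti numbers of \(M_N\).} Since \(X_N\) is a triangulation of \(M_N\), simplicial and singular homology with \(\mathbb F_2\) coefficients agree, so
\[
k_{CC_i(X_N)}=\binom{D}{i}\,b_i(X_N)=\binom{D}{i}\,b_i(M_N;\mathbb F_2).
\]
By Eq.~\eqref{eq:num-of-qubits-and-volume}, \(\vol(M_N)=\frac{\vol(M_0)}{|\Delta_D(X_0)|}\,n\). The constant here depends only on the fixed base data, so any bound on \(b_i(M_N)\) in terms of \(\vol(M_N)\) becomes a bound in terms of \(n\), with \(D\)- and \(i\)-dependent constants absorbed.

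\textbf{Even case \(i=D/2\).} Theorem~\ref{thm:constant-rate} gives \(b_m(M_N;\mathbb F_2)\ge \frac{2}{\vol(S^D)}\vol(M_N)+o(\vol(M_N))\), which is \(\Omega(n)\). For the matching upper bound I would use the trivial estimate
\[
b_m(X_N)\le \dim C_m(X_N)=|\Delta_m(X_N)|=\deg(p_N)\,|\Delta_m(X_0)|,
\]
which is \(O(n)\) since \(n=\deg(p_N)\,|\Delta_D(X_0)|\). Together these give \(\Theta(n)\).

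\textbf{Remaining cases.} Apply Lemma~\ref{lem:xue-betti-number-bound}, extended to \(i>(D+1)/2\) by Poincaré duality as the paper does, with \(\Gamma=\Gamma_1\) and \(p=(N_0)\). Here \(\Gamma(q)=\Gamma_N\) for the ideals \(q=(N)\) with \(N_0\mid N\), which are sub-ideals of \((N_0)\). This gives \(b_i(M_N)\ge c_\epsilon \vol(M_N)^{\delta(i,D)-\epsilon}\) for all but finitely many \(N\), hence \(\Omega(n^{\delta(i,D)-\epsilon})\). The finitely many exceptions do not affect asymptotic statements.

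\textbf{Main obstacle.} The one step needing care is checking the hypotheses of Xue's lemma. The lattice \(\Gamma_1=SO_f(\mathbb Z[\sqrt2])\) arises from a quadratic form over \(\mathbb Q(\sqrt2)\), and the lemma requires that \(\Gamma(q)\) range over congruence subgroups indexed by ideals of \(\mathcal O_F=\mathbb Z[\sqrt2]\) contained in a torsion-free level. Our family uses ideals \((N)\) generated by rational integers, which form a subfamily. Lower bounds valid for all but finitely many ideals therefore apply to it.
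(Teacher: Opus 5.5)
Your proposal is correct and follows the paper's proof essentially step by step. Both write \(k=\binom{D}{i}b_i\), convert \(\vol(M_N)\) to \(n\) via Eq.~\eqref{eq:num-of-qubits-and-volume}, and invoke Theorem~\ref{thm:constant-rate} for \(i=D/2\) and Lemma~\ref{lem:xue-betti-number-bound} (extended by Poincaré duality) otherwise. The only difference is cosmetic: you bound \(b_m\le|\Delta_m(X_N)|=O(n)\) where the paper simply uses \(k\le n\). Your explicit check that \(\Gamma_N=\Gamma((N))\) with \((N)\subseteq(N_0)\) satisfies Xue's hypotheses is a detail the paper leaves implicit.
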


\begin{proof}
Suppose first that \(D\) is even and \(i=D/2\). By Theorem~\ref{thm:constant-rate} and Eq.~\eqref{eq:num_of_logical_qubits_color_code}, we have
\begin{align}
    k_{CC_i(X_N)}
    &= \binom{D}{i} b_i(X_N) \\
    &\ge
    \binom{D}{i}
    \frac{2}{\vol(S^D)}
    \vol(M_N)
    + o(\vol(M_N)) \\
    &=
    \binom{D}{i}
    \frac{2}{\vol(S^D)}
    \frac{\vol(M_0)}{|\Delta_D(X_0)|}
    n
    + o(n).
\end{align}
Thus, \(k_{CC_i(X_N)}=\Omega(n)\). Since the number of logical qubits cannot exceed the number of physical qubits, \(k_{CC_i(X_N)}\le n\), and hence
\begin{align}
    k_{CC_i(X_N)}=\Theta(n).
\end{align}

For all other \(D\) and \(i\), Lemma~\ref{lem:xue-betti-number-bound} together with Eq.~\eqref{eq:num_of_logical_qubits_color_code} gives
\begin{align}
    k_{CC_i(X_N)}
    &= \binom{D}{i} b_i(X_N) \\
    &\ge
    \binom{D}{i} c_{\epsilon}
    \left(
        \frac{\vol(M_0)}{|\Delta_D(X_0)|}n
    \right)^{\delta(i,D)-\epsilon}.
\end{align}
Since all factors other than \(n\) are independent of \(N\), it follows that
\begin{align}
    k_{CC_i(X_N)}
    =\Omega\left(n^{\delta(i,D)-\epsilon}\right).
\end{align}
\end{proof}

Table~\ref{tab:xue-exponents} lists the numerical values of the exponent
\(\delta(i,D)=\frac{2i(D-i)}{D(D+1)}\)
governing the polynomial lower bound on the logical qubit count, for every dimension \(4\le D\le10\) and every admissible type \(2\le i\le D-2\). This exponent is defined for all such \((D,i)\), including the case where \(D\) even and \(i=D/2\); however, in that case the polynomial bound \(\Omega(n^{\delta(i,D)-\epsilon})\) of Lemma~\ref{lem:xue-betti-number-bound} is superseded by the stronger, linear bound \(\Theta(n)\) of Theorem~\ref{thm:constant-rate}, so the tabulated value of \(\delta(i,D)\) at \(i=D/2\) does not represent the true asymptotic growth of \(k_{CC_{D/2}(X_N)}\) in that case.

\begin{table}[t]
\centering
\caption{Numerical values of the exponent \(\delta(i,D)=\frac{2i(D-i)}{D(D+1)}\) governing the polynomial lower bound on the number of logical qubits, as a function of the dimension \(D\) and the type \(i\). At \(i=D/2\) for even \(D\) (boldface), this bound is superseded by the linear bound of Theorem~\ref{thm:constant-rate}. The values are truncated to three decimal places.}
\label{tab:xue-exponents}
\begin{tabular}{c|ccccccc}
\toprule
\(D \backslash i\) & 2 & 3 & 4 & 5 & 6 & 7 & 8 \\
\midrule
4  & \textbf{0.400} &                &                &                &                &                &                \\
5  & 0.400          & 0.400          &                &                &                &                &                \\
6  & 0.380          & \textbf{0.428} & 0.380          &                &                &                &                \\
7  & 0.357          & 0.428         & 0.428          & 0.357          &                &                &                \\
8  & 0.333          & 0.416          & \textbf{0.444} & 0.416          & 0.333          &                &                \\
9  & 0.311          & 0.400          & 0.444          & 0.444          & 0.400          & 0.311          &                \\
10 & 0.290          & 0.381          & 0.436          & \textbf{0.454} & 0.436          & 0.381          & 0.290          \\
\bottomrule
\end{tabular}
\end{table}

The bounds on the number of logical qubits established for our hyperbolic color codes also apply to toric codes defined on the same manifolds. In particular, the corresponding type-\(D/2\) toric codes have constant rate in every even dimension \(D\), extending the result established in Ref.~\cite{Guth:2014zxf} for \(D=4\) and \(i=2\). The argument in that reference uses the Euler characteristic in four dimensions to obtain a lower bound on the second Betti number. In higher dimensions, the Euler characteristic involves additional Betti numbers, so the same argument does not directly yield a lower bound on the relevant Betti number.

To transfer our bounds to toric codes, we use Eq.~\eqref{eq:logical-qubits-color-codes-and-toric-codes}, which shows that the numbers of logical qubits of color and toric codes defined on the same manifold differ only by the constant factor \(\binom{D}{i}\). Moreover, for both code families, the number of physical qubits is proportional to the covering degree, since their underlying complexes are obtained by pulling back fixed complexes on the base manifold. Thus, the bounds established in Theorem~\ref{thm:number-of-logical-qubits-color-codes} also hold for the corresponding toric codes when expressed in terms of their respective numbers of physical qubits. We therefore obtain the following theorem.

\begin{theorem}[The number of logical qubits of the toric codes]
\label{thm:num-of-logical-toric-code}
    Let \(\mathcal L_N\) be the cellulation on \(M_N\) obtained by pulling back a fixed cellulation \(\mathcal L_0\) on the base manifold \(M_0\) along the covering map \(p_N:M_N\to M_0\). Let \(n\) denote the number of physical qubits of the toric code \(TC_i(\mathcal L_N)\). For every \(\epsilon>0\), the number of logical qubits satisfies
    \begin{align}
        k_{TC_i(\mathcal L_N)} = \begin{cases}
            \Theta(n) & \text{if } D \text{ is even and } i=D/2, \\
        \Omega(n^{\delta(i,D)-\epsilon}) & \text{otherwise}.
        \end{cases}
    \end{align}
\end{theorem}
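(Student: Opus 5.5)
The plan is to reduce to the bounds on Betti numbers already proved for the manifolds $M_N$, exactly as in the proof of Theorem~\ref{thm:number-of-logical-qubits-color-codes}.

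First, relate $n$ to $\vol(M_N)$. Since $\mathcal L_N=p_N^{-1}(\mathcal L_0)$ and $p_N$ is a covering of degree $\deg(p_N)=\vol(M_N)/\vol(M_0)$, every $i$-cell of $\mathcal L_0$ has exactly $\deg(p_N)$ preimages. The preimage of an open cell is evenly covered because the cell is contractible. Hence
\begin{align}
n=|\Lambda_i(\mathcal L_N)|=\frac{|\Lambda_i(\mathcal L_0)|}{\vol(M_0)}\vol(M_N),
\end{align}
so $n$ is proportional to $\vol(M_N)$, with a constant independent of $N$.

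Second, use Eq.~\eqref{eq:num_of_logical_qubits_toric_code}, which gives $k_{TC_i(\mathcal L_N)}=b_i(\mathcal L_N)$. Since $\mathcal L_N$ is a cellulation of $M_N$, cellular homology agrees with singular homology, so $b_i(\mathcal L_N)=b_i(M_N;\mathbb F_2)$. This is the only point needing care: the bounds of Theorem~\ref{thm:constant-rate} and Lemma~\ref{lem:xue-betti-number-bound} are stated for Betti numbers of $M_N$ itself. Theorem~\ref{thm:constant-rate} is explicitly over $\mathbb F_2$. For Lemma~\ref{lem:xue-betti-number-bound}, which may be stated with rational or real coefficients, invoke the universal coefficient theorem: $b_i(M;\mathbb F_2)\ge b_i(M;\mathbb Q)$, since torsion only increases the $\mathbb F_2$ rank. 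The lower bound therefore transfers to $\mathbb F_2$. Alternatively, simply combine Eq.~\eqref{eq:logical-qubits-color-codes-and-toric-codes} with Theorem~\ref{thm:number-of-logical-qubits-color-codes}, noting that the two codes have physical-qubit counts differing by a constant factor.

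Third, split into the two cases of the statement. If $D=2m$ and $i=m$, Theorem~\ref{thm:constant-rate} gives $k\ge \frac{2}{\vol(S^D)}\frac{\vol(M_0)}{|\Lambda_i(\mathcal L_0)|}n+o(n)=\Omega(n)$. Together with the trivial bound $k\le n$, this yields $\Theta(n)$. Otherwise, Lemma~\ref{lem:xue-betti-number-bound} gives $b_i(M_N)\ge c_\epsilon\vol(M_N)^{\delta(i,D)-\epsilon}$ for all but finitely many $N$. For $i>(D+1)/2$ this uses Poincaré duality with $\mathbb F_2$ coefficients, as in the paper. Substituting $\vol(M_N)\propto n$ gives $\Omega(n^{\delta(i,D)-\epsilon})$; the finitely many exceptional $N$ do not affect asymptotics.

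The main obstacle is bookkeeping rather than mathematics. We must confirm that the family $\{\Gamma_N\}_{N_0\mid N}$ falls within the scope of Lemma~\ref{lem:xue-betti-number-bound}, i.e.\ that the ideals $(N)$ lie inside the ideal $(N_0)$ for which $\Gamma(N_0)$ is torsion-free. This holds because $N_0\mid N$. We must also check the coefficient-field compatibility handled above.
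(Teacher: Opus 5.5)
Your proposal is correct and is essentially the paper's argument. The paper's proof transfers Theorem~\ref{thm:number-of-logical-qubits-color-codes} through the factor $\binom{D}{i}$ of Eq.~\eqref{eq:logical-qubits-color-codes-and-toric-codes}, using $n\propto\vol(M_N)$ for both pulled-back complexes; this is your stated alternative, and your main route applying Theorem~\ref{thm:constant-rate} and Lemma~\ref{lem:xue-betti-number-bound} directly to $b_i(\mathcal L_N)=b_i(M_N;\mathbb F_2)$ is what that color-code theorem does internally. Your explicit checks, that the rational-coefficient bound transfers to $\mathbb F_2$ by universal coefficients and that $(N)\subseteq(N_0)$ places the family within the scope of Lemma~\ref{lem:xue-betti-number-bound}, supply details the paper leaves implicit.
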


\begin{proof}
The toric code \(TC_i(\mathcal L_N)\) is defined on the same manifold \(M_N\) as the corresponding color code considered in Theorem~\ref{thm:number-of-logical-qubits-color-codes}. Therefore, by Eq.~\eqref{eq:logical-qubits-color-codes-and-toric-codes}, their numbers of logical qubits differ only by the constant factor \(\binom{D}{i}\). Hence the scaling established in Theorem~\ref{thm:number-of-logical-qubits-color-codes} carries over directly to \(TC_i(\mathcal L_N)\), which proves the claim.
\end{proof}

The constant-rate case of Theorem~\ref{thm:num-of-logical-toric-code}, namely \(k=\Theta(n)\) for even \(D\) and \(i=D/2\), was also independently established by Alex Essery~\cite{Essery:2026private}.

Both the color codes and the corresponding toric codes achieve constant rate for \(i=D/2\) in every even dimension \(D\), while in all other cases considered above the number of logical qubits still grows at least polynomially with the number of physical qubits. Combined with the distance bounds derived in the previous section, these results indicate that, for both code families, the most favorable asymptotic parameters are obtained for even \(D\) with \(i=D/2\), and with \(D\) chosen as small as possible.

\section{Conclusion and Outlook}

In conclusion, we have constructed color codes on general manifolds admitting a colex triangulation, and applied this construction to the family of hyperbolic manifolds introduced in Ref.~\cite{Guth:2014zxf} to obtain, for every dimension \(D\geq4\), a hyperbolic color code whose distance grows polynomially in the number of physical qubits \(n\). For even \(D\) and \(i=D/2\), the resulting code has constant rate, while in every other case the number of logical qubits still grows polynomially in \(n\). We have moreover determined the exponent for polynomial lower bounds of the code distance and the number of logical qubits explicitly as a function of \(D\) and \(i\), and compiled these exponents into a catalog of parameters for hyperbolic color codes as shown in Tables~\ref{tab:systole-exponents} and \ref{tab:xue-exponents}.

Looking ahead, several directions remain open. It would be of interest to construct small, explicit examples of our hyperbolic color codes; explicit hyperbolic toric codes have already been obtained in four dimensions~\cite{Londe:2019jie, Breuckmann:2020jyn}, and an analogous construction for the color-code case would be a natural next step. 

A further open problem is to develop a polynomial-time decoder for hyperbolic color codes that simultaneously admits a provable finite threshold and a polynomially growing decoding radius. Existing results for hyperbolic toric codes achieve only part of this goal. An almost-linear-time local decoder with a provable decoding radius of \(O(\log n)\) has been constructed~\cite{hastingsDecodingHyperbolicSpaces2014}. An efficient local decoder with a finite threshold under stochastic noise is also known, although its guarantee against arbitrary errors remains \(O(\log n)\)~\cite{Londe:2019jie}. In contrast, a generalized union-find decoder running in polynomial time provably corrects arbitrary errors of weight up to \(A n^\alpha\), for some \(A,\alpha>0\), but a finite threshold has not been established~\cite{Delfosse:2021rqp}. Such a threshold has been proved for the union-find decoder on the surface code under a circuit-level local stochastic error model~\cite{Yoshida:2026vox}. It therefore remains open to construct a polynomial-time decoder that combines a polynomially growing decoding radius with a provable finite threshold and corresponding exponential suppression in a power of the block size.

Finally, it would be desirable to endow our codes with transversal non-Clifford gates, building on existing constructions of transversal \(T\) gates~\cite{Bombin:2015tpp, Kubica:2014jue}, transversal \(CS\) gates~\cite{Brown:2024kmk}, and transversal \(CCZ\) gates~\cite{Kubica:2015mta, Bombin:2018jad} for color codes. Along similar lines, Ref.~\cite{Zhu:2023xfg} constructed high-rate three-dimensional hyperbolic color codes that admit constant-depth logical \(CCZ\) gates determined by the triple-intersection structure of the underlying manifold. Combined with the nonvanishing rate and polynomial distance established here, such transversal gates could provide a basis for exploring constant-overhead magic-state distillation using qLDPC codes. This direction is motivated by the open question posed in Ref.~\cite{Wills:2024wid} concerning the optimal parameters of qLDPC codes supporting transversally implementable non-Clifford gates.

\begin{acknowledgements}
We acknowledge Tom Scruby and Alex Essery for discussion. This work was supported by JST Moonshot R\&D Grant No. JPMJMS256J, JST PRESTO Grant No. JPMJPR23FC, JST CREST Grant No. JPMJCR25I5,  and Faculty Research Funding from Google Quantum AI.
\end{acknowledgements}

\bibliography{main}

\appendix

\section{Proof of Theorem~\ref{thm:kim-systole}}
\label{app:composite-ideal-systole}

Ref.~\cite{kimSystoleLocallySymmetric2020} proves the systolic bound of Theorem~\ref{thm:kim-systole} only for prime ideals \(I\subset\mathcal O_F\); this restriction enters solely through the group-order bound \(|SO_f(\mathcal O_F/I)|\le N(I)^{D(D+1)/2}\), which in Ref.~\cite{kimSystoleLocallySymmetric2020} is quoted directly from the classification of finite groups of Lie type over a field \(\mathcal O_F/I\). We remove this restriction, following the strategy used in Ref.~\cite[Prop.~5.1--5.2]{murilloSystoleCongruenceCoverings2019} for the \(\mathrm{Spin}(1,D)\) case.

Throughout this appendix, we fix the following notation. Without loss of generality, \(f\) is diagonal in a suitable basis,
\begin{align}
    f = a_1x_1^2 - a_2x_2^2 - \cdots - a_{D+1}x_{D+1}^2, \quad a_i\in\mathcal O_F\setminus\{0\},
\end{align}
so that \(SO_f(R)=\{A\in\mathrm{Mat}_{D+1}(R) \mid A^\top J A = J,\ \det A =1\}\) for \(J=\mathrm{diag}(a_1,-a_2,\dots,-a_{D+1})\) and any commutative \(\mathcal O_F\)-algebra \(R\). We write
\begin{align}
    D_f \coloneqq  \prod_{i=1}^{D+1} a_i \in \mathcal O_F
\end{align}
for the discriminant of \(f\), and let \(S\) denote the (finite) set of prime ideals of \(\mathcal O_F\) dividing \(2D_f\).

The following base case is already established in Sec.~1.1 of \cite{kimSystoleLocallySymmetric2020}; the bound there is stated as \(N(\mathfrak p)^{m(2m-1)}\) for \(D=2m-1\) and \(N(\mathfrak p)^{m(2m+1)}\) for \(D=2m\), both of which equal \(N(\mathfrak p)^{D(D+1)/2}\).

\begin{lemma}[Order bound for \(SO_f\) over a residue field]
    \label{lem:base-case}
    For any prime ideal \(\mathfrak{p}\subset \mathcal O_F\) with no prime factor in \(S\),
    \begin{align}
        |SO_f(\mathcal O_F/\mathfrak{p})| \le N(\mathfrak{p})^{D(D+1)/2}.
    \end{align}
\end{lemma}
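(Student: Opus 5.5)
Since \(\mathfrak p\) does not divide \(2D_f\), the residue field \(k\coloneqq\mathcal O_F/\mathfrak p\) is a finite field of odd characteristic with \(q\coloneqq N(\mathfrak p)\) elements. Moreover, every \(a_i\) is a unit in \(k\). Hence the reduction \(\bar f\) of \(f\) is a nondegenerate quadratic form in \(D+1\) variables over \(k\), and \(SO_f(\mathcal O_F/\mathfrak p)=SO(\bar f)(k)\) is the special orthogonal group of a nondegenerate quadratic space of dimension \(D+1\) over \(\mathbb F_q\).

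The plan is to invoke the classical order formulas for these groups and then compare them with \(q^{D(D+1)/2}\). We split into two cases according to the parity of \(D+1\).

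\emph{Odd dimension.} If \(D+1=2m+1\) (i.e.\ \(D=2m\)), there is a single isometry class up to scaling, and
\begin{align}
|SO_{2m+1}(q)| = q^{m^2}\prod_{j=1}^{m}(q^{2j}-1).
\end{align}
Each factor satisfies \(q^{2j}-1<q^{2j}\), so
\begin{align}
|SO_{2m+1}(q)| < q^{m^2+m(m+1)} = q^{m(2m+1)} = q^{D(D+1)/2}.
\end{align}

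\emph{Even dimension.} If \(D+1=2m\) (i.e.\ \(D=2m-1\)), the form is either split or non-split, with \(\epsilon=\pm1\), and
\begin{align}
|SO^{\epsilon}_{2m}(q)| = q^{m(m-1)}(q^m-\epsilon)\prod_{j=1}^{m-1}(q^{2j}-1).
\end{align}
Here \(q^m-\epsilon\le q^m+1\) and \(\prod_{j=1}^{m-1}(q^{2j}-1)\le q^{m(m-1)}-1\) when \(m\ge2\). Since \(q^m+1\le q^m\cdot\tfrac{q+1}{q}\), the factor \((1+q^{-m})\) coming from \(q^m+1\) is compensated by the deficit in the product: already \((q^2-1)(q^m+1)\le q^{m+2}\) holds for \(m\ge2\) and \(q\ge3\). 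This gives
\begin{align}
|SO^{\epsilon}_{2m}(q)| \le q^{m(m-1)+m+m(m-1)} = q^{m(2m-1)} = q^{D(D+1)/2}.
\end{align}

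The main point needing care is exactly this even-dimensional non-split case, where one factor exceeds \(q^m\). I would handle it by the explicit pairing of \((q^m+1)\) with \((q^2-1)\) described above, using \(q\) odd (so \(q\ge3\)) and \(m\ge 2\), which holds since \(D\ge 3\) in all our applications. Alternatively, one can cite Sec.~1.1 of Ref.~\cite{kimSystoleLocallySymmetric2020} directly, which states these bounds for the two parities.

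A second point to check is that \(SO_f\) over \(\mathcal O_F/\mathfrak p\) in the paper's matrix definition (\(A^\top JA=J\), \(\det A=1\)) coincides with the finite group \(SO(\bar f)(k)\) whose order is given by the formulas above. This is immediate because \(J\) is invertible mod \(\mathfrak p\) and \(2\) is a unit there.
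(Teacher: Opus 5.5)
Your proof is correct. The paper does not actually argue this lemma: it quotes the bounds \(N(\mathfrak p)^{m(2m+1)}\) (for \(D=2m\)) and \(N(\mathfrak p)^{m(2m-1)}\) (for \(D=2m-1\)) from Sec.~1.1 of Ref.~\cite{kimSystoleLocallySymmetric2020} and notes that both equal \(N(\mathfrak p)^{D(D+1)/2}\). That is exactly the fallback you mention at the end.

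Your route supplies the content behind that citation. You use \(\mathfrak p\nmid 2D_f\) to identify the matrix group \(\{A:\ A^\top JA=J,\ \det A=1\}\) over \(\mathcal O_F/\mathfrak p\) with the special orthogonal group of a nondegenerate quadratic space over \(\mathbb F_q\), \(q\) odd. The standard order formulas then give the claim, and your exponent counts \(m^2+m(m+1)=m(2m+1)\) and \(2m(m-1)+m=m(2m-1)\) match the target.

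The only nontrivial estimate is the non-split case with \(D+1=2m\). You handle it correctly via \((q^2-1)(q^m+1)=q^{m+2}-(q^m-q^2)-1<q^{m+2}\) for \(m\ge2\). This inequality holds for every \(q\ge2\), so oddness of \(q\) is needed only for the order formulas, not for the estimate. Your intermediate remark \(q^m+1\le q^m(q+1)/q\) can be dropped.

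What your version adds over the citation is that it is self-contained and makes visible the hypothesis \(D\ge2\), which the paper leaves implicit. You correctly flag it: for \(D=1\), an anisotropic reduction \(\bar f\) gives \(|SO_2^-(q)|=q+1>q^{D(D+1)/2}\), so the lemma as stated needs that restriction. This is harmless here, since the paper only applies it with \(D\ge4\).
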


Lemma~\ref{lem:base-case} bounds \(|SO_f(\mathcal O_F/\mathfrak p^r)|\) only for \(r=1\). To induct on \(r\), we bound the successive layers of the filtration \(SO_f(\mathcal O_F/\mathfrak p^{r+1})\twoheadrightarrow SO_f(\mathcal O_F/\mathfrak p^{r})\), \(r\ge1\), by computing the kernel of each reduction map directly.

\begin{lemma}[Order of the kernel of the prime-power reduction]
    \label{lem:prime-power-layer}
    For any prime ideal \(\mathfrak p\notin S\) and any \(r\ge1\),
    \begin{align}
        \big|\ker\big(SO_f(\mathcal O_F/\mathfrak p^{r+1})\to SO_f(\mathcal O_F/\mathfrak p^{r})\big)\big|
        = N(\mathfrak p)^{D(D+1)/2}.
    \end{align}
\end{lemma}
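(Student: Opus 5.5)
The plan is to identify the kernel with the Lie algebra \(\mathfrak{so}_f\) over the residue field \(k\coloneqq\mathcal O_F/\mathfrak p\), and then count. Write \(R_r\coloneqq\mathcal O_F/\mathfrak p^{r}\), and fix a uniformizer \(\varpi\in\mathfrak p\setminus\mathfrak p^2\). Since \(\mathfrak p^r/\mathfrak p^{r+1}\cong k\) as additive groups, every element of the kernel has the form \(A=I+\varpi^r X\) with \(X\in\mathrm{Mat}_{D+1}(R_{r+1})\). Only \(X\bmod\mathfrak p\) matters, so \(X\) is well defined in \(\mathrm{Mat}_{D+1}(k)\).

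\emph{Step 1: conditions on \(X\).} Substitute \(A=I+\varpi^rX\) into the defining equations \(A^\top JA=J\) and \(\det A=1\). Because \(r\ge1\), we have \(2r\ge r+1\), so \(\varpi^{2r}=0\) in \(R_{r+1}\). The orthogonality condition therefore becomes \(\varpi^r(X^\top J+JX)=0\), which is equivalent to
\begin{align}
\bar X^\top \bar J+\bar J\bar X=0 \quad\text{in } \mathrm{Mat}_{D+1}(k).
\end{align}
Similarly, \(\det(I+\varpi^rX)=1+\varpi^r\operatorname{tr}X\), so the determinant condition becomes \(\operatorname{tr}\bar X=0\).

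\emph{Step 2: a bijection with the Lie algebra.} The map \(X\mapsto I+\varpi^rX\) is then a bijection from
\begin{align}
\mathfrak{so}_f(k)\coloneqq\{\bar X\in\mathrm{Mat}_{D+1}(k)\mid \bar X^\top\bar J+\bar J\bar X=0\}
\end{align}
onto the kernel. Surjectivity is Step 1. For injectivity: if \(\varpi^rX=\varpi^rX'\) in \(R_{r+1}\), then \(X\equiv X'\pmod{\mathfrak p}\).

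The trace condition is automatic. Because \(\mathfrak p\notin S\), each \(a_i\) is a unit mod \(\mathfrak p\), so \(\bar J\) is invertible. Then \(\bar X=-\bar J^{-1}\bar X^\top\bar J\) gives \(\operatorname{tr}\bar X=-\operatorname{tr}\bar X\). Since \(2\) is a unit in \(k\) (again because \(\mathfrak p\notin S\)), this forces \(\operatorname{tr}\bar X=0\).

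\emph{Step 3: counting.} Set \(Y=\bar J\bar X\). The condition on \(\bar X\) becomes that \(Y\) is skew-symmetric. In characteristic \(\neq2\), a skew-symmetric matrix has zero diagonal and is determined by its strictly upper triangular entries. These \(\binom{D+1}{2}=D(D+1)/2\) entries are free, so
\begin{align}
|\mathfrak{so}_f(k)|=N(\mathfrak p)^{D(D+1)/2}.
\end{align}

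The main obstacle is the bookkeeping in Step 2, namely checking that every kernel element really is of the form \(I+\varpi^rX\) and that \(X\) is well defined mod \(\mathfrak p\). This holds because \(\mathfrak p^r R_{r+1}=\varpi^rR_{r+1}\) and \(\mathfrak p^r/\mathfrak p^{r+1}\) is a one-dimensional \(k\)-vector space, as \(\mathcal O_F\) is Dedekind. The remaining point to watch is that the exclusion of \(S\) is used exactly twice: for the invertibility of \(\bar J\) and for dividing by \(2\).

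The statement asks only for the kernel's order, so no surjectivity of the reduction map is needed. This suffices for the later induction, which only requires an upper bound on the group orders.
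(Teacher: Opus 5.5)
Your proof is correct and follows essentially the same route as the paper. Both write kernel elements as \(I+T\) with entries of \(T\) in \(\mathfrak p^r/\mathfrak p^{r+1}\), linearize using \(\mathfrak p^{2r}\subseteq\mathfrak p^{r+1}\), note that the trace condition is automatic because \(2\) and the \(a_i\) are units, and count skew-symmetric matrices via \(T\mapsto JT\). The differences are cosmetic: you fix a uniformizer to identify \(\mathfrak p^r/\mathfrak p^{r+1}\) with \(k\) and work in \(\mathfrak{so}_f(k)\), and you get the vanishing trace by conjugating with \(\bar J\) rather than from the diagonal entries \(2J_{ii}T_{ii}\equiv 0\).
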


\begin{proof}
    Write \(R_r\coloneqq \mathcal O_F/\mathfrak p^r\). An element of the kernel is represented by \(A\in SO_f(R_{r+1})\) whose image in \(SO_f(R_r)\) is the identity, i.e., \(A\equiv I\pmod{\mathfrak p^r}\); we may thus write
    \begin{align}
        A = I + T, \quad T\in \mathrm{Mat}_{D+1}(\mathfrak p^r/\mathfrak p^{r+1}),
    \end{align}
    and the kernel is in bijection with the set of such \(T\) for which \(A\) satisfies the defining equations of \(SO_f\) over \(R_{r+1}\). Since \(\mathfrak p\notin S\), each \(a_i\) and \(2\) are units in the local ring \(R_{r+1}\). In particular \(J=\mathrm{diag}(a_1,-a_2,\dots,-a_{D+1})\) is invertible over \(R_{r+1}\).

    As \(r\ge1\), every entry of \(T\) lies in \(\mathfrak p^r/\mathfrak p^{r+1}\), so every entry of \(T^\top JT\) lies in \(\mathfrak p^{2r}\subseteq\mathfrak p^{r+1}\), i.e., \(T^\top JT\equiv0\). Hence
    \begin{align}
        A^\top JA &= J + T^\top J + JT + T^\top JT \notag \\
        &\equiv J + T^\top J + JT \pmod{\mathfrak p^{r+1}},
    \end{align}
    so \(A^\top JA=J\) is equivalent to the linear condition
    \begin{align}
        T^\top J + JT \equiv 0 \pmod{\mathfrak p^{r+1}}. \label{eq:linearized}
    \end{align}
    Likewise \(\det(I+T)\equiv 1+\mathrm{tr}(T)\pmod{\mathfrak p^{r+1}}\), since every other term in the expansion of the determinant involves a product of at least two entries of \(T\) and hence lies in \(\mathfrak p^{2r}\subseteq\mathfrak p^{r+1}\); so \(\det A=1\) is equivalent to
    \(\mathrm{tr}(T)\equiv0\).

    The diagonal entries of Eq.\eqref{eq:linearized} give \(2J_{ii}T_{ii}\equiv0\) for each \(i\); since \(\mathfrak p\nmid 2D_f\), both \(2\) and \(J_{ii}\in\{a_1,-a_2,\dots,-a_{D+1}\}\) are units in \(R_{r+1}\), so \(2J_{ii}\) is a unit and this forces \(T_{ii}\equiv0\) for all \(i\), so \(\mathrm{tr}(T)\equiv0\) holds automatically and the determinant condition imposes no further
    constraint. The kernel is thus exactly
    \begin{align}
        \{I+T \mid T\in\mathrm{Mat}_{D+1}(\mathfrak p^r/\mathfrak p^{r+1}),\ T^\top J+JT\equiv0\}.
    \end{align}

    Since \(J^\top=J\), condition \eqref{eq:linearized} is equivalent to \((JT)^\top=T^\top J=-JT\), i.e., \(JT\) is skew-symmetric. As \(J\) is invertible over \(R_{r+1}\), \(T\mapsto JT\) is a bijection carrying the solutions of Eq.\eqref{eq:linearized} onto the set of skew-symmetric matrices with entries in \(\mathfrak p^r/\mathfrak p^{r+1}\). The space of skew-symmetric \((D+1)\times(D+1)\) matrices over \(\mathcal O_F/\mathfrak p\cong\mathbb F_{N(\mathfrak p)}\) has dimension \(\binom{D+1}{2}=D(D+1)/2\); since \(\mathfrak p^r/\mathfrak p^{r+1}\) is a one-dimensional \(\mathbb F_{N(\mathfrak p)}\)-vector space, the corresponding set of skew-symmetric matrices with entries in \(\mathfrak p^r/\mathfrak p^{r+1}\) has cardinality \(N(\mathfrak p)^{D(D+1)/2}\). Combining the above bijections,
    \begin{align}
        \big|\ker\big(SO_f(\mathcal O_F/\mathfrak p^{r+1})\to SO_f(\mathcal O_F/\mathfrak p^{r})\big)\big|
        = N(\mathfrak p)^{D(D+1)/2}.
    \end{align}
\end{proof}

Combining Lemmas~\ref{lem:base-case} and~\ref{lem:prime-power-layer} via the Chinese Remainder Theorem now yields the order bound for \(SO_f(\mathcal O_F/I)\) at an arbitrary (not necessarily prime) ideal \(I\).

\begin{proposition}[Order bound for \(SO_f(\mathcal O_F/I)\) at composite ideals~{\cite[Prop.~5.2]{murilloSystoleCongruenceCoverings2019}}]
    \label{prop:composite-order-bound}
    For any ideal \(I\subset\mathcal O_F\) with no prime factor in \(S\),
    \begin{align}
        |SO_f(\mathcal O_F/I)| \le N(I)^{D(D+1)/2}.
    \end{align}
\end{proposition}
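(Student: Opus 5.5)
The plan is to factor \(I\), use the Chinese Remainder Theorem to reduce to prime powers, and then climb each prime-power filtration using Lemmas~\ref{lem:base-case} and~\ref{lem:prime-power-layer}.

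\textbf{Step 1: reduction to prime powers.} Write \(I=\prod_{k}\mathfrak p_k^{r_k}\) with distinct primes \(\mathfrak p_k\notin S\). This is possible because \(\mathcal O_F\) is a Dedekind domain. By the Chinese Remainder Theorem, \(\mathcal O_F/I\cong\prod_k \mathcal O_F/\mathfrak p_k^{r_k}\) as rings. The functor \(R\mapsto SO_f(R)\) is cut out by polynomial equations, namely \(A^\top JA=J\) and \(\det A=1\), so it commutes with finite products of rings. Hence
\begin{align}
|SO_f(\mathcal O_F/I)|=\prod_k |SO_f(\mathcal O_F/\mathfrak p_k^{r_k})|.
\end{align}
The norm is also multiplicative, \(N(I)=\prod_k N(\mathfrak p_k)^{r_k}\). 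It therefore suffices to prove \(|SO_f(\mathcal O_F/\mathfrak p^{r})|\le N(\mathfrak p)^{rD(D+1)/2}\) for each prime \(\mathfrak p\notin S\) and each \(r\ge1\).

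\textbf{Step 2: induction on \(r\).} The case \(r=1\) is exactly Lemma~\ref{lem:base-case}. For the inductive step, consider the reduction homomorphism \(\rho\colon SO_f(\mathcal O_F/\mathfrak p^{r+1})\to SO_f(\mathcal O_F/\mathfrak p^{r})\). By the first isomorphism theorem,
\begin{align}
|SO_f(\mathcal O_F/\mathfrak p^{r+1})|=|\ker\rho|\,|\im\rho|\le|\ker\rho|\,|SO_f(\mathcal O_F/\mathfrak p^{r})|.
\end{align}
Lemma~\ref{lem:prime-power-layer} gives \(|\ker\rho|=N(\mathfrak p)^{D(D+1)/2}\). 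Combining this with the inductive hypothesis yields \(N(\mathfrak p)^{(r+1)D(D+1)/2}\).

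Surjectivity of \(\rho\) is never needed, since we only require an upper bound. This matters because the paper's notation \(\twoheadrightarrow\) suggests surjectivity, which would require a Hensel-type smoothness argument; bounding \(|\im\rho|\) by the size of the whole target avoids that entirely.

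\textbf{Main obstacle.} The only delicate point is Step 1: checking that the CRT isomorphism is compatible with \(SO_f\). Concretely, a matrix over \(\prod_k R_k\) is the same as a tuple of matrices over the factors \(R_k\). Both the orthogonality equations and the determinant equation hold componentwise, because ring operations in a product are computed componentwise.

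We also need the hypothesis "no prime factor in \(S\)" to be inherited by each \(\mathfrak p_k\). This is immediate, so both lemmas apply to every factor. Multiplying the prime-power bounds over all \(k\) then gives \(|SO_f(\mathcal O_F/I)|\le N(I)^{D(D+1)/2}\).
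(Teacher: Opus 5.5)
Your proposal is correct and takes the same route as the paper's proof. Both reduce to prime powers via the Chinese Remainder Theorem, induct on the exponent with Lemma~\ref{lem:base-case} as the base case and Lemma~\ref{lem:prime-power-layer} for each layer, and finish by multiplicativity of the norm. Your componentwise check of the defining equations of \(SO_f\) over a product ring, and your remark that only \(|\im\rho|\le|SO_f(\mathcal O_F/\mathfrak p^{r})|\) is needed rather than surjectivity, spell out details the paper leaves implicit.
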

\begin{proof}
    Write \(I=\prod_i \mathfrak p_i^{r_i}\). By the Chinese Remainder Theorem, \(SO_f(\mathcal O_F/I)\cong\prod_i SO_f(\mathcal O_F/\mathfrak p_i^{r_i})\). For each \(i\), Lemma~\ref{lem:base-case} gives \(|SO_f(\mathcal O_F/\mathfrak p_i)|\le N(\mathfrak p_i)^{D(D+1)/2}\), and Lemma~\ref{lem:prime-power-layer} gives, by induction on \(r_i\), \(|SO_f(\mathcal O_F/\mathfrak p_i^{r_i})|\le N(\mathfrak p_i)^{r_i D(D+1)/2}\). Multiplicativity of \(N\) over coprime ideals gives the claim.
\end{proof}

\begin{lemma}[Index bound via the order of \(SO_f(\mathcal O_F/I)\)~{\cite[Prop.~5.1]{murilloSystoleCongruenceCoverings2019}}]
    \label{lem:index-bound}
    For any ideal \(I\subset\mathcal O_F\),
    \begin{align}
        [\Gamma:\Gamma(I)] \le |SO_f(\mathcal O_F/I)|.
    \end{align}
\end{lemma}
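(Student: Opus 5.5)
The idea is to identify \(\Gamma/\Gamma(I)\) with a subgroup of the finite group \(SO_f(\mathcal O_F/I)\) via reduction modulo \(I\), and then apply Lagrange's theorem.

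First we define the reduction map. Let
\begin{align}
    \rho_I\colon \Gamma=SO_f(\mathcal O_F)\to SO_f(\mathcal O_F/I)
\end{align}
be entrywise reduction of matrices modulo \(I\). This map is well defined. If \(A\in\mathrm{Mat}_{D+1}(\mathcal O_F)\) satisfies \(A^\top JA=J\) and \(\det A=1\), then reducing both identities modulo \(I\) shows that \(\bar A\) satisfies the same polynomial equations over \(\mathcal O_F/I\). Reduction \(\mathcal O_F\to\mathcal O_F/I\) is a ring homomorphism, so it commutes with matrix multiplication, transposition and determinants. Hence \(\rho_I\) is a group homomorphism.

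Next we identify the kernel. By the definition recalled in Theorem~\ref{thm:kim-systole},
\begin{align}
    \Gamma(I)=\Gamma\cap\ker\rho_I=\ker\rho_I .
\end{align}
In particular \(\Gamma(I)\) is a normal subgroup of \(\Gamma\). The first isomorphism theorem then gives
\begin{align}
    \Gamma/\Gamma(I)\cong\rho_I(\Gamma)\le SO_f(\mathcal O_F/I).
\end{align}

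Finally we count. The ring \(\mathcal O_F/I\) is finite for nonzero \(I\), with cardinality \(N(I)\). So \(\mathrm{Mat}_{D+1}(\mathcal O_F/I)\) is finite, and therefore \(SO_f(\mathcal O_F/I)\) is a finite group. Lagrange's theorem gives \(|\rho_I(\Gamma)|\le|SO_f(\mathcal O_F/I)|\). Hence
\begin{align}
    [\Gamma:\Gamma(I)]=|\rho_I(\Gamma)|\le|SO_f(\mathcal O_F/I)|.
\end{align}

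There is no real obstacle in this argument. We never need surjectivity of \(\rho_I\), which would be the hard direction (strong approximation), because only an upper bound is claimed. The one point to state carefully is that the defining equations of \(SO_f\) have coefficients in \(\mathcal O_F\), so they make sense over any \(\mathcal O_F\)-algebra; this holds because \(J\) has entries in \(\mathcal O_F\) by the diagonalization fixed at the start of the appendix. If \(I=0\), then \(\Gamma(I)=\Gamma\), the index is \(1\), and the bound holds trivially (reading \(|SO_f(\mathcal O_F)|\) as infinite if necessary).
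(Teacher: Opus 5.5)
Your proof is correct and follows the paper's argument exactly. Reduction modulo \(I\) is a homomorphism with kernel \(\Gamma(I)\), and the first isomorphism theorem embeds \(\Gamma/\Gamma(I)\) into \(SO_f(\mathcal O_F/I)\). Your added remarks on well-definedness, finiteness, and the case \(I=0\) are harmless, though you only need the inclusion \(\rho_I(\Gamma)\subseteq SO_f(\mathcal O_F/I)\), not Lagrange's theorem.
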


\begin{proof}
    Reduction modulo \(I\) gives a homomorphism \(\rho\colon\Gamma=SO_f(\mathcal O_F)\to SO_f(\mathcal O_F/I)\) with \(\ker\rho=\Gamma(I)\). By the first isomorphism theorem, \(\Gamma/\Gamma(I)\cong\operatorname{im}\rho\subseteq SO_f(\mathcal O_F/I)\), so \([\Gamma:\Gamma(I)]\le|SO_f(\mathcal O_F/I)|\).
\end{proof}

\begin{proof}[Proof of Theorem~\ref{thm:kim-systole}]
    As in the derivation of the theorem in Ref.\cite[Section 1.1]{kimSystoleLocallySymmetric2020}, the trace–displacement estimate for elements of \(\Gamma(I)\subset SO(1,D)\) depends on the ideal \(I\) only through \(N(I)\), and does not require \(I\) to be prime. It therefore gives, for arbitrary \(I\subset\mathcal O_F\),
    \begin{align}
        \sys_1(M_I) \ge 2\log(N(I)) - c,
    \end{align}
    where \(c\) is independent of \(I\).

    On the other hand, Lemma~\ref{lem:index-bound} together with Proposition~\ref{prop:composite-order-bound} gives
    \begin{align}
        [\Gamma:\Gamma(I)] \le |SO_f(\mathcal O_F/I)| \le N(I)^{D(D+1)/2},
    \end{align}
    i.e., \(N(I) \ge [\Gamma:\Gamma(I)]^{2/D(D+1)}\). Since \(\vol(M_I)=[\Gamma:\Gamma(I)]\vol(M)\), combining this with the displacement estimate above yields
    \begin{align}
        \sys_1(M_I) &\ge 2\log(N(I)) - c \notag\\
        &\ge \frac{4}{D(D+1)}\log\bigl([\Gamma:\Gamma(I)]\bigr) - c \notag\\
        &= \frac{4}{D(D+1)}\log\!\left(\frac{\vol(M_I)}{\vol(M)}\right) - c \notag\\
        &= \frac{4}{D(D+1)}\log(\vol(M_I)) - d,
    \end{align}
    where \(d \coloneqq  c + \frac{4}{D(D+1)}\log(\vol(M))\) is independent of \(I\subset\mathcal O_F\).
\end{proof}

\section{Proof of Theorem~\ref{thm:constant-rate}}
\label{app:proof-of-constant-rate}

In this appendix we prove Theorem~\ref{thm:constant-rate}, using the result of Ref.~\cite{abertGrowth$L^2$invariantsSequences2017a} to control the growth of Betti numbers of \(M_N\).

\begin{lemma}[Benjamini--Schramm convergence of \(M_N\) to \(\mathbb{H}^D\)]
\label{lem:BS}
The family \(\{M_N\}_{N_0\mid N}\) Benjamini--Schramm (BS) converges to \(\mathbb H^{D}\) as \(N\to\infty\), i.e., \(\vol((M_N)_{<R})/\vol(M_N)\to0\) for every \(R>0\), where
\((M)_{<R}\) denotes the set of points of injectivity radius less than \(R\).
\end{lemma}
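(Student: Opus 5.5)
The plan is to reduce Benjamini--Schramm convergence to the systole growth already established. The key observation is that in the tower \(\{M_N\}\), the injectivity radius is not merely large on average: it is large \emph{everywhere}. For a closed hyperbolic manifold \(M=\Gamma\backslash\mathbb H^D\) with \(\Gamma\) torsion-free, the injectivity radius at a point \(x\) is \(\frac12\min_{\gamma\neq 1}d(\tilde x,\gamma\tilde x)\). Every nontrivial \(\gamma\) is loxodromic, so this quantity is at least half its translation length. Hence
\begin{align}
    \inj_x(M)\ \ge\ \tfrac12\sys_1(M)\qquad\text{for every } x\in M,
\end{align}
which is the pointwise version of Eq.~\eqref{eq:injectivity-radius}. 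Consequently \((M)_{<R}=\emptyset\) as soon as \(\sys_1(M)\ge 2R\).

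It therefore suffices to show \(\sys_1(M_N)\to\infty\) as \(N\to\infty\) with \(N_0\mid N\). First, I will identify \(\Gamma_N\) with the principal congruence subgroup \(\Gamma(I)\) for the ideal \(I=N\mathcal O_F\), where \(F=\mathbb Q(\sqrt2)\) and \(\mathcal O_F=\mathbb Z[\sqrt2]\). This matches the definition in Eq.~\eqref{eq:principal_congruence_subgroups}, since \(A+B\sqrt2\equiv I \pmod{N\mathbb Z[\sqrt2]}\) means exactly \(A\equiv I\) and \(B\equiv 0\pmod N\).

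Second, I will note that \(N(I)=N^2\to\infty\), and that \(\vol(M_N)=[\Gamma_1:\Gamma_N]\vol(\Gamma_1\backslash\mathbb H^D)\to\infty\) (interpreting the orbifold volume appropriately). The index diverges because \(\bigcap_N\Gamma_N=\{1\}\) and \(\Gamma_1\) is infinite. Alternatively, one can cite directly the intermediate estimate \(\sys_1(M_I)\ge 2\log N(I)-c\) from the proof of Theorem~\ref{thm:kim-systole}, which avoids volume altogether.

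Third, applying Theorem~\ref{thm:kim-systole} gives \(\sys_1(M_N)\ge \frac{4}{D(D+1)}\log\vol(M_N)-c\to\infty\). For any fixed \(R\), then, \((M_N)_{<R}\) is empty for all sufficiently large \(N\), and the ratio \(\vol((M_N)_{<R})/\vol(M_N)\) is eventually \(0\).

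The main obstacle is checking that Theorem~\ref{thm:kim-systole} genuinely applies to the family as indexed. Two points need care. First, the ideals \(N\mathcal O_F\) may have prime factors in the exceptional set \(S\) (those dividing \(2D_f\); for example \(\sqrt2\) divides \(2\)), whereas Proposition~\ref{prop:composite-order-bound} excluded such primes. To handle this, I would use the displacement estimate \(\sys_1\ge 2\log N(I)-c\), which the appendix states holds for arbitrary \(I\) and needs no order bound; this already gives divergence. Second, I must confirm the pointwise injectivity-radius bound, which uses torsion-freeness of \(\Gamma_N\) for \(N\) large. This holds because \(\Gamma_N\subset\Gamma_{N_0}\), which was fixed to be torsion-free.
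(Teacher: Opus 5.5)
Your proposal is correct and follows the paper's argument: the systole bound of Theorem~\ref{thm:kim-systole}, converted into a uniform lower bound on the injectivity radius, forces the thin part \((M_N)_{<R}\) to be empty for all sufficiently large \(N\). Your variant of applying the displacement estimate \(\sys_1(M_I)\ge 2\log N(I)-c\) directly is slightly more careful than the paper's proof. It sidesteps the exclusion of primes in \(S\) in Proposition~\ref{prop:composite-order-bound}, and it needs only \(N(I)=N^2\to\infty\) rather than the paper's claim that \(\vol(M_N)\) increases monotonically with \(N\).
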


\begin{proof}
By Theorem~\ref{thm:kim-systole} and \(\inj(M_N)=\tfrac12\sys_1(M_N)\), there are constants \(c_1,c_2>0\) independent of \(N\) such that \(\inj(M_N)\ge c_1\log(\vol(M_N))-c_2\), and \(\vol(M_N)\) increases monotonically with \(N\). Hence, given \(R>0\), we have \((M_N)_{<R}=\emptyset\) for all sufficiently large \(N\).
\end{proof}

\begin{theorem}[Convergence of normalized Betti numbers under BS-convergence~{\cite[Thm.~7.13]{abertGrowth$L^2$invariantsSequences2017a}}]
\label{thm:abbgnrs}
Let \((M_n)\) be a sequence of closed hyperbolic \(D\)-manifolds that BS-converges to \(\mathbb H^{D}\). Then for every \(0\le k\le D\),
\begin{align}
 \lim_{n\to\infty}\frac{b_k(M_n;\mathbb Q)}{\vol(M_n)}=\beta^{(2)}_k(\mathbb H^{D}),
\end{align}
where \(\beta^{(2)}_k(\mathbb H^{D})=\chi(S^{D})/\vol(S^{D})\) if \(k=D/2\) and \(\beta^{(2)}_k(\mathbb H^{D})=0\) otherwise.
\end{theorem}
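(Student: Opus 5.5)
The plan is to prove the statement with heat kernels. We obtain an upper bound for every \(k\) by comparing the heat kernel of \(M_n\) with that of \(\mathbb H^D\) on the thick part of \(M_n\). We then obtain the matching lower bound at \(k=D/2\) from the Gauss--Bonnet--Chern theorem.

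\emph{Upper bound.} By Hodge theory, \(b_k(M_n;\mathbb Q)=\dim\ker\Delta_k\). Since \(e^{-t\Delta_k}\) is positive, for every \(t>0\)
\begin{align}
 b_k(M_n;\mathbb Q)\le \operatorname{Tr} e^{-t\Delta_k^{M_n}}=\int_{M_n}\operatorname{tr}K^{M_n}_t(x,x)\,dx .
\end{align}
Write \(M_n=\Gamma_n\backslash\mathbb H^D\) and lift \(x\) to \(\tilde x\in\mathbb H^D\). Then
\begin{align}
 K^{M_n}_t(x,x)=\sum_{\gamma\in\Gamma_n}K^{\mathbb H}_t(\tilde x,\gamma\tilde x).
\end{align}
The \(\gamma=1\) term equals the constant \(\kappa_k(t)\coloneqq \operatorname{tr}K^{\mathbb H}_t(\tilde x,\tilde x)\), which is independent of \(\tilde x\) by homogeneity.

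Now fix \(t\). On the thick part \(\{\inj\ge R\}\), all other orbit points lie at distance at least \(2R\) from \(\tilde x\). The Gaussian off-diagonal decay of \(K^{\mathbb H}_t\) beats the exponential growth \(e^{(D-1)r}\) of orbit counts, which is uniform because the curvature is \(-1\) and the lattice is discrete. Hence the remaining sum is at most some \(\varepsilon(t,R)\) with \(\varepsilon(t,R)\to0\) as \(R\to\infty\). On the thin part, \(\operatorname{tr}K^{M_n}_t(x,x)\) is bounded by a constant \(B(t)\) depending only on \(t\) and \(D\), by a standard bounded-geometry estimate. Dividing by \(\vol(M_n)\) and using BS-convergence to kill the thin part gives
\begin{align}
 \limsup_{n\to\infty}\frac{b_k(M_n)}{\vol(M_n)}\le \kappa_k(t)+\varepsilon(t,R)
\end{align}
for every \(R\). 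Therefore the \(\limsup\) is at most \(\kappa_k(t)\) for every \(t\).

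Next write \(\kappa_k(t)=\int e^{-t\lambda}\,d\mu_k(\lambda)\), where \(\mu_k\) is the \(\Gamma\)-equivariant (von Neumann) spectral measure of \(\Delta_k\) on \(\mathbb H^D\). By dominated convergence, \(\kappa_k(t)\to\mu_k(\{0\})=\beta^{(2)}_k(\mathbb H^D)\) as \(t\to\infty\). I would take the computation of \(\beta^{(2)}_k(\mathbb H^D)\) from Dodziuk's theorem as known. It says there is no \(L^2\)-harmonic \(k\)-form unless \(k=D/2\). In that case \(\beta^{(2)}_{D/2}=\chi(S^D)/\vol(S^D)\), up to the sign convention fixed by Gauss--Bonnet. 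This proves \(\lim b_k/\vol=0\) for \(k\neq D/2\) and gives the upper bound at \(k=D/2\).

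\emph{Lower bound at \(k=m=D/2\).} Gauss--Bonnet--Chern for hyperbolic metrics gives
\begin{align}
 \chi(M_n)=(-1)^m\frac{\chi(S^D)}{\vol(S^D)}\vol(M_n).
\end{align}
Since \(\chi(M_n)=\sum_j(-1)^jb_j(M_n)\), we get
\begin{align}
 b_m(M_n)\ge |\chi(M_n)|-\sum_{j\neq m}b_j(M_n).
\end{align}
By the previous step, \(\sum_{j\neq m}b_j(M_n)=o(\vol(M_n))\). Hence \(\liminf b_m/\vol\ge\chi(S^D)/\vol(S^D)\), which matches the upper bound.

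\emph{Main obstacle.} I expect the hard step to be the uniform control of the off-diagonal orbit sum and of the thin-part heat kernel. That is, showing that \(\varepsilon(t,R)\to0\) and \(B(t)<\infty\) with constants independent of \(n\). My first attempt would combine the Cheeger--Yau / Li--Yau Gaussian upper bound for the form heat kernel, which holds since the Weitzenböck curvature term is bounded, with the volume-growth bound on orbits that comes from a Margulis-type packing argument. A second, smaller subtlety is the limit \(\kappa_k(t)\to\mu_k(\{0\})\) when \(D\) is odd and \(k=(D\pm1)/2\). There the continuous spectrum reaches \(0\), but since \(\mu_k\) has no atom at \(0\), dominated convergence still applies.
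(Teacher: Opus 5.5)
The architecture of your argument is the standard route for this result. You bound \(b_k\) by the heat trace, compare with \(\mathbb H^D\) on the thick part to get \(\limsup_n b_k(M_n)/\vol(M_n)\le\kappa_k(t)\), let \(t\to\infty\) to reach \(\beta^{(2)}_k(\mathbb H^D)\), and settle the middle degree with Gauss--Bonnet and the Euler characteristic. (The paper itself gives no argument; it quotes the statement from Ab\'ert et al.) The thick-part estimate, the limit \(t\to\infty\), and the Euler-characteristic lower bound are all fine. On the thick part, note that your orbit count is uniform in \(n\) because the orbit of \(\tilde x\) is \(2R\)-separated there, not merely because the lattice is discrete.

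The genuine gap is the thin-part claim \(\operatorname{tr}K^{M_n}_t(x,x)\le B(t)\) with \(B(t)\) depending only on \(t\) and \(D\), and it is false. Suppose \(x\) lies on a closed geodesic of length \(\ell\), with corresponding deck transformation \(g\). Already for functions, \(K^{M_n}_t(x,x)\ge\sum_{j}K^{\mathbb H}_t(\tilde x,g^j\tilde x)\), and about \(2/\ell\) of these terms have \(d(\tilde x,g^j\tilde x)=|j|\ell\le 1\). Hence \(K^{M_n}_t(x,x)\gtrsim c(t)/\ell\), and nothing in BS-convergence prevents \(\ell\to0\) along the sequence.

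The tools you name cannot rescue this. Li--Yau-type bounds have the form \(C(t)/\vol B(x,\sqrt t)\), which blows up in collapsed Margulis tubes, and ``bounded geometry'' heat-kernel bounds presuppose a lower bound on the injectivity radius. Integrating does not absorb the defect either. For \(D=3\), take a tube around a geodesic of length \(\ell\) whose lattice \(\mathbb Z(\ell,\theta)+\mathbb Z(0,2\pi)\) of translation--rotation pairs is roughly square. It collapses onto a segment of length \(\approx\tfrac12\log(1/\ell)\) and contributes \(\asymp\log(1/\ell)\) to \(\operatorname{Tr}e^{-t\Delta_0}\), but only \(O(R^2)\) to \(\vol((M_n)_{<R})\). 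So BS-convergence alone does not make the thin part negligible in the heat trace.

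What is missing is a uniform lower bound on the injectivity radius. The theorem of Ab\'ert et al.\ that the paper cites is stated for uniformly discrete sequences, \(\inj(M_n)\ge\varepsilon>0\) for all \(n\); the paper's restatement drops this hypothesis. Under it, \(\operatorname{tr}K^{M_n}_t(x,x)\le B(t,\varepsilon)\) everywhere, and orbit counts are uniform at every point by packing balls of radius \(\varepsilon\). Your proof then goes through essentially as written.

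For the paper's purposes this costs nothing. The \(M_N\) cover the fixed \(M_0\), so \(\inj(M_N)\ge\inj(M_0)\). Moreover, Lemma~\ref{lem:BS} gives \((M_N)_{<R}=\emptyset\) for all large \(N\), so your thin-part term disappears entirely in the application.
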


\begin{lemma}[Universal coefficients]\label{lem:F2}
For any closed manifold \(M\), \(\dim_{\mathbb F_2}H_k(M;\mathbb F_2)\ge b_k(M;\mathbb Q)\).
\end{lemma}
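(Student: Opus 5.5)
The plan is to apply the universal coefficient theorem for homology, after first passing from \(M\) to a finite simplicial model. Since \(M\) is a closed manifold in our setting (indeed smoothly triangulable, as in Sec.~\ref{sec:hyperbolic-manifold}), its singular homology coincides with the simplicial homology of a finite triangulation. Hence \(H_k(M;\mathbb Z)\) is a finitely generated abelian group. By the structure theorem we can write
\begin{align}
    H_k(M;\mathbb Z)\cong \mathbb Z^{r_k}\oplus T_k,
\end{align}
where \(T_k\) is a finite torsion group. Here \(r_k=b_k(M;\mathbb Q)\), because tensoring with \(\mathbb Q\) is exact. Concretely, \(H_k(M;\mathbb Q)\cong H_k(M;\mathbb Z)\otimes\mathbb Q\), and this kills \(T_k\).

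Next, invoke the universal coefficient theorem with \(\mathbb F_2\) coefficients. It gives a short exact sequence
\begin{align}
    0\to H_k(M;\mathbb Z)\otimes\mathbb F_2\to H_k(M;\mathbb F_2)\to \mathrm{Tor}(H_{k-1}(M;\mathbb Z),\mathbb F_2)\to 0.
\end{align}
From this, \(\dim_{\mathbb F_2}H_k(M;\mathbb F_2)\ge\dim_{\mathbb F_2}\bigl(H_k(M;\mathbb Z)\otimes\mathbb F_2\bigr)\). The tensor product equals \(\mathbb F_2^{r_k}\oplus (T_k\otimes\mathbb F_2)\), so its dimension is at least \(r_k=b_k(M;\mathbb Q)\). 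This proves the claim.

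In fact the argument gives a precise count. The excess equals the number of cyclic summands of \(2\)-power order in \(T_k\), plus the same count for \(T_{k-1}\), which comes from the Tor term.

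There is no genuine obstacle here. The only point needing care is the consistency of homology theories. The paper defines \(H_k(\mathcal L;\mathbb F_2)\) cellularly, so we should note that cellular, simplicial and singular homology agree for finite regular cell complexes. This ensures that the \(\mathbb F_2\)-dimension in the statement is the same Betti number \(b_k(M_N)\) used in Theorem~\ref{thm:constant-rate}. Once this is in place, the lemma, combined with Theorem~\ref{thm:abbgnrs} and Lemma~\ref{lem:BS}, gives the lower bound on \(b_m(M_N;\mathbb F_2)\) with constant \(\chi(S^{2m})/\vol(S^{2m})=2/\vol(S^D)\).
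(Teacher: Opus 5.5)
Your proof is correct and follows the paper's argument. The universal coefficient theorem embeds $H_k(M;\mathbb Z)\otimes\mathbb F_2$ into $H_k(M;\mathbb F_2)$, and the former has dimension at least $\operatorname{rank}H_k(M;\mathbb Z)=b_k(M;\mathbb Q)$. You also state explicitly that $H_k(M;\mathbb Z)$ is finitely generated, which the paper uses only implicitly; this is genuinely needed, since for instance $\mathbb Q\otimes\mathbb F_2=0$ despite $\mathbb Q$ having rank one.
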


\begin{proof}
\(H_k(M;\mathbb F_2)\cong (H_k(M;\mathbb Z)\otimes\mathbb F_2)\oplus \operatorname{Tor}(H_{k-1}(M;\mathbb Z),\mathbb F_2)\), and the first summand already has dimension at least \(\operatorname{rank}H_k(M;\mathbb Z)=b_k(M;\mathbb Q)\).
\end{proof}

\begin{proof}[Proof of Theorem~\ref{thm:constant-rate}]
Let \(D=2m\). By Lemma~\ref{lem:BS} and Theorem~\ref{thm:abbgnrs} with \(k=m\), \(b_m(M_N;\mathbb Q)/\vol(M_N)\to \chi(S^{2m})/\vol(S^{2m})=2/\vol(S^{2m})\). Lemma~\ref{lem:F2} then gives, for every \(0<c<2/\vol(S^{2m})\) and all sufficiently large \(N\), \(\dim_{\mathbb F_2}H_m(M_N;\mathbb F_2)\ge c\,\vol(M_N)\).
\end{proof}

\end{document}